\definecolor{darkred}{rgb}{0.8,0.1,0.1}
\theoremstyle{plain}
\newtheorem{theo}{Theorem}[section]
\newtheorem{lem}[theo]{Lemma}
\newtheorem{propo}[theo]{Proposition}
\theoremstyle{definition}
\newtheorem{defi}[theo]{Definition}
\newtheorem{ex}[theo]{Example}
\newtheorem{rem}[theo]{Remark}
\numberwithin{equation}{section}
\newcommand{\supp}{\textrm{supp} }
\newcommand{\wick}[1]{:\!{#1}\!:}
\title{Quantum Field Theory on Curved Backgrounds - A Primer}
\author{%
Marco Benini$^{1,3,a}$, Claudio Dappiaggi$^{1,b}$ and Thomas-Paul Hack$^{2,c}$\vspace{4mm}\\
{\small $^1$ Dipartimento di Fisica}\\ 
{\small Universit{\`a} di Pavia \& INFN, sezione di Pavia - Via Bassi 6, I-27100 Pavia, Italy.}\vspace{2mm}\\
{\small $^2$ Dipartimento di Matematica}\\
{\small Universit\`a di Genova - Via Dodecaneso 35, I-16146 Genova, Italy.}\vspace{4mm}
\\
{\small $^3$ II. Institut f\"ur Theoretische Physik}\\
{\small Universit\"at Hamburg - Luruper Chausse 149, D-22761 Hamburg, Germany.}\vspace{4mm}
\\
 {\small  ~$^a$ marco.benini@pv.infn.it~,~$^b$ claudio.dappiaggi@unipv.it~,~$^c$ hack@dima.unige.it}
 }
\date{\today}
\begin{document}

\maketitle

\begin{abstract}
Goal of this review is to introduce the algebraic approach to quantum field theory on curved backgrounds. Based on a set of axioms, first written down by Haag and Kastler, this method consists of a two-step procedure. In the first one, it is assigned to a physical system a suitable algebra of observables, which is meant to encode all algebraic relations among observables, such as commutation relations. In the second step, one must select an algebraic state in order to recover the standard Hilbert space interpretation of a quantum system. As quantum field theories possess infinitely many degrees of freedom, many unitarily inequivalent Hilbert space representations exist and the power of such approach is the ability to treat them all in a coherent manner. We will discuss in detail the algebraic approach for free fields in order to give to the reader all necessary information to deal with the recent literature, which focuses on the applications to specific problems, mostly in cosmology. 

\paragraph*{Keywords:}quantum field theory on curved backgrounds, algebraic quantum field theory
\paragraph*{PACS numbers:} 04.62.+v, 11.10.Cd
\end{abstract}

\section{Introduction}	

The twentieth century will be forever remembered in theoretical and mathematical physics for the formulation of general relativity and of quantum field theory. The first one has revolutionized our understanding of the gravitational force and of the nature of space and time, superseding the Galilean notion of absolute space with that of a Lorentzian manifold. Thereon the role and the effect of gravity is encoded in the metric field whose intertwinement with matter is ruled by Einstein's equations via the stress-energy tensor. The second one has completely changed our view of the physical description of the basic constituents of matter, as we observe it. Quantum electrodynamics and the so-called standard model of particles have been experimentally verified to an outstanding degree of precision and allowed us to have an almost fully satisfactory and unified description of the electro-weak forces. It is striking that, to the inherent quantum nature of the description of elementary particles, it is opposed the structure of a classical theory, proper of general relativity. It is almost unanimously accepted that this quandary can be solved by developing a quantum version of Einstein's theory, which is often dubbed as quantum gravity. Yet, despite countless efforts, a quantum theory of the gravitational interaction remains elusive at best.

Nonetheless, the quest to finding a quantum theory of gravity often lead the community to neglect the existence of an intermediate regime which has acquired only recently a renewed relevance. As a matter fact, both quantum electrodynamics, the standard model of particles, as well as all the quantum theories thought in the standard undergraduate courses, are formulated under the assumption that the underlying background is Minkowski spacetime. This is $\mathbb{R}^4$ endowed with the metric tensor $\eta$, which in Cartesian coordinates reads as the matrix diag$(-1,1,1,1)$. Yet, if we take into account the lesson of general relativity, Minkowski spacetime is a rather special case of a Lorentzian manifold, which solves Einstein's equations only in absence of matter. At the same time, changing the structure of the background might look at first glance as an hazard. On the one hand, unless in very extreme conditions such as in a neighbourhood of a black hole, the gravitational effects are negligible and no direct evidence of the necessity to modify the spacetime metric has emerged in the experiments based at the high energy accelerators. On the other hand, the high degree of symmetry of the Minkowski metric, encoded in the Poincar\'e group, makes the formulation of any quantum field theory technically much easier then the counterpart on an arbitrary Lorentzian background. Basic tools such as Fourier transform have no meaning nor counterpart as soon as we consider a manifold endowed with a non trivial metric.

For these reasons, the development of the so-called {\em quantum field theory on curved backgrounds} looked often more as a mere intellectual curiosity rather than a stringent necessity. This perception has drastically changed in the last twenty years or so, particularly thanks to the advances in our understanding of the Universe as we observe it. Cosmology has become one of the leading branches of theoretical physics, mostly because it offers the possibility to interface the models built so far with a plethora of experimental data. Since the formulation of theories like inflation, it has become increasingly clear not only that the matter description via quantum fields has to be taken into account, but also that, due to the large scales involved (both spatially and temporally), the presence of a non trivial background must be accounted for. Although gravity is still weak enough to justify a classical description of such force, the effects of a non Minkowskian metric on the quantum fields are no longer negligible.

At this stage, one could follow two different philosophies: either look for an adaptation of the standard techniques to a few special cases of interest, such as cosmological spacetimes, or seek a formulation, based on suitable first principles, which can be applied to the largest possible class of backgrounds. In this review we shall adhere to the latter point of view and we shall present the so-called algebraic approach to quantum field theory on curved backgrounds. Based on a set of axioms, first written down by Haag and Kastler\cite{Haag:1963dh} for quantum theories on Minkowski spacetime and adapted to a curved background setting by Dimock\cite{Dimock}, the algebraic approach  characterizes the quantization of any field theory as a two-step procedure. In the first, one assigns to a physical system a suitable $*$-algebra $\mathcal{A}$ of observables which encodes the relevant physical properties of isotony, locality and covariance, as well as the algebraic relations among observables such as the commutation relations. In particular the last two entail that both the compatibility with background isometries and the property, that observables which are spacelike separated must commute, are automatically built in $\mathcal{A}$. The second step consists of selecting a so-called algebraic state $\omega$ that is a continuous linear and positive functional on $\mathcal{A}$, which, via the GNS theorem\cite{Fredenhagen}, allows us to recover the interpretation of the elements of $\mathcal{A}$ as linear operators on a suitable Hilbert space.

Goal of this paper will be chiefly to acquaint the reader with these two steps. We will focus particularly on free field theories and we will show that a full-fledged characterization of the space of solutions of the partial differential equations ruling their dynamics allows for an explicit construction of a $*$-algebra of observables which encodes all the desired structural properties. Subsequently we will introduce the concept of an algebraic state. At this stage one of the key peculiarities of quantum field theory on curved backgrounds will emerge. In a free quantum field theory on Minkowski spacetime, under the assumption to work at vanishing temperature, there exists a unique ground state which can be singled out thanks to the Poincar\'e group\cite{Haag:1992hx}. The relevance of the so-called Poincar\'e vacuum is enormous, being at the heart of the perturbative methods out of which interactions are treated. On the contrary, whenever the underlying manifold is no longer trivial, the lack of a sufficiently big group of isometries of the metric forbids the existence of a natural vacuum state. We will show, moreover, that most of the algebraic states are not even physically acceptable since they entail that the basic objects of perturbation theory, the Wick polynomials, cannot be well-defined. We shall identify, therefore, a distinguished subclass of states, called {\em Hadamard states}, which avoid these pathologies and are unanimously recognized as being the only physically significant ones.

The goal of the review is not to update the reader on all the achievements of the algebraic formulation of quantum field theory on curved backgrounds, but, rather to introduce her/him all preliminary tools which are necessary to read the literature discussing the latest developments. From this point of view, it is worth spending a few words on the topics beyond the scopes of this review, in which algebraic quantum field theory on curved backgrounds plays a key role. Probably one of the more recent frameworks, where it has been applied extensively, is cosmology. The main questions addressed have been the explicit construction of physically acceptable quantum states\cite{Dappiaggi:2007mx, Dappiaggi:2008dk} and the construction of solutions of the semiclassical Einstein's equations on homogeneous and isotropic backgrounds\cite{Dappiaggi:2008mm, Eltzner:2010nx, Hack:2010iw, Pinamonti:2010is}. The success of these early analyses prompted additional investigations ranging from the power spectrum of the cosmic microwave background\cite{Pinamonti:2013zba}, to spin $2$ fields\cite{Fewster:2012bj}, to the quantization of perturbations in inflationary models\cite{Eltzner:2013soa}, to particle production on cosmological spacetimes\cite{Degner:2009rq}, to non-standard equations of state, such as the Chaplygin gas\cite{Zschoche:2013ola} and to an understanding of the standard cosmological model from first principles \cite{DHP3}. It is noteworthy that the algebraic approach appears to be sufficiently versatile to be  applicable efficiently to the study of quantum fields on homogeneous, but anisotropic backgrounds\cite{Avetisyan:2012zn}.

As we mentioned above, the presence of a non trivial background is expected to play a major role particularly in the neighbourhood of regions with a strong gravitational field such as black holes. This is a framework in which, historically, algebraic quantum field theory has been applied very successfully in order to derive mathematically rigorous results concerning the structure of quantum field theories in these backgrounds\cite{Kay:1988mu}, Hawking radiation\cite{Fredenhagen:1989kr, Moretti:2010qd} and the construction of the Unruh state\cite{Dappiaggi:2009fx}.

Additionally, the algebraic approach to quantum field theory on curved backgrounds has been used to investigate several structural and formal properties ranging from quantum energy inequalities\cite{Fewster:2001js, Fewster:2003ey, Fewster:2006kt, Fewster:2007rh, Fewster:2012yh}, to the notion of thermal equilibrium states\cite{Schlemmer:2008dk}, to the spin-statistics\cite{Verch:2001bv} and Reeh-Schlieder theorem\cite{Dappiaggi:2011ms, Sanders:2008gs}, to the principle of general local covariance\cite{Brunetti:2001dx} and of dynamical locality\cite{Fewster:2011pe, Fewster:2011pn}, to superselection sectors\cite{Brunetti:2005gw, Brunetti:2008hq} up to the role of renormalization\cite{Brunetti:1999jn, Brunetti:2009qc}. It is noteworthy that recently, in Ref. \citen{Hollands:2008vx}, a new axiomatic approach to quantum field theory on curved backgrounds has been proposed and it is currently investigated. Additionally the recent developments in perturbative algebraic quantum field theory on curved backgrounds inspired an interesting novel structural approach to classical field theory\cite{Brunetti:2012ar}.

Before outlining the content of the various sections of this paper, we stress that this is certainly not the first review on quantum field theory on curved backgrounds. Other relevant sources, whose presentation is often complementary to ours are Refs. \citen{Bar:2009zzb, BG2, Brunetti:2009pn, Fewster, Fredenhagen}, as well as the book written by Wald\cite{Wald2}.

This review is organized as follows: In Section 2 we discuss the key geometrical concepts which lie at the heart of the construction of quantum field theory on curved backgrounds, particularly the notion of globally hyperbolic spacetimes. In Section 3, we focus instead on the classical description of linear free field theories, whose dynamics is ruled by hyperbolic partial differential equations on vector bundles. In particular we will be interested in those equations which admit advanced and retarded Green operators. Three prototype examples are developed in detail, namely the scalar, the Majorana and the Proca fields. Green operators turn out to be relevant for implementing the dynamics at the level of the quantum theory. This is is discussed at the beginning of Section 4 adopting the algebraic framework. In particular we show how it is possible to define an algebra of fields for a Bosonic or Fermionic theory starting from the classical data. To complete the discussion, we introduce in Section 4 the notion of an algebraic state on the resulting quantum field theory and we prove the GNS theorem which allows us to interpret the said algebra in terms of linear operators on a suitable Hilbert space. Hence we recover the standard interpretation of a quantum theory. Eventually we introduce some basic tools proper of microlocal analysis in order to characterize physically sensible states by means of the so-called Hadamard condition. A few examples known in the literature are sketched.

\section{Globally hyperbolic spacetimes}

Goal of this section is to make the reader acquainted with the main geometric tools, one needs to formulate and to study all quantum field theories on curved backgrounds. In order to keep under control the length of this review, we assume that the reader is familiar with the basic notions of differential geometry. We will follow the notations and conventions of Refs. \citen{BGP,Wald}. Hence we shall start from the building block on which the whole theory of quantum fields on curved backgrounds is based:

\begin{defi}
A {\it spacetime} $M$ is a four-dimensional, orientable, differentiable, Hausdorff, second countable manifold endowed with a smooth Lorentzian metric $g$ of signature $(-,+,+,+)$. 
\end{defi}

We stress that the requirement on the dimensionality of $M$ is only based on our desire to describe the natural generalizations to curved backgrounds of the field theories, on which the current models of particle physics are based. Most of the concepts we will review in this section can be slavishly adapted to any dimension. The Lorentzian character of the metric $g$ plays an important distinguishing role for the pair $(M,g)$ since we can define two additional concepts: {\it time orientability} and {\it causal structure}. As a matter of fact, these are generalizations of the very same concepts, one can formulate within the theory of special relativity. In the geometric language we use, the latter can be recovered by fixing $M$ as $\mathbb{R}^4$ and $g$ as the Minkowski metric $\eta$. If we consider now any two events in $(\mathbb{R}^4,\eta)$, marked as two distinct points $p,q\in\mathbb{R}^4$, there exists a geometric method to say whether $p$ lies in the future of $q$ (or viceversa) and whether $p$ and $q$ are causally connected. To be precise, if we fix the standard flat global coordinates $(t,x,y,z)$, we can construct two distinguished spacetime regions $J_{\mathbb{R}^4}^+(p)$ and $J_{\mathbb{R}^4}^-(p)$ as the subsets of $\mathbb{R}^4$, constituted by those points $q\in M$ such that the vector $v_{pq}$ connecting $p$ to $q$ has strictly positive (resp. strictly negative) time-component and length $\eta(v_{pq},v_{pq})\leq 0$. In particular if the equality holds, we say that $v_{pq}$ is {\it lightlike}, otherwise we call it a {\it timelike} vector. In other words, with respect to $p$, we have divided $\mathbb{R}^4$ in three regions, namely, besides $J_{\mathbb{R}^4}^+(p)$ and $J_{\mathbb{R}^4}^-(p)$, there exists also the collection of points $p$ for which $v_{pq}$ is {\it spacelike}, or, equivalently, for which $p$ and $q$ are causally separated, that is $\eta(v_{pq},v_{pq})>0$. Notice that the point $p$ itself plays a distinguished and separate role and it is here conventionally assumed to include it in both $J_{\mathbb{R}^4}^+(p)$ and $J_{\mathbb{R}^4}^-(p)$. The regions $J_{\mathbb{R}^4}^+(p)$ and $J_{\mathbb{R}^4}^-(p)$ are called the {\it future (resp. past) light cone}, stemming from $p$.

On a generic background $M$, the above division cannot be applied slavishly, since, in general, there exists neither a notion of future and past nor that of a vector joining two distinct events. Yet, it is possible to circumvent this obstruction by recalling that $T_pM$, the tangent space at any point $p\in M$, is isomorphic to $\mathbb{R}^4$ as a vector space.  Hence, it is possible to attach to each $v_p\in T_pM$ the etiquette of timelike, spacelike or lightlike vector depending whether its length $g(v_p,v_p)$ is smaller, greater or equal to $0$ respectively. In this way, as in Minkowski spacetime, we divide $T_pM$ into two regions, the set of spacelike vectors, and the two-folded light cone stemming from $0$, the origin of the vector space $T_pM$. For each point $p\in M$ we have the freedom to designate each of the folds of the light cone of $T_pM$ as the set of future-directed and of past-directed vectors respectively. If we can smoothly specify at each point which one of the two cones is the future one, we say that $(M,g)$ is {\it time orientable}\cite{Wald}. This is equivalent to the existence of a global vector field on $M$ which is timelike everywhere. Henceforth we will only consider pairs $(M,g)$ enjoying this property and, moreover, we shall assume that a time orientation has been fixed. This allows to introduce $J_M^\pm$ even when $(M,g)$ is not isometric to $(\mathbb{R}^4,\eta)$. To be precise, in the first place one has to define {\it timelike, lightlike and spacelike curves}: A piecewise smooth curve $\gamma:I\to M$, $I\in[0,1]$, is timelike (lightlike or spacelike) if, for every $t\in I$, the vector tangent to the curve at $\gamma(t)$ is timelike (respectively lightlike or spacelike). A curve is called {\it causal} if it is nowhere spacelike. For causal curves, one can also specify the direction according to the time orientation of $(M,g)$: A causal curve $\gamma:I\to M$ is {\it future- (past-) directed} if for each $t\in I$ each vector tangent to $\gamma$ at $\gamma(t)$ lies in the future (respectively past) fold of $T_{\gamma(t)}M$. Given these preliminaries, for any $p\in M$ we call {\it causal future} of $p$ the set $J_M^+(p)$ of points $q\in M$ which can be reached by a future-directed causal curve stemming from $p$. Replacing future-directed causal curves with past-directed ones, we define also the {\it causal past} of $p$, $J_M^-(p)$. Notice that, if we allow only timelike (in place of causal) curves, we replace $J_M^\pm(p)$ with $I_M^\pm(p)$, namely the {\it chronological future (+) and past (-)} of $p$. As in Minkowski spacetime, we assume as a convention that $p$ lies in both $J_M^+(p)$ and $J_M^-(p)$ but neither in $I_M^+(p)$ nor in $I_M^-(p)$. Moreover to any subset $\Omega\subset M$ we can associate $J_M^\pm(\Omega)\doteq\bigcup\{J_M^\pm (p):p\in\Omega\}$ and $I_M^\pm(\Omega)\doteq\bigcup\{I_M^\pm (p):p\in\Omega\}$.

At this stage, the collection of all time orientable spacetimes $(M,g)$ is far too big for our purposes. On the one hand, our goal is to construct quantum field theories on curved backgrounds; hence, as a starting point, we have to make sure that their dynamics can be meaningfully discussed in terms of an initial value problem. On the other hand the causal structure defined on $(M,g)$ may lead in some cases to scenarios which are pathological from a physical point of view, for example due to the appearance of future- (or past-) directed closed causal curves. The prime example of our concern is the often mentioned G\"odel spacetime, for which $M=\mathbb{R}^4$ whereas the line element in the standard coordinates $(t,x,y,z)$ reads\cite{Pfarr}:
$$ds^2=-\left(dt+e^{2ky}dx\right)^2+dy^2+\frac{e^{4ky}}{2}dx^2+dz^2,$$
where $k\in\mathbb{R}$ is a constant. If we introduce the following coordinate transformations:
\begin{gather*}
e^{2ky}=\cosh(2kr)+\sinh(2kr)\cos\varphi,\quad\sqrt{2}kxe^{2ky}=\sinh(2kr)\sin\varphi,\\
\frac{kt}{\sqrt 2}=\frac{k t^\prime}{\sqrt 2}-\frac{\varphi}{2}+\arctan\left(e^{-2kr}\tan\frac{\varphi}{2}\right),
\end{gather*}
where $|k(t-t^\prime)|<\frac{\pi}{\sqrt 2}$, $r\in [0,\infty)$ and $\varphi\in [0,2\pi)$, then 
$$ds^2=-dt^{\prime 2}+dr^2+dz^2-\frac{\sqrt{8}}{k}\sinh^2(kr)d\varphi dt^\prime+\frac{1}{k^2}\left(\sinh^2(kr)-\sinh^4(kr)\right)d\varphi^2.$$
One can directly check that any curve, for which both $t^\prime$ and $z$ are arbitrarily fixed, whereas $r$ is set to be equal to a fixed value larger or equal to $r_0=(1/k)\mathrm{ln}(1+\sqrt{2})$, is closed and causal (lightlike for $r=r_0$, timelike otherwise).

The constraints on $M$ and $g$, implying that these pathological situations are avoided, have been studied in great detail, particularly in the sixties and in the seventies. The solution relies on the notion of a {\it causally convex open set} $\Omega$ of $M$, that is, for a given spacetime $M$, $\Omega\subseteq M$ is open and no future- (or equivalently past-) directed causal curve has a disconnected intersection with $U$. Consequently, $M$ is called a {\it strongly causal spacetime}, provided the existence of arbitrarily small causally convex neighbourhoods of each point. More precisely, the condition is the following: for each $p\in M$ and for each neighbourhood $U\subseteq M$ of $p$, there exists a causally convex neighbourhood $V\subseteq M$ of $p$ such that $V\subseteq U$.  Such a requirement prevents $M$ from admitting future-directed closed causal curves since any such curve would be included in any neighbourhood of any point in its image. The notion of strong causality entails several additional interesting mathematical properties, we will not discuss in details. An interested reader should refer to Ref. \citen{BEE} for a thorough analysis.

From a physical point of view, we are interested in those spacetimes which allow to set a well-posed initial value problem for hyperbolic partial differential equations, such as the scalar D'Alambert wave equation, to quote the simplest, yet most important example. In particular we need to ensure that the spacetime we consider possesses at least one distinguished codimension $1$ hypersurface on which we can assign the initial data needed to construct a solution of such an equation. Hence a concept slightly stronger than that of strong causality is required and this goes under the name of {\it global hyperbolicity}, see Ref. \citen{Wald}, Section 8. 

We begin by introducing an {\it achronal} subset of a spacetime $M$, namely a subset $\Sigma$ such that $I_M^+(\Sigma)\cap\Sigma=\emptyset$. Subsequently we associate to $\Sigma$ its future {\it domain of dependence} as the collection $D_M^+(\Sigma)$ of points $p\in M$ such that every past-inextensible\footnote{A curve is called {\it inextensible} when every extension of the curve coincides with it. {\it Past-inextensibility} is a slightly weaker condition which constrains a causal curve to coincide with each extension by a causal curve in the past.} causal curve passing through $p$ intersects $\Sigma$. Equivalently one defines $D^-(\Sigma)$ as the past domain of dependence. {\it Cauchy hypersurfaces} are defined as closed achronal subsets $\Sigma$ of a spacetime $M$ such that their domain of dependence $D_M(\Sigma)=D_M^+(\Sigma)\cup D_M^-(\Sigma)$ coincides with $M$. It is noteworthy that the term ``hypersurface'' is not used by chance, since one can prove that $\Sigma$ is a three-dimensional, embedded, $C^0$ submanifold of $M$, {\it cf.} Theorem 8.3.1 in Ref. \citen{Wald}. The relevance of Cauchy hypersurfaces is related to the definition of globally hyperbolic spacetimes.

\begin{defi}\label{globhyp}
A spacetime $M$ is called {\it globally hyperbolic} if and only if there exists a Cauchy hypersurface.
\end{defi}

Globally hyperbolic spacetimes represent thus the canonical class of backgrounds on which quantum field theories are constructed, the due exception being asymptotically AdS spacetimes, which we will not discuss here in details. A reader interested in this topic from the point of view of algebraic quantum field theory can refer to Ref. \citen{Ribeiro:2007hv}. For our goals it is important to stress that a globally hyperbolic spacetime is always strongly causal, hence no pathologies, such as closed causal curves can occur, {\it cf.} Lemma 8.3.8 in \citen{Wald}. Furthermore $\Sigma$ is the natural candidate to play the role of the hypersurface on which to assign the initial data for a partial differential equation ruling the dynamics of the quantum fields, we are interested in. Yet, according to Definition \ref{globhyp}, only the existence of a single Cauchy hypersurface is guaranteed. This is slightly disturbing since there is no reason a priori why an initial value hypersurface for a certain partial differential equation should be distinguished. This quandary has been overcome proving that, if a spacetime $M$, with a smooth metric $g$, is globally hyperbolic, then it is homeomorphic to $\mathbb{R}\times\Sigma$, where $\Sigma$ is a codimension $1$ topological submanifold of $M$ such that, for all $t\in\mathbb{R}$, the locus $\{t\}\times\Sigma$ is a Cauchy hypersurface, {\it cf.} Theorem 3.17 in Ref. \citen{BEE}.

Nonetheless, at this stage, there are at least two potential problems: The first concerns the degree of regularity of $\Sigma$, since we would like to assign smooth initial data, which is not possible if $\Sigma$ is only a continuous hypersurface.  The second is related to the practical use of Definition \ref{globhyp}. It does not suggest any concrete criterion to establish whether a certain spacetime $M$ with an assigned metric $g$ is globally hyperbolic or not. An alternative, yet equivalent, definition of global hyperbolicity requires that $M$ is strongly causal and, for all $p,q\in M$, the set $J_M^+(p)\cap J_M^-(q)$ is either empty or compact, {\it cf.} Definition 3.15 in Ref. \citen{BEE}. Unfortunately, as for Definition \ref{globhyp}, it is rather complicated to directly check these two conditions for a given pair $(M,g)$. 

While the second issue has been recognized as such, the first one was often neglected or, on the basis of flawed proofs, it has been assumed that $\Sigma$ can be made smooth. Only a few years ago, a breakthrough appeared in Refs. 
\citen{Bernal, Bernal:2005qf}. By using mainly deformation arguments, Bernal and Sanchez managed to provide an additional characterization of globally hyperbolic spacetimes. We shall report it following the formulation given in Section 1.3 of Ref. \citen{BGP}:

\begin{theo}\label{BS}
Let $(M,g)$ be any time-oriented spacetime. The following two statements are equivalent:
\begin{enumerate}[(i)]
\item $(M,g)$ is globally hyperbolic;
\item $(M,g)$ is isometric to $\mathbb{R}\times\Sigma$ with line element $ds^2=-\beta dt^2+h_t$, where $t$ runs over the whole $\mathbb{R}$, $\beta\in C^\infty(M)$ is strictly positive, whereas $h_t$ is a smooth Riemannian metric on $\Sigma$ depending smoothly on $t$. Furthermore each $\{t\}\times\Sigma$ is a smooth spacelike Cauchy hypersurface in $M$. 
\end{enumerate}
\end{theo}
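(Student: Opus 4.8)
The plan is to treat the two implications separately, with essentially all of the work concentrated in $(i) \Rightarrow (ii)$.

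The implication $(ii) \Rightarrow (i)$ is immediate: in the product presentation each slice $\{t\}\times\Sigma$ is by hypothesis a Cauchy hypersurface, so $M$ possesses at least one Cauchy hypersurface and is globally hyperbolic by Definition \ref{globhyp}. No further argument is needed here.

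For $(i) \Rightarrow (ii)$ my strategy is to produce a single smooth function $\tau: M \to \mathbb{R}$ — a \emph{Cauchy temporal function} — enjoying: (a) its gradient $\nabla\tau$ is everywhere timelike (and past-directed, as one sees already in $(\mathbb{R}^4,\eta)$ where $\nabla t = -\partial_t$); (b) every level set $\Sigma_t=\tau^{-1}(t)$ is a smooth spacelike Cauchy hypersurface; and (c) $\tau$ ranges over all of $\mathbb{R}$ along each inextensible causal curve, a property that in fact follows from (b) together with strict monotonicity of $\tau$ along causal curves. Granting such a $\tau$, I would build the splitting by a flow argument. Introduce the vector field $X=\nabla\tau/g(\nabla\tau,\nabla\tau)$, normalized so that $X(\tau)=d\tau(X)=1$; it is timelike and future-directed since it is proportional to the past-directed timelike gradient with negative normalization. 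Its maximal integral curves are inextensible timelike curves, hence in a globally hyperbolic spacetime they meet each Cauchy slice $\Sigma_t$; because $\tau$ advances at unit rate along them, the flow parameter is forced to range over all of $\mathbb{R}$, so the flow $\Phi$ is complete and each flow line crosses every level set exactly once. The map $\mathbb{R}\times\Sigma_0\to M$, $(t,\sigma)\mapsto\Phi_t(\sigma)$ with $\Sigma_0=\tau^{-1}(0)$, is then a diffeomorphism under which $\tau$ becomes the coordinate $t$ and $\partial_t=X$. Since $g(\nabla\tau,v)=d\tau(v)=0$ for any $v$ tangent to a level set, $\partial_t$ is $g$-orthogonal to the slices, all cross terms $dt\,dx^i$ vanish, and the metric takes the form $ds^2=g(X,X)\,dt^2+h_t$. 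A direct computation gives $g(X,X)=1/g(\nabla\tau,\nabla\tau)<0$, so putting $\beta=-g(X,X)>0$ and letting $h_t$ be the induced Riemannian metric on $\Sigma_t$ yields exactly the claimed line element, with smooth $t$-dependence inherited from $\tau$ and from the flow.

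The genuine difficulty, and the actual content of the Bernal--Sanchez theorem, is the existence of the Cauchy temporal function $\tau$. Geroch's classical construction, built from a smooth measure of finite total volume and setting schematically $\tau(p)=\ln[\mathrm{vol}(J_M^-(p))/\mathrm{vol}(J_M^+(p))]$, delivers only a \emph{continuous} Cauchy time function; its level sets are achronal topological Cauchy hypersurfaces but in general neither smooth nor spacelike — precisely the regularity defect flagged before the statement. The heart of the proof is therefore analytic: one must smooth this time function while preserving strict growth along future-directed causal curves, and at the same time force the gradient to be everywhere timelike, which is what renders the level sets spacelike. I would carry this out by the deformation and partition-of-unity techniques of Bernal and Sanchez: construct local temporal functions with timelike gradient on small causally convex neighbourhoods (available by strong causality), patch them via a locally finite subordinate cover, and control the sum so that the global function retains a timelike gradient. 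The most delicate point, where global hyperbolicity is used at full strength, is verifying that the patched function is still Cauchy, i.e. that its level sets are crossed by every inextensible causal curve; this is the step I expect to be the main obstacle. Once it is settled, the flow argument and the metric computation above are comparatively routine.
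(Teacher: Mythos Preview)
Your outline is essentially correct as a sketch of the Bernal--Sanchez argument: the implication $(ii)\Rightarrow(i)$ is trivial, the flow of the normalized gradient of a Cauchy temporal function produces the orthogonal splitting, and the analytic core is upgrading Geroch's continuous time function to a smooth one with everywhere timelike gradient while keeping the level sets Cauchy. You have also correctly located the genuinely hard step.

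That said, the paper does \emph{not} give a proof of this theorem at all. It is stated as a result imported from the literature, with a reference to Bernal and Sanchez (Refs.~\citen{Bernal,Bernal:2005qf}) for the original proof ``using mainly deformation arguments'' and to Section~1.3 of Ref.~\citen{BGP} for the formulation quoted. So there is nothing in the paper to compare your approach against beyond the one-line attribution; your sketch is consistent with that attribution, but a full verification of the smoothing step---which you rightly flag as the main obstacle---would require going back to the Bernal--Sanchez papers themselves rather than to this review.
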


Notice that, thanks to this theorem, not only it is clear that it is always possible to choose smooth spacelike Cauchy hypersurfaces, but, we have now at hand a criterion which makes simpler to verify whether a certain $(M,g)$ is globally hyperbolic. For completeness we provide a few examples of globally hyperbolic spacetimes, in order to convince the reader, who is not familiar with this concept, that such class of manifolds contains most of, if not all, the physically interesting examples. 

\begin{ex}
The following spacetimes are globally hyperbolic:
\begin{itemize}
\item All Friedmann-Robertson-Walker spacetimes. These are homogeneous and isotropic solutions of Einstein's equations whose line element is $ds^2=-dt^2+a^2(t)\left[\frac{dr^2}{1-kr^2}+r^2\left(d\theta^2+\sin^2\theta d\varphi^2\right)\right]$. Here $k$ is a constant which can be normalized to $0$, $1$ or $-1$ and, accordingly, the background is $\mathbb{R}\times\Sigma$, where $\Sigma$ is locally homeomorphic to $\mathbb{R}^3$, $\mathbb{S}^3$, the $3$-sphere or $\mathbb{H}^3$, the three-dimensional hyperboloid. The function $a(t)$, known as scale factor, is a smooth and positive scalar function defined on an open interval $I\subseteq\mathbb{R}$. Notice that if $I=(a,b)\subset\mathbb{R}$, one can always redefine $t$ as a new variable $t'$ whose domain of definition is the whole real axis. For example, if $-\infty<a<b<\infty$, then choose $t'=\ln(t-a)-\ln(t-b)$ and the line element has still the form given in Theorem \ref{BS}. Notice that Minkowski spacetime is a special case of this class and, thus, it is, as expected, globally hyperbolic.
\item Spherically symmetric solutions of the vacuum Einstein's equations form a one-parameter family of manifolds, all topologically equivalent to $\mathbb{R}\times I\times\mathbb{S}^2$, where $I=(2M,\infty)$. $M\geq 0$ is the above mentioned parameter, which can be interpreted from a physical point of view as the mass of a body, source of the gravitational field. The line element is $ds^2=-\left(1-\frac{2M}{r}\right)dt^2+\frac{dr^2}{1-\frac{2M}{r}}+r^2\left(d\theta^2+\sin^2\theta d\varphi^2\right)$. Here $(\theta,\varphi)$ are the standard coordinates associated to the $2$-sphere, $t$ plays the role of the time variable and runs over the whole real axis, whereas $r\in (2M,\infty)$. On account of Theorem \ref{BS}, one can prove that, regardless of the value of $M$, Schwarzschild spacetime is static and globally hyperbolic. The limit case $M=0$ coincides with Minkowski spacetime. A similar statement holds true for the static regions both of a Reissner-Nordstr\"om and of a Kerr black hole.
\item The maximally symmetric solution of Einstein's equations with a positive cosmological constant $\Lambda$ goes under the name of de Sitter spacetime $dS_4$. It can be constructed as the set of points $x_\mu\in\mathbb{R}^5$, $\mu=0,...,4$ such that $-x^2_0+\sum_{i=1}^4x^2_i=R^2$ where $R^2=\frac{3}{\Lambda}$. One of the possible clever choices of coordinates\cite{Moschella} shows that $dS_4$ is isometric to $\mathbb{R}\times\mathbb{S}^3$ with line element $ds^2=-dt^2+R^2\cosh\left(\frac{t}{R}\right)\left(d\chi^2+\sin^2\chi\left(d\theta^2+\sin^2\theta d\varphi^2\right)\right)$. Here $t\in\mathbb{R}$ is the time variable whereas $(\chi,\theta,\varphi)$ are the standard coordinates on $\mathbb{S}^3$. One can realize per direct inspection that we have shown that de Sitter spacetime is isometric to a Friedmann-Robertson-Walker spacetime with closed spatial sections. Hence it is globally hyperbolic.
\end{itemize}
\end{ex}

To conclude this section, we introduce some terms which will be often used in the following in order to specify the support properties of the most relevant operators in the study of hyperbolic equations on globally hyperbolic spacetimes.

\begin{defi}
Let $M$ be a globally hyperbolic spacetime and consider a region $\Omega\subseteq M$. We say that $\Omega$ is:
\begin{itemize}
\item {\it spacelike-compact} if there exists a compact subset $K\subseteq M$ such that $\Omega\subseteq J_M(K)$;
\item {\it future- (past-) compact} if its intersection with the causal past (future) of any point is compact, namely if $\Omega\cap J_M^+(p)$ ($\Omega\cap J_M^-(p)$) is compact for each $p\in M$;
\item {\it timelike-compact} if it is both future- and past-compact.
\end{itemize}
\end{defi}

\section{Hyperbolic operators and classical fields}

As anticipated, we are interested in quantum field theories whose underlying dynamics can be described in terms of an initial value problem. The restriction to the class of globally hyperbolic spacetimes ensures the existence both of a family of hypersurfaces on which initial data can be assigned and of a preferred direction of evolution. Yet not all partial differential equations (PDEs) admit an initial value problem which guarantees the existence or the uniqueness of the solution, once suitably regular initial data are assigned. Nor we expect that all possible PDEs can be associated to a physically reasonable system. As we shall see, the structure of the PDE governing the dynamics of a field has to guarantee compatibility with the causal structure of the underlying spacetime and hence we have to make sure that no pathology can incur. The goal of this section is thus to introduce the class of PDEs which are at the core of any quantum field theory on a Lorentzian curved background. This is a topic which has been thoroughly discussed in the literature. We will only sketch the concepts and results we will use in the next sections. Yet we strongly advise a reader interested in further details to refer to Refs. \citen{BGP,Waldmann,Friedlander}. We will follow mostly Ref. \citen{BGP}.

As a starting point and for the sake of completeness, we recall the definitions of a vector bundle and of its sections. The latter play a distinguished role since sections represent the natural mathematical object to associate to the physical idea of a classical field. For a detailed discussion of these topics we refer the reader to the literature, {\it e.g.}~Ref. \citen{Husemoller}, 
Chapter 3, Ref.~\citen{Isham}, Chapter 5 and Ref.~\citen{Jost}, Chapter 1.

\begin{defi}
A {\it vector bundle} of rank $n$ consists of a quadruple $(E,\pi,M,V)$:
\begin{itemize}
\item The {\it base} $M$ is a $d$-dimensional smooth manifold;
\item The {\it typical fiber} $V$ is a $n$-dimensional vector space;
\item The {\it total space} $E$ is a $(d+n)$-dimensional smooth manifold;
\item The {\it projection} $\pi:E\to M$ is a smooth surjective map.
\end{itemize}
These objects must fulfil the following conditions in order to define a vector bundle:
\begin{enumerate}[(i)]
\item Each fiber $E_p=\pi^{-1}(p),p\in M$ is a vector space isomorphic to $V$;
\item For each point $p\in M$ there exists a neighbourhood $U$ of $p$ 
and a diffeomorphism $\Psi:\pi^{-1}(U)\to U\times V$ such that 
$\mathrm{pr}_1\circ\Psi=\pi:\pi^{-1}(U)\to U$, 
$\mathrm{pr}_1:U\times V\to U$ being the projection on the first factor 
of the Cartesian product $U\times V$;
\item $\Psi$ acts as a vector space isomorphism on each fiber, namely 
the map $E_p\to\{p\}\times V, e\mapsto\Psi(e)$ is an isomorphism 
between vector spaces.
\end{enumerate}
Any pair $(U,\Psi)$ fulfilling conditions (ii) and (iii) is called a {\it local trivialization} of 
the vector bundle. Any collection of local trivializations covering the 
base manifold $M$ is called a {\it vector bundle atlas}.
\end{defi}

It is customary to refer to a vector bundle specifying only its total space. 
In the following we adopt this convention whenever this does not lead to 
misunderstandings. Moreover, notice that, although $M$ is an arbitrary smooth manifold, in all applications to quantum field theory we shall choose only those $M$ which are globally hyperbolic spacetimes.

The simplest example of vector bundle and the one implicitly considered in standard quantum field theory on Minkowski spacetime is the Cartesian product $M\times V$ between a manifold $M$ and a vector space $V$. In this case the quadruple 
specifying the vector bundle is $(M\times V,\mathrm{pr}_1,M,V)$ and we have a 
local trivialization $(M,\mathrm{id}_{M\times V})$ provided by the whole manifold 
and by the identity map on the Cartesian product. Since the trivialization is defined on 
the entire base manifold, it is called {\it global} and we say that the 
vector bundle is {\it globally trivial}.

Notice that one can use trivial vector bundles to give a different characterization of 
vector valued maps on a manifold. Suppose we are given a function $f:M\to V$. 
Taking into account the trivial vector bundle $M\times V$, we can define a new function 
$\widetilde{f}:M\to M\times V, p\mapsto(p,f(p))$. We have thus established a bijection $f\mapsto\widetilde{f}$ between vector valued functions and a special class of vector bundle valued functions fulfilling the condition $\mathrm{pr}_1\circ\widetilde{f}=\mathrm{id}_M$.

Another example of vector bundles, widely used in physics is the tangent space $TM$ to a manifold $M$. In this context, $TM$ is in general not trivial. A typical textbook example is the 2-sphere $\mathbb{S}^2$. Yet, it is noteworthy that, since we shall be considering only (four-dimensional) globally hyperbolic spacetimes $M$, these manifolds do have a trivial tangent space, or, in more technical words, they are all parallelizable\cite{Geroch}. Hence $TM$, as a vector bundle, is isomorphic to $M\times\mathbb{R}^4$.

For a non-trivial vector bundle $(E,\pi,M,V)$, we cannot establish a correspondence between 
$V$-valued functions on $M$ and vector bundle valued functions 
on $M$ such that the value of the function at a point $p\in M$ is an element 
in the fiber $E_p$ over $p$.
This fact justifies the introduction of the notion of a section of a vector bundle 
as a generalization of the notion of a vector-valued function on a manifold. This just 
consists in replacing the target space with a vector bundle (instead of a vector space) 
and requiring that the value at a point lies in the fiber over that point. Formally we 
have the following definition.

\begin{defi}\label{defSection}
Let $(E,\pi,M,V)$ be a vector bundle of rank $n$. A section of $(E,\pi,M,V)$ is a 
smooth function $\sigma:M\to E$ such that $\pi\circ\sigma=\mathrm{id}_M$.
We denote the vector space of sections of $(E,\pi,M,V)$ with the symbol 
$\Gamma(M,E)$, while the vector space of compactly supported sections is denoted by 
$\Gamma_0(M,E)$.
\end{defi}

For later purposes, we will focus the attention on sequential continuity of certain maps acting 
on smooth sections of a vector bundle. To this avail, we need to define the notions of 
convergence which are induced on these spaces by their canonical topologies. Notice that 
these are those which are relevant for the theory of distributions.

\begin{defi}\label{defConv}
Let $(E,\pi,M,V)$ be a vector bundle and consider a covariant derivative $\nabla$ acting on 
sections of $E$. We say that a sequence $\{\sigma_n\}\subseteq\Gamma(M,E)$ of smooth 
sections is {\it $\mathcal{E}$-convergent} to $\sigma\in\Gamma(M,E)$ if $\sigma_n$ 
converges uniformly with all its covariant derivatives of arbitrary order to $\sigma$ over any 
compact subset of $M$.

For a sequence $\{\tau_n\}\subseteq\Gamma_0(M,E)$ of smooth 
sections with compact support we say that it is {\it $\mathcal{D}$-convergent} to $\tau\in\Gamma_0(M,E)$ 
if there exists a compact subset $K$ of $M$ including the support of $\tau$ and $\tau_n$ for 
all $n$ and if $\tau_n$ converges with all its covariant derivatives of arbitrary order to 
$\sigma$ uniformly over $K$ (hence everywhere).
\end{defi}

Note that here the choice of a covariant derivative and the choice of a norm 
in the fibers (which is even not explicitly mentioned) are completely irrelevant 
since any choice leads to an equivalent notion of 
convergence (an equivalent topology, in fact). For more details on these aspects see 
Ref.~\citen{Friedlander}, Ref.~\citen{Hormander1}, Chapter 2, and also Ref.~\citen{BGP}, 
Section 1.1 for a short introduction.

One can easily extend algebraic operations defined on vector spaces to vector bundles, just 
performing the algebraic operation fiberwise. For example we have the following constructions:
\begin{itemize}
\item {\it Dualization:} Given a vector bundle $E$, we define its dual $E^\ast$ setting 
the fiber $(E^\ast)_p$ over $p\in M$ to be the dual $(E_p)^\ast$ of the fiber $E_p$;
\item {\it Tensor product:} Given two vector bundles $E$ and $F$ over the same manifold $M$, 
for every point $p\in M$, we take the tensor product of the fibers, $E_p\otimes F_p$. Setting 
$(E\otimes F)_p=E_p\otimes F_p$ defines the vector bundle $E\otimes F$;
\item {\it Direct (or Whitney) sum:} Let $E$ and $F$ be vector bundles over $M$. We define the direct sum 
$E\oplus F$ setting $(E\oplus F)_p=E_p\oplus F_p$ for each $p\in M$;
\item {\it Bundle of homomorphisms:} Let $E$ and $F$ be vector bundles over $M$. To define 
the vector bundle $\mathrm{Hom}(E,F)$ of fiberwise homomorphism from $E$ to $F$, we set 
its fiber over any $p$ to coincide with the space of homomorphisms 
$\mathrm{Hom}(E_p,F_p)$, namely the vector space of linear maps from $E_p$ to $F_p$.
\end{itemize}

As it has been put forward in the literature, {\it e.g.}~Ref. \citen{BG2}, a classical linear field theory on a curved spacetime is completely specified in terms of a vector bundle $E$ endowed with a non-degenerate bilinear form (specifying the kinematical structure) and a differential operator $P$ on $E$ (specifying the dynamics). We start introducing the former concept.

\begin{defi}\label{defInnerProd}
Let $(E,\pi,M,V)$ be a real vector bundle. A {\it Bosonic (Fermionic) non-degenerate bilinear form} on $E$ 
is a smooth real-valued map $\langle\cdot,\cdot\rangle$ on the vector 
bundle $E\otimes E$ such that for each $p\in M$ the following properties holds:
\begin{enumerate}[(i)]
\item On the fiber $E_p\otimes E_p$ over $p$, $\langle\cdot,\cdot\rangle$ is a symmetric (antisymmetric)  bilinear form;
\item If $v\in E_p$ is such that $\langle v,w\rangle=0$ for each $w\in E_p$, then $v=0$.
\end{enumerate}
\end{defi}

Notice that one can interpret condition (i) in the last definition saying that 
the bilinear form $\langle\cdot,\cdot\rangle$ is a section of $E^\ast\otimes E^\ast$. In the following all vector bundles are assumed to be endowed with a non-degenerate bilinear form.

As anticipated, if the base manifold $M$ is orientable, as for globally hyperbolic spacetimes, there exists a non-vanishing volume form on $M$. This allows to define a non-degenerate pairing between smooth sections and compactly supported smooth sections of a vector bundle, provided the latter is endowed with a non-degenerate inner 
product along the fibers:
\begin{equation}\label{eqSectPairing}
(\cdot,\cdot):\Gamma_0(M,E)\otimes\Gamma(M,E)\to\mathbb{R},\quad
(\sigma,\tau)\mapsto\int_M\mathrm{vol}_M\,\langle\sigma,\tau\rangle.
\end{equation}
We will often exploit that the map $(\cdot,\cdot)$ 
can be extended to a subset of $\Gamma(M,E)\otimes\Gamma(M,E)$ larger then 
$\Gamma_0(M,E)\otimes\Gamma(M,E)$, namely the set of linear combinations of 
elements $(\sigma,\tau)\in\Gamma(M,E)\otimes\Gamma(M,E)$ such that 
$\mathrm{supp}(\sigma)\cap\mathrm{supp}(\tau)\subseteq M$ is compact (notice that 
this subset is not a subspace, since the sum is not an internal operation). In fact, one can 
even go further taking into account sections which ``decrease fast enough at infinity''.

\subsection{Differential operators and wave equations}

We shall now focus on defining linear partial differential operators on vector bundles. A thorough discussion of the general theory 
is included in the seminal textbooks of H\"ormander, Refs.~\citen{Hormander1, Hormander2, Hormander3, Hormander4}. In our case, instead, we will be mostly interested in a very special subclass of operators, which are those associated to the dynamics of free field theories on curved backgrounds. The prototype of an element in such subclass is the wave operator, to which analysis several complete reviews have been dedicated\cite{BG1, BGP, Waldmann}.

\begin{defi}\label{defLinPartDiffOp}
Let $(E,\pi,M,V),(F,\rho,M,W)$ be two vector bundles over the same $d$-dimensional 
manifold $M$. A {\it linear partial differential operator of order at most $k$} is a linear 
operator $L:\Gamma(M,E)\to\Gamma(M,F)$ fulfilling the following property: 
For each $x\in M$, there exists a neighbourhood $U$ of $x$ such that $(U,\Phi)$ trivializes 
$E$, $(U,\Psi)$ trivializes $F$, $(U,\phi)$ is a local chart of $M$ and there is a collection 
$\{A,A_{j_1},\dots,A_{j_1,\dots,j_k}\;|\;j_1,\dots,j_k\in\{1,\dots,d\}\}$ 
of smooth $\mathrm{Hom}(V,W)$-valued maps on $\phi(U)$
which allows to express $L$ locally: For each section $\sigma$ of $E$
\begin{equation}
\Psi\circ(L\sigma)\circ\phi^{-1}=\sum_{i=0}^k \sum_{j_1,\dots,j_i=1}^d A_{j_1,\dots,j_i}
\partial_{j_1}\cdots\partial_{j_i}(\Phi\circ\sigma\circ\phi^{-1}),
\end{equation}
where $\partial_r,r\in\{1,\dots,d\}$, is the standard partial derivative acting on 
vector valued functions defined on some open subset of $\mathbb{R}^d$.

We say that $L$ is {\it exactly of order $k$} if it is of order at most $k$, but not of order 
at most $k-1$.
\end{defi}

Notice that, as a consequence of their definition, linear partial differential operators cannot 
enlarge the support of a section. This property will be often implicitly used in the following.

We postpone examples of linear partial differential operators to the next subsection and we focus our attention on a special subclass. This class is noteworthy since, on the one hand, it allows a generalization of the standard picture of a wave equation we have on Minkowski spacetime (a prominent example is the Klein-Gordon equation), while, on the other hand, it admits a well-behaved initial value problem, namely existence and uniqueness of a solution hold once initial data are properly assigned on a Cauchy hypersurface. Nonetheless, it is important to bear in mind that the Dirac operator is not included in the class we are considering, hence an extension of our analysis will be necessary.

\begin{defi}\label{nhyp}
Let $(E,\pi,M,V)$ be a real vector bundle over a $d$-dimensional Lorentzian manifold $(M,g)$. 
We say that a partial differential operator $P:\Gamma(M,E)\to\Gamma(M,E)$ of second order 
is {\it normally hyperbolic} if, in a local trivialization,\footnote{Refer to Definition \ref{defLinPartDiffOp} 
for a choice of the neighbourhood $U$ and of the maps $\phi$ and $\Phi$.} 
there exists a collection $\{A,A_i\;|\;i\in\{1,\dots,d\}\}$ of smooth $\mathrm{Hom}(V,V)$-valued maps on $U$ 
such that $P$ reads as follows: 
For each section $\sigma$ of $E$,
\begin{equation}
\Phi\circ(P\sigma)\circ\phi^{-1}=\left(-\sum_{i,j=1}^d g^{ij}\mathrm{id}_V
\partial_i\partial_j+\sum_{i=1}^d A_i\partial_i+A\right)(\Phi\circ\sigma\circ\phi^{-1}).
\end{equation}

Given a section $J$ of a vector bundle $E$, called {\it source}, and a differential operator $P$ taking values on $\Gamma(M,E)$, 
we say that a partial differential equation $P\sigma=J$ is a {\it wave equation} if $P$ is normally hyperbolic.
\end{defi}

Notice that this definition enforces that a normally hyperbolic operator has order two 
and that its coefficients of highest order should coincide with the metric of the underlying 
Lorentzian manifold.

Definition \ref{nhyp} is usually formulated introducing the {\it principal symbol} associated to a partial differential operator. Although such approach is more elegant and more apt to generalizations, we will not pursue it here. It would add an additional degree of mathematical complexity which is not strictly necessary. An interested reader can refer to Ref. \citen{Hormander1}.

\begin{rem}
Notice that, wave equations, for which $J\neq 0$, are of inhomogeneous type and their space of solutions is an affine space. In this case the most natural mathematical structure is provided by affine bundles. Quantum field theories on affine bundle have been developed in Ref.~\citen{Benini:2012vi} and they provide also the natural setting for the quantization of Abelian Yang-Mills models seen as theories for the connections of suitable principal bundles\cite{Benini:2013tra, Sanders:2012sf}. In this review, we will only focus on the case without sources, $J=0$, for which the set of all solutions has indeed a vector space structure.
\end{rem}

The simplest example of a wave equation one can consider is the real scalar field with mass 
$m\geq0$. Let $(M,g)$ be a $d$-dimensional Lorentzian manifold and consider the trivial line 
bundle $M\times\mathbb{R}$. As shown previously, sections of this bundle can be identified 
with real valued functions on $M$. We define the normally hyperbolic operator 
$P:C^\infty(M)\to C^\infty(M)$ in a local chart according to the following formula:
\begin{equation}\label{eqScalarFieldOp}
P=-\sum_{i,j=1}^d g^{ij}\partial_i\partial_j+\sum_{i=1}^d (\sum_{j,k=1}^d
g^{jk}\Gamma_{jk}^i)\partial_i+m^2,
\end{equation}
where $\Gamma_{ij}^k$ denotes the Christoffel's symbols of the Levi-Civita connection on $(M,g)$ 
(refer to Ref.~\citen{Jost}, Section 3.3, 
for an introduction to the Levi-Civita connection and its Christoffel's symbols).
Notice that $Pf=0$ reduces to the usual wave equation for the real scalar field on Minkowski 
spacetime, where this formula is actually global and Christoffel's symbols become trivial.

In the next subsection we will need another construction, essentially based on Stokes' theorem 
(integration by parts), the {\it formal adjoint} of a differential operator. This involves 
the pairing between sections introduced in Eq. \eqref{eqSectPairing} exploiting the bilinear form 
defined on the fibers of the underlying vector bundle (see Definition \ref{defInnerProd}).

\begin{defi}\label{for_adj}
Given a linear differential operator $L:\Gamma(M,E)\to\Gamma(M,F)$, its {\it formal adjoint} 
$L^\ast:\Gamma(M,F)\to\Gamma(M,E)$ is a linear differential operator satisfying 
$(L^\ast u,v)_E=(u,Lv)_F$ for each $u\in\Gamma(M,F)$ and $v\in\Gamma(M,E)$ with 
$\mathrm{supp}(u)\cap\mathrm{supp}(v)$ compact, where $(\cdot,\cdot)_E$ 
and $(\cdot,\cdot)_F$ are the non-degenerate pairings introduced in 
Eq. \eqref{eqSectPairing} for $E$ and respectively $F$.

A linear differential operator $L:\Gamma(M,E)\to\Gamma(M,E)$ is {\it formally self-adjoint} 
if it coincides with its formal adjoint, namely $L^\ast=L$.
\end{defi}

The existence of $L^\ast$ follows from Stokes' theorem, while 
its uniqueness is a direct consequence of the non-degeneracy of the pairing 
$(\cdot,\cdot)$ between sections and compactly 
supported sections of a vector bundle. Indeed the operation of taking 
the formal adjoint is an involution, namely $(P^\ast)^\ast=P$.

From a physical point of view, formally self-adjoint differential 
operators play a distinguished role, as we will see at the end of this section. This property is quite natural as equations of motions, which originate from a Lagrangian, quadratic in the fields, are always written in terms of a formally self-adjoint differential operator.

\begin{rem}\label{remAdjNormHypOp}
The formal adjoint of a normally hyperbolic operator is still normally hyperbolic. 
This follows from the local structure of such class of operators via integration by parts.
\end{rem}

It is an important result, especially in view of the forthcoming theorem on the existence and 
uniqueness of solutions of initial value problems defined using a normally hyperbolic operator, 
that one can always find a connection on the underlying vector bundle which makes it possible to express any 
given normally hyperbolic operator in a simple form. For an introductory discussion on the notion of 
a connection on a vector bundle, we refer to Ref. \citen{Jost}, Chapter 3.

\begin{propo}\label{propBoxDecomposition}
Consider a normally hyperbolic operator $P$ acting on sections of a vector bundle $E$ over a Lorentzian manifold $(M,g)$. 
Then there exist a unique connection $\nabla$ on $E$ 
and a unique section $B$ of the vector bundle $\mathrm{Hom}(E,E)$ such that $P=\Box_\nabla+B$, 
where $\Box_\nabla$ is the d'Alembert operator defined by 
$\Box_\nabla=-(\mathrm{tr}_{T^\ast M}\otimes\mathrm{id}_E)\circ\nabla\circ\nabla$, 
where $\mathrm{tr}_{T^\ast M}$ denotes the trace on $T^\ast M\otimes T^\ast M$ with respect to the metric $g$ 
and $\mathrm{id}_E$ is just the identity on $E$.
\end{propo}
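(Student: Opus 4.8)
The plan is to work in a local trivialization and to exploit the fact that, by Definition~\ref{nhyp}, the second-order part of $P$ is $-g^{ij}\mathrm{id}_V\,\partial_i\partial_j$, which coincides with the principal part of the d'Alembert operator attached to \emph{any} connection. Hence the whole problem reduces to choosing the connection so as to cancel the first-order terms, after which the zeroth-order remainder is forced to be $B$. Concretely, in a trivialization over $U$ a connection on $E$ is encoded by a collection of smooth $\mathrm{Hom}(V,V)$-valued maps $\omega_i$ via $\nabla_i=\partial_i+\omega_i$, and the family $\{\omega_i\}$ is in bijection with connections.

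First I would compute $\Box_\nabla$ in such a trivialization, recalling that $\nabla\circ\nabla$ uses the Levi-Civita connection on the $T^\ast M$ factor, so that the Christoffel symbols $\Gamma^k_{ij}$ enter. A direct expansion gives
\begin{equation*}
\Box_\nabla\sigma=-g^{ij}\,\partial_i\partial_j\sigma+\sum_l\Big(g^{ij}\Gamma^l_{ij}-2\,g^{lj}\omega_j\Big)\partial_l\sigma+(\text{zeroth order in }\sigma),
\end{equation*}
where the zeroth-order piece collects $-g^{ij}\big(\partial_i\omega_j+\omega_i\omega_j-\Gamma^k_{ij}\omega_k\big)$. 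Comparing with the local form of $P$ from Definition~\ref{nhyp}, the second-order parts already agree, so $P-\Box_\nabla$ is of order at most one; demanding that its first-order part vanish amounts to the linear system $-2\,g^{lj}\omega_j=A_l-g^{ij}\Gamma^l_{ij}$ for the unknowns $\omega_j$. Since $(M,g)$ is Lorentzian, $g^{ij}$ is invertible, so this system has the unique solution $\omega_j=-\tfrac12\,g_{jl}\big(A_l-g^{ik}\Gamma^l_{ik}\big)$. This simultaneously proves the existence of a connection making $P-\Box_\nabla$ purely zeroth-order and establishes its uniqueness within each trivialization.

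The main obstacle is then purely one of globalization: the $\omega_i$ have been built chart by chart, and one must check that they define a genuine connection on $E$, i.e.\ that they transform correctly on overlaps. Rather than verifying the transformation law by hand, I would argue by uniqueness. The defining condition ``$P-\Box_\nabla$ is a section of $\mathrm{Hom}(E,E)$'' is manifestly independent of any trivialization; therefore on an overlap the two locally constructed connections both satisfy it and, by the local uniqueness just proved, must coincide. This patches the local data into a single global connection $\nabla$, and one sets $B:=P-\Box_\nabla$, which by construction is a globally defined zeroth-order operator, i.e.\ a section of $\mathrm{Hom}(E,E)$. For global uniqueness, suppose $P=\Box_\nabla+B=\Box_{\nabla'}+B'$; writing $\nabla'=\nabla+\eta$ for a $\mathrm{Hom}(E,E)$-valued one-form $\eta=\eta_i\,dx^i$, the computation above shows that $\Box_{\nabla'}-\Box_\nabla$ has first-order part $-2\,g^{lj}\eta_j\,\partial_l$, which must vanish since $B'-B$ is zeroth-order; non-degeneracy of $g$ forces $\eta_j=0$, hence $\nabla'=\nabla$ and then $B'=B$. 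I expect the only delicate bookkeeping to lie in the expansion of $\Box_\nabla$ and the careful placement of the Levi-Civita Christoffel symbols; everything else is linear algebra driven by the invertibility of the metric.
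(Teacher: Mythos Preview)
Your argument is correct and follows the standard route: match the first-order coefficients of $P$ and $\Box_\nabla$ in a local trivialization to pin down the connection one-forms $\omega_j$ uniquely via the invertibility of $g$, then globalize by the chart-independence of the condition ``$P-\Box_\nabla$ has order zero''. The paper itself does not spell out a proof but simply refers to Lemma~1.5.5 of Ref.~\citen{BGP}, whose argument is essentially the one you have reproduced; so there is no methodological difference to report.
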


\noindent For the proof of the last proposition see Ref. \citen{BGP}, Lemma 1.5.5.

This proposition clarifies the fact that d'Alembert operators associated to connections are 
the prototypes of all possible normally hyperbolic operators up to a term of order zero in 
the derivatives. This is exactly what happens in the only example we explicitly considered so far, 
namely the real scalar field of Eq. (\ref{eqScalarFieldOp}). In fact, taking into account the Levi-Civita 
connection $\nabla$ for the metric $g$ on the base manifold $M$, locally
$P=-g^{ij}\nabla_i\nabla_j+m^2$,
which is exactly the expression for $\Box_{\nabla}+m^2$ in a local chart of $M$.

Wave equations on globally hyperbolic spacetimes satisfy 
a very special property, namely each Cauchy problem admits a unique 
global solution. We make this statement precise in the following theorem. 
For the proof refer to Refs.~\citen{BGP, Friedlander}.

\begin{theo}\label{thmCauchy}
Let $(M,g)$ be a globally hyperbolic spacetime and consider a spacelike smooth Cauchy hypersurface $\Sigma$ 
of $(M,g)$. Let $\mathfrak{n}$ be the future-directed timelike unit normal vector field on $\Sigma$. 
Consider a vector bundle $(E,\pi,M,V)$ and a normally hyperbolic operator 
$P=\Box_\nabla+B:\Gamma(M,E)\to\Gamma(M,E)$, where $\nabla$ and $B$ are uniquely determined 
according to Proposition \ref{propBoxDecomposition}. 
For each initial data $u_0,u_1\in\Gamma_0(\Sigma,E)$ and each source $f\in\Gamma_0(M,E)$, 
the Cauchy problem
\begin{equation}
Pu=f\mbox{ on }M,\quad\nabla_\mathfrak{n}u=u_1\mbox{ on }\Sigma,\quad u=u_0\mbox{ on }\Sigma,
\end{equation}
admits a unique solution $u\in\Gamma(M,E)$. 
Here $\nabla_\mathfrak{n}$ denotes the covariant derivative along $\mathfrak{n}$.

The support of $u$ is related to the supports of both initial data and source, namely
\begin{equation}
\mathrm{supp}(u)\subseteq J_M(\mathrm{supp}(u_0)\cup\mathrm{supp}(u_1)\cup\mathrm{supp}(f)).
\end{equation}

Moreover the map $\Gamma_0(\Sigma,E)\times\Gamma_0(\Sigma,E)\times\Gamma_0(M,E)\to
\Gamma(M,E),(u_0,u_1,f)\mapsto u$, which assigns the unique solution of the corresponding 
Cauchy problem to a given set of initial data and a given source, is linear and continuous.
\end{theo}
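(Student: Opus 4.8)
The plan is to establish the four assertions---uniqueness, existence, the support inclusion, and continuity---by combining the foliation of Theorem \ref{BS} with energy estimates tailored to the decomposition $P=\Box_\nabla+B$ furnished by Proposition \ref{propBoxDecomposition}. The single analytic tool doing most of the work is a local energy inequality, from which uniqueness and finite propagation speed follow at once, whereas existence demands a more delicate local-to-global construction.

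First I would dispatch uniqueness and the support inclusion together. To a section $u$ I would attach an energy current built by contracting $\nabla u$ with itself through the fiber bilinear form and with the future-directed normal $\mathfrak{n}$, and then integrate the divergence identity associated to $Pu$ over a truncated causal diamond bounded below by $\Sigma$ and above by a nearby slice $\{t\}\times\Sigma$ from the foliation. Stokes' theorem turns the bulk integral into boundary contributions over the two slices; estimating the zeroth-order term $B$ and the first-order cross terms by the energy itself and invoking Gr\"onwall's inequality bounds the energy on the upper slice by that of the data plus the source. Feeding in $u_0=u_1=0$ and $f=0$ forces $u\equiv0$ on the domain of dependence, which is uniqueness; running the same estimate over diamonds disjoint from $J_M(\mathrm{supp}(u_0)\cup\mathrm{supp}(u_1)\cup\mathrm{supp}(f))$ yields the asserted support inclusion.

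For existence I would work locally and then globalize. Locally, near each point one produces a fundamental solution of $P$ by the Hadamard--Riesz parametrix method: solve the transport equations for the Hadamard coefficients along geodesics, truncate the formal series, and correct it to an exact local Green operator, thereby solving the Cauchy problem in a small globally hyperbolic neighbourhood. To reach all of $M$ I would exploit the product structure $M\cong\mathbb{R}\times\Sigma$, solving on thin slabs $[t_0,t_1]\times\Sigma$ and gluing: uniqueness guarantees that solutions on overlapping slabs coincide, while the compactness of the diamonds $J_M^+(p)\cap J_M^-(q)$---the equivalent characterization of global hyperbolicity recalled above---together with the support inclusion already proved keeps supports under control, so the pieces extend consistently along the whole time axis.

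Finally, linearity of the solution map is forced by uniqueness, since the sum of the solutions of two problems solves the problem for the summed data. Continuity in the topologies of Definition \ref{defConv} comes from feeding $\mathcal{D}$-convergent sequences of data into the same energy bounds, which propagate the convergence to $\mathcal{E}$-convergence of the solutions uniformly on compacta. The main obstacle throughout is this globalization: both the parametrix and the energy method yield only local information, and it is precisely the causal compactness built into global hyperbolicity that lets the local solutions be patched into a single global one with the correct support---so the bookkeeping of domains of dependence is where the genuine difficulty resides.
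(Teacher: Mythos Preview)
The paper does not prove this theorem: it simply writes ``For the proof refer to Refs.~\citen{BGP, Friedlander}.''  Your outline is essentially the strategy carried out in those references, particularly B\"ar--Ginoux--Pf\"affle: local existence via Riesz distributions and the Hadamard parametrix, uniqueness and finite propagation from an energy estimate with Gr\"onwall, globalization along the foliation using the causal compactness built into global hyperbolicity, and continuity from the same estimates.  So as a proof sketch it is on target.

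One genuine slip, though: you propose to build the energy current by ``contracting $\nabla u$ with itself through the fiber bilinear form''.  The theorem as stated does not equip $E$ with any bilinear form, and even where the paper does impose one (Definition~\ref{defInnerProd}) it is merely non-degenerate---symmetric in the Bosonic case, antisymmetric in the Fermionic case---and in neither case positive definite.  An antisymmetric form gives no energy at all, and an indefinite symmetric one gives an ``energy'' with no sign, so Gr\"onwall cannot be run.  The correct move (and the one taken in Ref.~\citen{BGP}) is to introduce an auxiliary positive-definite fiber metric, unrelated to the physical pairing, solely for the purpose of the estimate; the resulting energy then controls all Sobolev norms and the argument proceeds as you describe.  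Without this auxiliary metric the uniqueness and support-inclusion steps, as you have written them, do not go through.
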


The last statement in Theorem \ref{thmCauchy} makes precise the idea that solutions of a Cauchy 
problem depend continuously on the given initial data on a Cauchy surface. 
As it will be clear soon, this fact entails continuity of the Green operators, which are one of the 
essential ingredients for quantization. These operators are the main topic of the next subsection.

\subsection{Green operators}\label{subsectGreenOp}
Theorem \ref{thmCauchy} is an extremely relevant result. It implies in particular the existence and uniqueness of the so-called advanced and retarded Green operators. As we will see, we can use Green operators to completely characterize the space of solutions with spacelike compact support of a wave equation on a globally hyperbolic spacetime.

\begin{defi}
Let $E$ be a vector bundle over a time-oriented Lorentzian manifold $(M,g)$ and consider a linear 
differential operator $P:\Gamma(M,E)\to\Gamma(M,E)$. A linear map $G^\pm:\Gamma_0(M,E)\to
\Gamma(M,E)$ is an {\it advanced/retarded Green operator} for $P$ if the following conditions are 
satisfied for each $f\in\Gamma_0(M,E)$:
\begin{enumerate}[(i)]
\item $PG^\pm f=f$;
\item $G^\pm Pf=f$;
\item $\mathrm{supp}(G^\pm f)\subseteq J_M^\pm(\mathrm{supp}(f))$.
\end{enumerate}
\end{defi}

Not all linear differential operators admit Green operators, but, indeed, those who do, are of primary physical relevance, since, as we shall show in the next section, one can associate to them a distinguished algebra of observables, built out of $G^\pm$. For this reason we can encompass these special operators in a specific class:

\begin{defi}\label{Ghyp}
Let $E$ be a vector bundle over a time-oriented Lorentzian manifold $(M,g)$ and consider a linear 
differential operator $P:\Gamma(M,E)\to\Gamma(M,E)$. $P$ is called {\it Green-hyperbolic} 
if it admits advanced and retarded Green operators.
\end{defi}

In this definition, we do not require uniqueness of the Green operators. It holds true automatically, when both $P$ and its formal adjoint $P^\ast$ are Green-hyperbolic. This is due to the properties of advanced and retarded Green operators as well as to the non degeneracy of the pairing between sections.\footnote{We are grateful to Ko Sanders for pointing out to us that the argument used in Ref. \citen{BG2} to prove uniqueness implicitly relies on the fact that both $P$ and its adjoint must be Green-hyperbolic.}

Normally hyperbolic operators are distinguished from the physical viewpoint because,
as a consequence of Theorem \ref{thmCauchy}, they provide a very large class of Green-hyperbolic operators: 
Consider a vector bundle $E$ on a globally hyperbolic spacetime $(M,g)$ and assume that a 
normally hyperbolic operator $P$ acting on sections of $E$ is given. For any compactly supported 
section $f$, one fixes a spacelike Cauchy surface $\Sigma_-$ of $(M,g)$ which is disjoint from 
the support of $f$, but such that its causal future includes it. One can thus set up a Cauchy problem with vanishing 
initial data on $\Sigma_-$ and with source $f$. From Theorem \ref{thmCauchy} we know that there exists 
a unique solution $u_f$ with support included in $J_M(\mathrm{supp}(f))$. One can prove, moreover, 
that $\mathrm{supp}(u_f)$ is included in the causal future of the support of $f$: Consider the same 
Cauchy problem restricted to the globally hyperbolic subspacetime 
$M_-=M\setminus J_M^+(\mathrm{supp}(f))\subseteq M$. Indeed, a solution for this problem is given by the 
restriction of $u_f$ to the subspacetime $M_-$. But this is also a Cauchy problem with vanishing initial data 
and source, namely the null section is a solution. By uniqueness it follows that $u_f=0$ 
on $M_-$, which entails that $\mathrm{supp}(u_f)\subseteq J_M^+(\mathrm{supp}(f)$. 
Using a similar argument, it follows that this construction is independent of the choice of $\Sigma_-$. 
Hence the construction defines a map $\Gamma_0(M,E)\to\Gamma(M,E),f\mapsto u_f$ 
such that $Pu_f=f$ and $\mathrm{supp}(u_f)\subseteq J_M^+(\mathrm{supp}(f))$ for each $f\in\Gamma_0(M,E)$.

Theorem \ref{thmCauchy} entails linearity of this map. 
Per construction this map fulfills the first and the third properties of an advanced 
Green operator for $P$. The second property follows since, for any spacelike Cauchy surface 
$\Sigma$ disjoint from the causal future of the support of $Pf$, $u_{Pf}$ is a solution of the Cauchy problem 
$Pu_{Pf}=Pf$ with vanishing initial data on $\Sigma$. Consequently $u_{Pf}-f$ is a solution of a 
Cauchy problem with vanishing initial data on $\Sigma$ and vanishing source, meaning that $u_{Pf}-f=0$. Hence the 
map $f\mapsto u_f$ is an advanced Green operator for $P$. We denote this map with $G^+$. 
By exchanging past and future in the argument above one obtains a retarded Green operator $G^-$ for $P$. 
Hence we proved that all normally hyperbolic operators are Green-hyperbolic.

It turns out that uniqueness of the advanced and retarded Green operator for a normally hyperbolic 
operator automatically holds true since, according to Remark \ref{remAdjNormHypOp}, 
$P^\ast$ is normally hyperbolic too, Green-hyperbolic in particular. 
Our discussion about uniqueness for the more general case of a Green-hyperbolic operator, whose 
adjoint is Green-hyperbolic as well, applies and one can easily show that that $G^{\ast\pm}$ 
is formally adjoint\footnote{Here the condition of formal adjointness is intended only with respect to 
compactly supported sections.} to $G^\mp$.

We recollect the results of the discussion above in the following theorem.

\begin{theo}\label{thmGreenOp}
Let $E$ be a vector bundle over a globally hyperbolic spacetime $(M,g)$ and consider a 
Green-hyperbolic operator $P$ acting on sections of $E$. Moreover, assume that its formal adjoint 
$P^\ast$ is Green-hyperbolic too. Then advanced and retarded Green operators for both $P$ and 
$P^\ast$ exist and are unique. Denote with $G^\pm$ those for $P$ and with $G^{\ast\pm}$ 
those for $P^\ast$. We have that $(G^{\ast\pm} u,v)=(u,G^\mp v)$ 
for all section $u,v$ of $E$ with compact support.

In particular the same conclusions apply to normally hyperbolic operators 
since they are Green-hyperbolic and their formal adjoint is normally hyperbolic.
\end{theo}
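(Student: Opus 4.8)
The plan is to take existence for granted---it is exactly the content of the hypothesis that both $P$ and $P^\ast$ are Green-hyperbolic---and to concentrate on the adjointness identity $(G^{\ast\pm}u,v)=(u,G^\mp v)$, since uniqueness will then fall out of it combined with the non-degeneracy of the pairing \eqref{eqSectPairing}. So I would establish the identity first and only afterwards deduce uniqueness.

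To prove the identity, I would fix $u,v\in\Gamma_0(M,E)$ and rewrite the right-hand side using property (i) of the Green operators together with formal adjointness in the form $(P^\ast a,b)=(a,Pb)$ from Definition \ref{for_adj}. Concretely, since $P^\ast G^{\ast\pm}u=u$ and $PG^\mp v=v$, I would compute
\begin{equation*}
(u,G^\mp v)=(P^\ast G^{\ast\pm}u,\,G^\mp v)=(G^{\ast\pm}u,\,PG^\mp v)=(G^{\ast\pm}u,v),
\end{equation*}
where the middle equality is the formal-adjoint relation applied to $a=G^{\ast\pm}u$ and $b=G^\mp v$, and the outer equalities are just the substitutions $u=P^\ast G^{\ast\pm}u$ and $v=PG^\mp v$.

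The step that requires care---and the main obstacle---is justifying that all pairings above are well-defined and that the formal-adjoint relation is legitimately applicable, i.e.\ that the relevant supports meet in a compact set. The outer pairings are harmless, because $v$ (resp.\ $u$) has compact support, so e.g.\ $\mathrm{supp}(G^{\ast\pm}u)\cap\mathrm{supp}(v)$ is a closed subset of the compact set $\mathrm{supp}(v)$. The genuinely delicate point is the middle pairing: by property (iii) one has $\mathrm{supp}(G^{\ast+}u)\subseteq J_M^+(\mathrm{supp}\,u)$ and $\mathrm{supp}(G^-v)\subseteq J_M^-(\mathrm{supp}\,v)$ (and symmetrically with the signs exchanged), and here I would invoke the characterization of global hyperbolicity recalled in Section 2, in its standard strengthening to compact sets: $J_M^+(K_1)\cap J_M^-(K_2)$ is compact whenever $K_1,K_2$ are compact. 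This is precisely the reason the identity must pair $G^{\ast+}$ with $G^-$ and $G^{\ast-}$ with $G^+$ rather than matching equal signs.

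Finally I would read off uniqueness. Since the identity holds for \emph{any} admissible choice of Green operators, if $G^\mp_1$ and $G^\mp_2$ were two advanced/retarded Green operators for $P$, then fixing one Green operator $G^{\ast\pm}$ for $P^\ast$ (which exists by hypothesis) gives $(u,(G^\mp_1-G^\mp_2)v)=0$ for all $u\in\Gamma_0(M,E)$, whence $(G^\mp_1-G^\mp_2)v=0$ by non-degeneracy of the pairing, for every $v\in\Gamma_0(M,E)$; thus $G^\mp_1=G^\mp_2$. The same argument with the roles of $P$ and $P^\ast$ interchanged (using $(P^\ast)^\ast=P$, noting that $P$ is Green-hyperbolic) yields uniqueness for $P^\ast$. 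The closing remark about normally hyperbolic operators is then immediate: by Remark \ref{remAdjNormHypOp} the formal adjoint of such an operator is again normally hyperbolic, hence Green-hyperbolic, so the hypotheses of the theorem are automatically satisfied.
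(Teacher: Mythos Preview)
Your argument is correct and matches the approach the paper sketches in the discussion preceding the theorem: the paper merely asserts that uniqueness ``is due to the properties of advanced and retarded Green operators as well as to the non-degeneracy of the pairing'' and that ``one can easily show that $G^{\ast\pm}$ is formally adjoint to $G^\mp$,'' whereas you spell out exactly this computation and, in particular, make explicit the crucial support check $J_M^+(K_1)\cap J_M^-(K_2)$ compact that legitimises the middle pairing. Your deduction of uniqueness from the adjointness identity (valid for \emph{any} choice of Green operators on either side) is the intended one.
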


We can use $G^+$ and $G^-$ in order to introduce a new operator $G=G^+-G^-$. 

\begin{defi}
Let $E$ be a vector bundle over a time-oriented Lorentzian manifold $(M,g)$. 
Consider a Green-hyperbolic operator $P$ acting on sections of $E$ and
take advanced and retarded Green operators $G^\pm$ for 
$P$.\footnote{Notice that without further assumptions on $P$, 
there might be more then one choice for $G^\pm$.}
Then $G=G^+ - G^-$ is the {\it causal propagator} for $P$ defined by $G^\pm$.
\end{defi}

$G$ entails the full characterization of the space of solutions with 
spacelike-compact support of the equation 
$Pu=0$ for a normally hyperbolic operator acting on sections of a vector bundle $E$ over a 
globally hyperbolic spacetime $(M,g)$. In fact, the forthcoming argument 
will be valid for Green-hyperbolic operators with Green-hyperbolic formal adjoint 
(notice that, under this assumption, the Green operators are unique as well as 
the causal propagator). There exists also a rather remarkable extension of this result 
to the full space of solutions of $Pu=0$ without any further assumption on the support 
of solutions and also admitting distributional solutions. Such an extension relies heavily on the support properties of advanced and retarded Green operators, see Ref. \citen{Sanders:2012}, Section 5.

Notice the following properties of $G$, which descend from those of $G^\pm$. For each $f\in\Gamma_0(M,E)$, the 
following holds:
\begin{enumerate}[(i)]
\item $PGf=0$;
\item $GPf=0$;
\item $\mathrm{supp}(Gf)\subseteq J_M(\mathrm{supp}(f))$.
\end{enumerate}

On account of the third property, the image of $G:\Gamma_0(M,E)\to\Gamma(M,E)$ is 
contained in $\Gamma_{sc}(M,E)$, namely the {\it space of sections with spacelike-compact support}, 
those sections whose support is included in $J_M(K)$ for some compact subset $K$ of $M$. Hence 
one can consider a new map $\Gamma_0(M,E)\to\Gamma_{sc}(M,E),f\mapsto Gf$, still denoted by $G$ 
with a slight abuse of notation, which coincides with the old one up to the inclusion of $\Gamma_{sc}(M,E)$ into $\Gamma(M,E)$.

The first property entails that the space of solutions of $Pu=0$, which, by definition, 
coincides with the kernel of $P$, includes the image of the operator $G$. Moreover, 
if $u$ is such that $Pu=0$ and its support is spacelike-compact, we can consider 
$u_+=\chi_+u$ and $u_-=\chi_-u$, where $\{\chi_+,\chi_-\}$ is a partition of unity subordinate to the 
open cover $\{I_M^+(\Sigma_-),I_M^-(\Sigma_+)\}$ and $\Sigma_+,\Sigma_-$ are disjoint spacelike 
Cauchy surfaces, with $\Sigma_+$ lying in the future of $\Sigma_-$. Per linearity $Pu_++Pu_-=Pu=0$. 
Together with the support properties of $u$ (spacelike-compact support) 
and $\chi_+,\chi_-$ (past-compact, respectively future-compact, support), this identity entails that $Pu_+=-Pu_-$ 
has compact support. Hence we can consider $G^\pm Pu_+=-G^\pm Pu_-$\footnote{Notice that $u_+$ 
and $u_-$ are not supposed to be compactly supported, so one cannot conclude, for example, that 
$G^+ P u_\pm=u_\pm$. In fact, if both $u_+$ and $u_-$ were compactly supported, 
this would immediately lead to $u=0$. This is consistent with the fact that $u$ would have 
compact support too, hence the assumption $Pu=0$ entails $u=G^+Pu=0$.} 
and define $v=GPu_+$. On account of the relation between the Green operators 
for $P$ and $P^\ast$, we deduce the following chain of identities  
for each section $f\in\Gamma_0(M,E)$:
\begin{equation}
(f,v)=(f,G^+ P u_+ + G^- P u_-)=(P^\ast G^{\ast-} f,u_+) + (P^\ast G^{\ast+} f,u_-)=(f,u).
\end{equation}
Non-degeneracy of $(\cdot,\cdot)$ entails $v=u$. This proves that $u$ lies in 
the image of $G$, showing that the space of solutions 
with spacelike-compact support of the equation $Pu=0$ coincides with the 
image of the causal propagator $G$.

Moreover the second property states that the image of the normally hyperbolic operator $P$ acting on 
sections with compact support, namely $P(\Gamma_0(M,E))\subseteq\Gamma_0(M,E)$, is contained 
in the kernel of the causal propagator $G$. At the same time, assuming that $f\in\Gamma_0(M,E)$ is such 
that $Gf=0$, we deduce that $G^+f=G^-f$ is a section whose support is included in both 
the causal past and the causal future of the support of $f$. 
Hence $u=G^+f$ is compactly supported and satisfies $Pu=f$, showing that 
$P(\Gamma_0(M,E))$ exactly coincides with the kernel of $G$.

We recollect all these facts in the following statement.

\begin{propo}\label{propCausalProp}
Consider a vector bundle $E$ over a globally hyperbolic spacetime $(M,g)$. Let 
$P:\Gamma(M,E)\to\Gamma(M,E)$ be a Green-hyperbolic operator with 
Green-hyperbolic formal adjoint $P^\ast$ and denote with $G^+,G^-$ 
its unique Green operators. Then the associated causal propagator 
$G:\Gamma_0(M,E)\to\Gamma_{sc}(M,E)$ has the following properties:
\begin{enumerate}[(i)]
\item The kernel of $G$ coincides with the image of $P$ acting on $\Gamma_0(M,E)$;
\item The image of $G$ coincides with the kernel of $P$ acting on $\Gamma_{sc}(M,E)$.
\end{enumerate}
Moreover $P$ is injective when acting on compactly supported sections.
\end{propo}

\begin{proof}
Everything has already been proven in the discussion above, except the last statement. 
For the last statement, assume that $f\in\Gamma_0(M,E)$ is such that $Pf=0$. 
Then $f=G^+ Pf=0$, which proves the assertion.
\end{proof}

One can interpret Proposition \ref{propCausalProp} as a characterization of 
the space $\mathcal{S}_{sc}(M)$ of solutions with spacelike-compact support of the equation $Pu=0$ on $M$, 
$P$ being Green-hyperbolic with Green-hyperbolic formal adjoint. 
First one has to realize that this space coincides with the kernel of $P$ restricted to 
spacelike-like compact sections, which is a rather trivial fact. Then one can apply Proposition \ref{propCausalProp} 
to deduce that $G$ induces an isomorphism of vector spaces from 
$\Gamma_0(M,E)/P(\Gamma_0(M,E))$ to $\mathcal{S}_{sc}(M)$.

\begin{lem}\label{iso}
Consider a vector bundle $E$ over a globally hyperbolic spacetime $(M,g)$. Let 
$P:\Gamma(M,E)\to\Gamma(M,E)$ be a Green-hyperbolic operator with Green-hyperbolic formal adjoint. 
Then the space $\mathcal{S}_{sc}(M)$ of solutions with spacelike-compact support 
of the equation $Pu=0$ on $M$ is isomorphic to the quotient space $\Gamma_0(M,E)/P(\Gamma_0(M,E))$ 
via the map $I:\Gamma_0(M,E)/P(\Gamma_0(M,E))\to\mathcal{S}_{sc}(M), [f]\mapsto Gf$.
\end{lem}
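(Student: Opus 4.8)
The plan is to exhibit $I$ explicitly as a well-defined linear isomorphism by leveraging Proposition \ref{propCausalProp}, which already carries essentially all the content. The statement is really a repackaging of the two facts established there: $\ker G = P(\Gamma_0(M,E))$ and $\operatorname{im} G = \mathcal{S}_{sc}(M)$. So the work reduces to checking that passing to the quotient turns $G$ into a bijection.

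First I would verify that $I$ is well defined. The causal propagator $G:\Gamma_0(M,E)\to\Gamma_{sc}(M,E)$ is linear, so to descend to the quotient $\Gamma_0(M,E)/P(\Gamma_0(M,E))$ it suffices that $G$ annihilate the subspace we are quotienting by. This is precisely property (i) of Proposition \ref{propCausalProp}: $\ker G = P(\Gamma_0(M,E))$. Hence if $[f]=[f']$, i.e. $f-f'\in P(\Gamma_0(M,E))=\ker G$, then $Gf=Gf'$, so the assignment $[f]\mapsto Gf$ does not depend on the representative, and linearity of $I$ is inherited from that of $G$.

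Next I would establish injectivity and surjectivity. For injectivity, suppose $I[f]=Gf=0$; then $f\in\ker G=P(\Gamma_0(M,E))$ by (i), so $[f]=0$ in the quotient, and $I$ has trivial kernel. For surjectivity, I first identify $\mathcal{S}_{sc}(M)$ with the kernel of $P$ acting on $\Gamma_{sc}(M,E)$ — this is the ``rather trivial fact'' noted before the lemma, since by definition a spacelike-compact solution of $Pu=0$ is exactly a spacelike-compact section killed by $P$. Then property (ii) of Proposition \ref{propCausalProp}, $\operatorname{im} G = \ker P|_{\Gamma_{sc}}=\mathcal{S}_{sc}(M)$, says every $u\in\mathcal{S}_{sc}(M)$ equals $Gf$ for some $f\in\Gamma_0(M,E)$, whence $u=I[f]$. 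Combining the three checks, $I$ is a linear bijection, hence a vector space isomorphism.

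I do not anticipate a genuine obstacle here, since Proposition \ref{propCausalProp} has done the heavy lifting; the only point requiring a moment of care is the tacit identification of $\mathcal{S}_{sc}(M)$ with $\ker P|_{\Gamma_{sc}(M,E)}$, so that property (ii) applies verbatim. One should also note that the well-definedness and the isomorphism property both rest on the hypothesis that $P$ has Green-hyperbolic formal adjoint, since that is what guarantees uniqueness of the Green operators and hence that $G$ — and therefore $I$ — is canonically defined in the first place.
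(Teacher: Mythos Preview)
Your proposal is correct and matches the paper's approach exactly: the lemma is presented there as an immediate repackaging of Proposition~\ref{propCausalProp}, using (i) for well-definedness and injectivity and (ii) (together with the identification $\mathcal{S}_{sc}(M)=\ker P|_{\Gamma_{sc}}$) for surjectivity. There is nothing to add.
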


Besides its mathematical characterization, also the physical interpretation of $\mathcal{S}_{sc}(M)$ is noteworthy. As a matter of fact it plays the role of the {\it classical phase space} of the theory. To support this interpretation we need to show that we can endow $\mathcal{S}_{sc}(M)$ with a symplectic structure (Bosonic case) or an inner product structure (Fermionic case). This is possible for all formally self-adjoint Green-hyperbolic operators.

\begin{propo}\label{sympl2}
Let $E$ be a vector bundle over a globally hyperbolic spacetime endowed with a Bosonic or Fermionic non-degenerate bilinear form. Let $P:\Gamma(M,E)\to\Gamma(M,E)$ be a formally self-adjoint Green-hyperbolic operator. Then the causal propagator $G$ for $P$ fulfils $(u,Gv)=-(Gu,v)$ for each $u,v\in\Gamma_0(M,E)$ and the map $\sigma:\mathcal{S}_{sc}(M)\otimes\mathcal{S}_{sc}(M)\to\mathbb{R}$ defined by $\sigma(u,v)=(f,Gh)$, $f,h\in\Gamma_0(M,E)$ such that $Gf=u,Gh=v$, is a symplectic form, {\it i.e.}~a non-degenerate skew-symmetric bilinear form, in the Bosonic case or an inner product, {\it i.e.}~a non-degenerate symmetric bilinear form, in the Fermionic case.
\end{propo}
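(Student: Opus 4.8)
The plan is to prove the antisymmetry identity first and then to read off from it, in turn, that $\sigma$ is well defined, bilinear, of the correct symmetry type, and non-degenerate; of these only non-degeneracy needs a genuine idea, the rest being bookkeeping with the pairing \eqref{eqSectPairing}.

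First I would establish $(u,Gv)=-(Gu,v)$ for $u,v\in\Gamma_0(M,E)$. Since $P$ is formally self-adjoint we have $P^\ast=P$, so $P$ is its own (Green-hyperbolic) formal adjoint and Theorem \ref{thmGreenOp} applies, giving uniqueness of the Green operators; by uniqueness the adjoint Green operators satisfy $G^{\ast\pm}=G^\pm$. The adjointness relation of Theorem \ref{thmGreenOp} then reads $(G^\pm u,v)=(u,G^\mp v)$ for compactly supported $u,v$. Subtracting the two cases and using $G=G^+-G^-$ gives $(Gu,v)=(u,G^-v)-(u,G^+v)=-(u,Gv)$, as claimed.

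Next I would verify that $\sigma$ is well defined. Because $Gh=v$ already fixes the second slot, the only issue is independence of the representative $f$ of $u$: if $Gf=Gf'$ then by Proposition \ref{propCausalProp} the difference lies in $P(\Gamma_0(M,E))$, say $f-f'=P\phi$, and formal self-adjointness together with property (i) of the causal propagator gives $(f-f',Gh)=(\phi,PGh)=0$. Bilinearity is then immediate from linearity of $G$ and of the pairing. For the symmetry type I would again use the first-step identity: $\sigma(u,v)=(f,Gh)=-(Gf,h)$, while $\sigma(v,u)=(h,Gf)$. In the Bosonic case the pairing is symmetric, $(Gf,h)=(h,Gf)$, whence $\sigma(u,v)=-\sigma(v,u)$; in the Fermionic case it is antisymmetric, $(Gf,h)=-(h,Gf)$, whence $\sigma(u,v)=\sigma(v,u)$. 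These are exactly the skew-symmetry (Bosonic) and symmetry (Fermionic) asserted in the statement.

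The hard part is non-degeneracy. Suppose $u=Gf\in\mathcal{S}_{sc}(M)$ satisfies $\sigma(u,v)=0$ for every $v\in\mathcal{S}_{sc}(M)$. By Proposition \ref{propCausalProp} the map $G$ is surjective onto $\mathcal{S}_{sc}(M)$, so as $h$ ranges over $\Gamma_0(M,E)$ the section $v=Gh$ exhausts $\mathcal{S}_{sc}(M)$; hence $(f,Gh)=0$ for all such $h$. Rewriting by the first-step identity gives $(Gf,h)=0$, equivalently $(h,u)=0$, for all $h\in\Gamma_0(M,E)$, and non-degeneracy of the pairing then forces $u=Gf=0$. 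Throughout, the delicate point is purely one of hygiene: one must check that each pairing that appears is legitimate, i.e.\ that the intersection of supports is compact so that the extended form of \eqref{eqSectPairing} applies, and that non-degeneracy of $(\cdot,\cdot)$ is invoked in the direction in which one factor is compactly supported. This is the place where the spacelike-compact support of solutions and the compact support of the test sections must be tracked carefully.
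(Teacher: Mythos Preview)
Your proof is correct and follows essentially the same route as the paper's own argument. The only organisational difference is that the paper first introduces an auxiliary bilinear form $\tau(f,h)=(f,Gh)$ on $\Gamma_0(M,E)$, identifies its radical as $P(\Gamma_0(M,E))$, and then pushes the induced non-degenerate form to $\mathcal{S}_{sc}(M)$ via the isomorphism of Lemma~\ref{iso}, whereas you work directly with $\sigma$ on $\mathcal{S}_{sc}(M)$; the underlying computations (use of Theorem~\ref{thmGreenOp} for $(G^\pm u,v)=(u,G^\mp v)$, the identity $PG=0$ for well-definedness, and non-degeneracy of the pairing against compactly supported test sections) are identical.
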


\begin{proof}
Since $P$ is Green-hyperbolic and formally self-adjoint, its Green operators $G^+,G^-$ are unique 
and coincide with the Green operators of its formal adjoint $P^\ast=P$, hence we have 
$(G^\pm u,v)=(u,G^\mp v)$ for each $u,v\in\Gamma_0(M,E)$. This follows from 
Theorem \ref{thmGreenOp}. Defining the causal propagator as $G=G^+-G^-$, 
we deduce $(u,Gv)=-(Gu,v)$ for each $u,v\in\Gamma_0(M,E)$.

As for the symplectic form (respectively inner product), we start defining an ancillary bilinear form 
$\tau$ on $\Gamma_0(M,E)$ by setting $\tau(u,v)=(u,Gv)$ for each $u,v\in\Gamma_0(M,E)$. 
We show that the space of degeneracy of $\tau$ is $P(\Gamma_0(M,E))$: Consider $u\in\Gamma_0(M,E)$
such that $\tau(u,v)=0$ for each $v\in\Gamma_0(M,E)$. Non-degeneracy of $(\cdot,\cdot)$ entails $Gu=0$. 
Then $u\in P(\Gamma_0(M,E))$, due to (ii) in Proposition \ref{propCausalProp}. 
The converse follows from $GP=0$ on sections with compact support, 
namely $\tau(Pf,v)=0$ for each $f,v\in\Gamma_0(M,E)$. This means that $\tau$ induces a non-degenerate 
skew-symmetric (Bosonic case), respectively symmetric (Fermionic case), bilinear form on the quotient $\Gamma_0(M,E)/P(\Gamma_0(M,E))$, that is to say, 
a symplectic form, respectively an inner product. Using the isomorphism defined in Lemma \ref{iso}, 
we define $\sigma=\tau\circ(I^{-1}\otimes I^{-1})$. It is straightforward to check 
that $\sigma$ as defined here coincides with the map in the statement.
\end{proof}

To conclude the section, we remark that up to this point we did not provide any information about continuity properties of the operators we introduced so far. However, recalling Theorem \ref{thmCauchy}, one should expect continuity both
of Green operators and of the causal propagator to hold, at least whenever Theorem 
\ref{thmCauchy} can be applied. As a matter of fact, this is the case.

Let us first introduce three notions of convergence which are related to the causal structure 
of spacetime. In order to do this we need two new spaces of sections which refine the notion 
of spacelike-compact sections. Those are $\Gamma_{psc}(M,E)$ and $\Gamma_{fsc}(M,E)$: 
The first one is just the {\it space of sections with past- and spacelike- compact support}, 
namely those sections whose support is contained in the causal future $J^+_M(K)$ of a compact 
subset $K\subseteq M$. $\Gamma_{fsc}(M,E)$ is obtained exchanging future and past in the 
definition of $\Gamma_{psc}(M,E)$. These are indeed both subspaces of $\Gamma_{sc}(M,E)$. 
Notice that, exactly as we are allowed to restrict the codomain of a causal propagator to 
$\Gamma_{sc}(M,E)$, instead of $\Gamma(M,E)$, we can also restrict the codomain of 
an advanced (retarded) Green operator to $\Gamma_{psc}(M,E)$ 
(respectively $\Gamma_{fsc}(M,E)$). The relevant spaces for the analysis of 
the continuity properties of advanced and retarded Green operators, 
as well as causal propagators, being available, we are ready to define convergent sequences on these 
spaces. Note that this notion of convergence is just a refinement of the convergence for sequences in $\Gamma(M,E)$ (see Definition \ref{defConv}). It is obtained imposing constraints on the supports of the sections forming the sequence.

\begin{defi}\label{defConv2}
We say that a sequence $\{f_n\}$ of elements in $\Gamma_{sc}(M,E)$ ($\Gamma_{psc}(M,E)$, 
$\Gamma_{fsc}(M,E)$) is {\it $\mathcal{E}_{sc}$-convergent} (respectively $\mathcal{E}_{psc}$-, 
$\mathcal{E}_{fsc}$-) to $f$ in $\Gamma_{sc}(M,E)$ (respectively 
$\Gamma_{psc}(M,E)$, $\Gamma_{fsc}(M,E)$) if the following holds:
\begin{enumerate}[(i)]
\item There exists a compact subset $K\subseteq M$ such that $\mathrm{supp}(f_n)$, 
for each $n\in\mathbb{N}$, and $\mathrm{supp}(f)$ are included in $J_M(K)$ 
(respectively $J^+_M(K)$, $J^-_M(K)$).
\item $\{f_n\}$ is $\mathcal{E}$-convergent to $f$, namely it converges uniformly to $f$ 
together with all its derivatives on every compact subset of $M$.
\end{enumerate}
\end{defi}

Since we are now assuming $P$ to be normally hyperbolic, the following holds true:
$G^\pm:\Gamma_0(M,E)\to\Gamma_{(p/f)sc}(M,E)$ and 
$G:\Gamma_0(M,E)\to\Gamma_{sc}(M,E)$ are sequentially continuous with respect to the 
$\mathcal{D}$-convergence on $\Gamma_0(M,E)$ (see Definition \ref{defConv}) and the 
convergences defined above.

This fact follows straightforwardly from Theorem \ref{thmCauchy} (we are now restricting 
ourselves only to normally hyperbolic operators because otherwise this theorem may no longer be valid). 
For example, consider $G^+$ and 
take a sequence $\{f_n\}$ in $\Gamma_0(M,E)$ which is $\mathcal{D}$-convergent to $f\in\Gamma_0(M,E)$. 
In particular there exists a compact subset $K\subseteq M$ including 
the supports of the limit and of all elements in the sequence. Take a Cauchy surface $\Sigma$ 
of $M$ such that $K\cap J_M^-(\Sigma)=\emptyset$ (existence of $\Sigma$ follows from 
the fact that $K$ is compact). 
Our assumptions entail that the sequence of Cauchy data 
$\{(0,0,f_n)\}\in\Gamma_0(\Sigma,E)\times\Gamma_0(\Sigma,E)\times\Gamma_0(M,E)$ 
is $\mathcal{D}$-convergent to $(0,0,f)$. For each $n$, we denote with $u_n\in\Gamma(M,E)$ the solution 
of the Cauchy problem defined by $P$ and $\Sigma$ with initial data $(0,0,f_n)$. 
Similarly we denote with $u$ the solution of the Cauchy problem corresponding to $(0,0,f)$. 
Applying Theorem \ref{thmCauchy}, we deduce that 
$\{u_n\}$ is $\mathcal{E}$-convergent to $u$ and $\mathrm{supp}(u)$ and 
$\mathrm{supp}(u_n)$, for all $n$, are included in $J_M^+(K)$. Moreover, notice that 
$u_n=G^+f_n$, for all $n$, and $u=G^+f$, as a consequence of the construction of $G^+$ 
for $P$ normally hyperbolic (see the discussion preceding Theorem \ref{thmGreenOp}). 
This means that $\{u_n\}$ is $\mathcal{E}_{psc}$-convergent to $u$. 
Hence $G^+:\Gamma_0(M,E)\to\Gamma_{psc}(M,E)$ is sequentially continuous. The same argument 
applies to $G^-$ and the result for the causal propagator $G$ follows from sequential 
continuity of the inclusion maps $\Gamma_{(p/f)sc}(M,E)\to\Gamma_{sc}(M,E)$.

If one wants to draw the same conclusions for advanced and retarded Green-operators 
as well as for a causal propagator of a Green-hyperbolic operator $P$, 
one could refine its definition in order to enforce that $P$ must admit 
{\it sequentially continuous} advanced and retarded Green operators; 
as of now we are not aware of a proof that sequential continuity follows from Green-hyperbolicity. 
Yet, let us stress that sequential continuity of the advanced and retarded Green operators 
holds for the two examples of Green-hyperbolic (but not normally hyperbolic) operators 
we will analyse in the following, namely the Majorana and the Proca fields. 
This fact is a direct consequence of their construction.

\subsection{Examples}
We focus our attention to the construction of explicit examples of field theories and particularly to the analysis of their dynamics. 
As we outlined in the introduction, we shall discuss in detail only free field theories and refer the interested reader to Refs. \citen{Brunetti:1999jn,Fredenhagen:2011mq,Fredenhagen:2012sb,Hollands:2007zg} for a discussion of perturbative interacting quantum field theories on curved spacetimes in the algebraic language. At this stage we can see the first great difference between the construction of quantum field theories on a generic curved background and on Minkowski spacetime. In the latter, since the Poincar\'e group encodes all possible isometries, the collection of all possible free field theories, together with their equations of motion, can be fully classified by means of a group theoretical analysis, see for example Ref.~\citen{BR}. No additional input, save for the requirement of the fields to behave covariantly under the action of the Poincar\'e group, is needed. The so-called Bargmann-Wigner construction has no counterpart when the underlying background is a generic globally hyperbolic spacetime since the isometry group can be in general even trivial. For this reason, although it is natural to try to define the same kind of fields, {\it e.g.}~scalar, Dirac, Proca, we have a bigger leeway in deciding which are the relevant equations of motion. The only requirement we can ask for is compatibility with the known ones in the limit when the metric tends to that of Minkowski spacetime. We shall divide our analysis in three cases: scalar, Majorana and Proca fields.

\subsubsection{Scalar fields} 

In order to show the effect of this additional freedom, we start from the simplest example, the {\it real scalar field}. The other examples we shall treat, the Majorana field and the Proca field, are Hermitian and real fields respectively. Complex fields such as Dirac fields can be treated with only a few additional steps. Scalar fields have been at the heart of the vast majority of papers devoted to quantum field theory on curved backgrounds in the algebraic approach, since the first analysis of Dimock\cite{Dimock}. Another very detailed investigation can be found in Ref. \citen{Wald2}. 

As a starting point, we remark that the underlying bundle structure is rather simple. As a matter of fact, given an arbitrary globally hyperbolic spacetime $M$, the playground for a real scalar field is the trivial line bundle $E=M\times\mathbb{R}$ endowed with the Bosonic bilinear form defined by fiberwise multiplication. As already mentioned before Definition \ref{defSection}, in this case $\Gamma(M,E)\simeq C^\infty(M)$.

\begin{defi}\label{scalar}
A smooth section of $E=M\times\mathbb{R}$, is a real scalar field, if the associated function $\Phi\in C^\infty(M)$ is a solution of the following Cauchy problem: 
\begin{equation}\label{KG}
P\Phi=\left(\Box_\nabla+\xi R+m^2\right)\Phi=0\mbox{ on }M,\quad\Phi=\Phi_0\mbox{ on }\Sigma,
\quad\nabla_\mathfrak{n}\Phi=\Phi_1\mbox{ on }\Sigma,
\end{equation}
where $\Sigma$ is a smooth spacelike Cauchy hypersurface of $M$ with future-directed timelike unit normal vector field $\mathfrak{n}$, whereas $\Phi_0,\Phi_1\in C^\infty_0(\Sigma)$ are given initial data on $\Sigma$. Here $m^2\geq 0$, $R$ is the scalar curvature and $\xi\in\mathbb{R}$, whereas $\Box_\nabla\doteq -g^{\mu\nu}\nabla_\mu\nabla_\nu$ is the d'Alembert operator for the Levi-Civita connection $\nabla$ on $M$.
\end{defi}

The above definition includes in a single framework both massive and massless scalar fields and it encompasses the possibility of a non-trivial coupling to the background geometry. Regardless of the value of $\xi$, we see that, on Minkowski spacetime, we recover both the wave equation ($m^2=0$) and the Klein-Gordon equation ($m^2>0$). Since, at this level, the metric behaves like a background field and hence we are neglecting any backreaction, $\xi R$ has the role of an effective point-dependent mass term. There is no a priori reason to believe that a certain value of $\xi$ is preferred, though $\xi=0$ and $\xi=\frac{1}{6}$ are distinguished, the first representing the so-called {\em minimal coupling} and the second the {\em conformal coupling}. While the reason behind the choice of the word minimal is rather straightforward, the adjective conformal refers to a special property of \eqref{KG}. To wit, let us consider a globally hyperbolic spacetime $(M,g)$ and a conformally rescaled one $(M,g')$, namely, $g'=\Omega^2 g$ up to an isometry, where $\Omega$ is a smooth and strictly positive function on $M$. Then, to every solution $\Phi$ of \eqref{KG} with $m^2=0$ and $\xi=\frac{1}{6}$ on $(M,g)$, one can build $\Phi'=\Omega^{-1}\Phi$, which is a solution of the very same partial differential equation on $(M,g')$.

In a local coordinate system, 
\begin{equation}\label{waveop}
\Box_\nabla=-\frac{1}{\sqrt{|g|}}\partial_\mu\left(g^{\mu\nu}\sqrt{|g|}\partial_\nu\right),
\end{equation}
where $g$ is the determinant of the metric tensor. Hence, according to Definition \ref{nhyp}, $P$ is normally hyperbolic. Therefore we can characterize the space $\mathcal{S}_{sc}(M)$ of spacelike-compact solutions of the real Klein-Gordon equation as being isomorphic to $C^\infty_0(M)/P(C^\infty_0(M))$, according to Lemma \ref{iso}. Applying Stokes theorem, $P$ turns out to be formally self-adjoint, hence, due to Proposition \ref{sympl2}, $\mathcal{S}_{sc}(M)$ is a symplectic space, the symplectic form $\sigma:\mathcal{S}_{sc}(M)\otimes\mathcal{S}_{sc}(M)\to\mathbb{R}$ being given by
\begin{equation}\label{syKG}
\sigma(\Phi_f,\Phi_h)=(f,Gh)=\int_M\,\mathrm{vol}_M\,f\,Gh,
\end{equation}
where $\Phi_f,\Phi_h\in\mathcal{S}_{sc}(M)$ and $f,h\in C^\infty_0(M)$ such that $Gf=\Phi_f,Gh=\Phi_h$.

\subsubsection{Majorana fields}
While scalar fields played and play a prominent role in most of the papers focused on quantum field theory on curved backgrounds in the algebraic approach, for a long time spinor fields have been relegated to an ancillary role. Aside for the work of Dimock\cite{Dimock2}, for many years no paper on this topic was written. In the past five years we have witnessed a resurgence of interest in this topic. Therefore nowadays, our understanding of spinor fields on curved backgrounds is on par with that of scalar fields and several thorough analyses are available, see Refs. \citen{Dappiaggi:2009xj, Fewster:2001js, Sanders, Zahn:2012dz}. In particular, we shall now review how Majorana fields can be introduced on an arbitrary curved background and how their dynamics can be discussed. Notice that the procedure, we shall follow, differs drastically from the standard one in Minkowski spacetime. The requirement of covariance of massive free fields under the action of the Poincar\'e group provides the full set of unitary and irreducible representations induced from the $SU(2)$ subgroup. These are labelled by an integer/half-integer number which ought to be identified with the spin.

On a generic globally hyperbolic spacetime, the potential lack of a non-trivial isometry group forces us to follow a different approach.  In a four dimensional Lorentzian manifold, the starting point is the spin group\cite{Lawson}, the double cover of $SO(3,1)$. Its component connected to the identity is, therefore, isomorphic to $SL(2,\mathbb{C})$. Furthermore, to each point $x$ of an oriented and time-oriented differentiable Lorentzian manifold, we can assign an orthonormal oriented and time-oriented basis of $T_xM$. 
We denote the collection of all these frames at $x$ as $F_xM$. 
Since two elements of $F_xM$ can be always mapped into each other by the action of an element of the proper orthochronous Lorentz group, by fixing an element, we can identify $F_xM$ with the Lie group $SO_0(3,1)$. Since the tangent space of a globally hyperbolic spacetime $M$ is trivial, that is $TM\simeq M\times\mathbb{R}^4$, it turns out that $FM$, the disjoint union $\bigsqcup_{x\in M}F_xM$, is isomorphic to $M\times SO_0(3,1)$ as a principal bundle. $FM$ is also called the {\it bundle of Lorentz frames} and it is a principal $SO_0(3,1)$-bundle, see Ref. \citen{Husemoller} for the definition. Since we mention principal bundles only in this section and all those we need turn out to be trivial, we will skip the detailed mathematical analysis.

\begin{defi}\label{spinstr}
Given a spacetime $M$, a {\it spin structure} is a pair $(SM, \rho)$, 
where $SM$ is a principal $SL(2,\mathbb{C})$-bundle (in particular, each fiber $S_xM$ is 
isomorphic to $SL(2,\mathbb{C})$) and $\rho:SM\to FM$ is a principal bundle map, 
namely a smooth map fulfilling the following conditions:
\begin{enumerate}
\item $\rho$ preserves the fibers, that is to say that the image of $S_xM$ lies in $F_xM$; 
\item $\rho$ is equivariant, {\it i.e.}, using $\Pi$ to denote the surjective homomorphism from 
$SL(2,\mathbb{C})$ to $SO_0(3,1)$, for every $A\in SL(2,\mathbb{C})$, 
$\rho\circ R_{A}=R_{\Pi(A)}\circ\rho$, where $R_A$ and $R_{\Pi(A)}$ are the right Lie group actions 
of $A\in SL(2,\mathbb{C})$ on $SM$ and, respectively, of $\Pi(A)\in SO_0(3,1)$ on $FM$.
\end{enumerate}
\end{defi}

Notice that, as for $FM$, also $SM$, the {\it spin bundle} is defined using the language of principal bundles. Nonetheless, if $M$ is globally hyperbolic, one can prove\footnote{We are grateful to Chris Fewster for pointing us out Ref. \citen{Isham:1978ec}.} that $M$ admits only the trivial principal $SL(2,\mathbb{C})$-bundle\cite{Isham:1978ec}. 
This means that, as a matter of fact, in Definition \ref{spinstr} we are just taking into account principal bundles 
which look like $M\times SL(2,\mathbb{C})$. 
The existence of a spin structure on a globally hyperbolic spacetime is thus straightforward, 
all principal bundles being trivial. This fact agrees with a more general result. 
To wit, it is proven in Ref. \citen{Borel}, that an oriented, time oriented, differentiable manifold $M$ admits a spin structure if and only if $w_2(M)$, its second Stiefel-Whitney class\cite{Husemoller}, is trivial. As shown for example in Lemma 2.1 in Ref. \citen{Dappiaggi:2009xj}, this is the case for every globally hyperbolic spacetime. It is important to stress that, nonetheless, the spin structure is not necessarily unique, the number of inequivalent possibilities\cite{Lawson} being ruled by $H^1(M,\mathbb{Z}_2)$, the first cohomology group with $\mathbb{Z}_2$ coefficients.

The existence of a spin structure and of the representations\cite{BR} of $SL(2,\mathbb{C})$ 
suggests the following definition:

\begin{defi}\label{defDiracBundle}
We call {\it Dirac bundle} of a globally hyperbolic spacetime $M$ with respect to the 
representation $T\doteq D^{(\frac{1}{2},0)}\oplus D^{(0,\frac{1}{2})}$ of $SL(2,\mathbb{C})$ 
on $\mathbb{C}^4$ the associated bundle $DM\doteq SM\times_T\mathbb{C}^4$, 
which is defined as the orbit space of $SM\times\mathbb{C}^4$ under the 
right group action of $SL(2,\mathbb{C})$ induced by $T$. This means that an element 
$[(p,A)]$ in $DM$, $p\in SM, A\in SL(2,\mathbb{C})$, is an equivalence class with respect to the 
following relation: $(p',z')\sim(p,z)$ if and only if there exists $A\in SL(2,\mathbb{C})$ 
such that $p'=R_A(p)$ and $z'=T(A^{-1})z$. 
Since $SM$ is a trivial bundle, $DM$ is isomorphic to $M\times\mathbb{C}^4$ as a vector bundle. 
A {\it Dirac field/spinor} $\psi$ is a smooth section of $DM$ and 
thus it can be read as a smooth map from $M$ to $\mathbb{C}^4$.
\end{defi}

In order to define Majorana fields and to write down the Dirac equation we need the notion of $\gamma$-matrices and charge conjugation, we use here the conventions of Appendix A in Ref. \citen{Hack:2012dm}. The $\gamma$-matrices $\gamma^a$, $a=0,\dots,3$, 
are complex $(4\times 4)$-matrices satisfying the Clifford algebra relations
$\{\gamma^a,\gamma^b\}=2\,\eta^{ab}$, $\eta$ being the Minkowski metric with $(-,+,+,+)$ on the diagonal. 
We take the timelike $\gamma$-matrix to be 
antihermitian, ${\gamma^0}^\dagger =-\gamma^0$, and the spatial $\gamma$-matrices 
hermitian, ${\gamma^i}^\dagger= \gamma^i$, for all $i=1,2,3$. We further fix $\beta := i\gamma^0$
which satisfies $\beta^\dagger =\beta$. There exists a charge conjugation matrix $C$, which is antisymmetric, {\it i.e.}~$C^{\mathrm{T}}=-C$. Further properties are $C^\dagger = C^{-1}$ and, for all $a=0,\dots,3$, ${\gamma^a}^{\mathrm{T}} = - C\gamma^a C^{-1}$. We define the charge conjugation operation on spinors $\chi\in\mathbb{C}^4$ by $\chi^c \doteq -\beta\,\overline{C}\, \overline{\chi}$,
where  $\overline{\cdot}$ denotes component-wise complex conjugation. This operation squares to the identity,
${\chi^c}^c =\chi$, for all $\chi$. A Majorana spinor is defined by the reality condition $\chi^c =\chi$
and the space of Majorana spinors is a real vector space of dimension $4$. We define a non-degenerate bilinear form on spinors $\chi\in\mathbb{C}^4$ by $\langle\langle \chi_1, \chi_2\rangle\rangle \doteq  -i\chi^\dagger_1\beta\chi_2$. 
For every Majorana spinor $\chi$ the Dirac adjoint equals the Majorana adjoint, 
$\chi^\dagger \beta = \chi^{\mathrm{T}} C$, and thus the bilinear form can be equivalently expressed as
$$\langle\langle \chi_1, \chi_2\rangle\rangle=-i\chi^\text{T}_1 C \chi_2$$ on Majorana spinors. From this it is readily seen that $\langle\langle \cdot, \cdot\rangle\rangle$ is a non-degenerate, real-valued, skew-symmetric bilinear form on Majorana spinors.

\begin{defi}\label{defMajoranaBundle}
The {\it Majorana bundle} $DM_\mathbb{R}$ of a globally hyperbolic spacetime $M$ is the subbundle of $DM$ defined by $DM_\mathbb{R}\doteq\{p\in DM \,| \,p^c=p\}$ where $\cdot^c$ is the fibrewise lift of the charge conjugation map on $\mathbb{C}^4$-spinors to $DM$. We endow $DM_\mathbb{R}$ with the fibrewise Fermionic non-degenerate bilinear form $\langle\cdot,\cdot\rangle$ induced by the bilinear form $\langle\langle\cdot,\cdot\rangle\rangle$ on $\mathbb{C}^4$.
\end{defi}

We are now ready to sketch the definition of the ingredients 
which will enter the Dirac equation on a curved background. 
For more details refer to Section 2 in Ref. \citen{Dappiaggi:2009xj}. First we lift the usual ``flat spacetime'' $\gamma$-matrices to ``curved spacetimes'' ones, {\it i.e.}~to sections of $TM\otimes \mathrm{Hom}(DM,DM)$ by means of frames. In order to write down the Dirac equation we introduce a covariant derivative for sections of $DM$ by using $\rho$ to pull back to $SM$ the Levi-Civita connection, 
which is a principal bundle connection on $FM$. This yields an induced covariant derivative on $DM$
which will be denoted as $\nabla$. Given a choice of a spacelike smooth Cauchy hypersurface $\Sigma$, 
the dynamics for Majorana spinors $\psi\in\Gamma(DM_\mathbb{R})$ is ruled by the following initial value problem: 
\begin{eqnarray}
D\psi\doteq-\gamma^\mu\nabla_\mu\psi+m\psi=0\mbox{ on }M, & \quad 
& \psi=\psi_0\mbox{ on }\Sigma, \label{Dirac}
\end{eqnarray}
where $\psi_0$ is a compactly supported section of $DM_\mathbb{R}$, whereas $m$ is a real number. Note that for all $\psi\in \Gamma(DM)$, $(D\psi)^c=D\psi^c$, thus the Dirac equation is indeed well-defined also for Majorana spinors.

One can infer at first glance that the operator $D$ cannot be normally hyperbolic 
since it is a first order operator. Nonetheless one can show that it is Green-hyperbolic, 
according to Definition \ref{Ghyp}. The procedure to show this fact is based on the introduction 
of the ancillary operator $\widetilde{D}=\gamma^\mu\nabla_\mu+m$. 
As a matter of fact the following theorem holds true:

\begin{theo}
It holds that $D\circ\widetilde{D}=\widetilde{D}\circ D=\Box_\nabla+m^2+R/4\doteq P$ 
where $\Box_\nabla$ is the normally hyperbolic operator constructed out of the spin connection $\nabla$ 
as in Proposition \ref{propBoxDecomposition} and $R$ is the scalar curvature.

Given a smooth spacelike Cauchy hypersurface $\Sigma$ and 
denoting with $\mathfrak{n}$ the timelike unit normal vector field on $\Sigma$, 
each solution $\psi\in\Gamma(DM_\mathbb{R})$ of the Cauchy problem \eqref{Dirac} can be written as $\psi=\widetilde{D}u$, 
$u\in\Gamma(DM_\mathbb{R})$ being a solution of the Cauchy problem $Pu=0$ on $M$, $u=0$ on $\Sigma$ 
and $\nabla_\mathfrak{n}u=\gamma^\mu n_\mu\psi_0$ on $\Sigma$. 

Furthermore $D$ is Green-hyperbolic and formally self-adjoint, 
namely $(D u,v)=(u,Dv)$ holds true for all $u,v\in\Gamma(DM_\mathbb{R})$ such that $\mathrm{supp}(u)\cap\mathrm{supp}(v)$ is compact. 
Hence we recover all the results of Subsection \ref{subsectGreenOp}.
\end{theo}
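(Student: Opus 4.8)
The plan is to treat the three assertions in turn, with the factorization serving as the linchpin from which the other two follow almost formally. For the identity $D\widetilde{D}=\widetilde{D}D=P$ I would expand both compositions of $D=-\gamma^\mu\nabla_\mu+m$ and $\widetilde{D}=\gamma^\mu\nabla_\mu+m$. The two first-order cross terms proportional to $m$ cancel in either ordering, and since the induced connection $\nabla$ on $DM$ renders the $\gamma$-matrices covariantly constant (this is built into the construction of $\nabla$ through $\rho$), both products reduce to $-\gamma^\mu\gamma^\nu\nabla_\mu\nabla_\nu+m^2$; in particular the two orderings agree. The substance is then the Weitzenb\"ock step: splitting $\gamma^\mu\gamma^\nu=\tfrac12\{\gamma^\mu,\gamma^\nu\}+\tfrac12[\gamma^\mu,\gamma^\nu]$, the symmetric part contracts via $\{\gamma^\mu,\gamma^\nu\}=2g^{\mu\nu}$ to $g^{\mu\nu}\nabla_\mu\nabla_\nu=-\Box_\nabla$, whereas the antisymmetric part selects the curvature, $\tfrac12[\gamma^\mu,\gamma^\nu]\nabla_\mu\nabla_\nu=\tfrac14[\gamma^\mu,\gamma^\nu][\nabla_\mu,\nabla_\nu]$. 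I would then insert the spinorial curvature identity for $[\nabla_\mu,\nabla_\nu]$ on sections of $DM$ and use the first Bianchi identity to reduce the contraction $\gamma^\mu\gamma^\nu\gamma^\rho\gamma^\sigma R_{\mu\nu\rho\sigma}$ to a multiple of the scalar curvature, producing exactly the term $R/4$. This is the one genuinely computational step and I expect it to be the main obstacle, not conceptually but in pinning down signs and normalisations consistently with the stated conventions ($\beta=i\gamma^0$, signature $(-,+,+,+)$).

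Second, for the reduction of the Cauchy problem I would exploit that $P=\Box_\nabla+m^2+R/4$ is normally hyperbolic and hence governed by Theorem \ref{thmCauchy}. Given $\psi_0$, let $u$ be the unique solution of $Pu=0$ with $u=0$ and $\nabla_\mathfrak{n}u=\gamma^\mu n_\mu\psi_0$ on $\Sigma$ supplied by that theorem, and set $\psi:=\widetilde{D}u$, so that $D\psi=D\widetilde{D}u=Pu=0$. To verify the initial condition I would use that $u|_\Sigma=0$ forces the derivatives of $u$ tangent to $\Sigma$ to vanish there, so on $\Sigma$ only the normal derivative survives and $\gamma^\mu\nabla_\mu u=-\gamma^\mu n_\mu\nabla_\mathfrak{n}u$; substituting the data and using $\gamma^\mu n_\mu\gamma^\nu n_\nu=g^{\mu\nu}n_\mu n_\nu\,\mathrm{id}=-\mathrm{id}$ (the normal being unit timelike) yields $\psi|_\Sigma=\gamma^\mu\nabla_\mu u|_\Sigma=\psi_0$. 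Conversely, to see that every Dirac solution arises this way I would run the argument backwards: from $D\psi=0$ with $\psi|_\Sigma=0$ the same tangential-vanishing argument gives $\nabla_\mathfrak{n}\psi|_\Sigma=0$, while $P\psi=\widetilde{D}D\psi=0$, so uniqueness in Theorem \ref{thmCauchy} forces $\psi=0$; this uniqueness for $D$ then identifies any solution with $\widetilde{D}u$ for the corresponding $u$.

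Third, for Green-hyperbolicity and formal self-adjointness I would argue as follows. Since $P$ is normally hyperbolic it is Green-hyperbolic with unique operators $G_P^\pm$, and $\widetilde{D}$ commutes with $P$ because $\widetilde{D}P=\widetilde{D}D\widetilde{D}=P\widetilde{D}$, so by uniqueness of $G_P^\pm$ one has $\widetilde{D}G_P^\pm=G_P^\pm\widetilde{D}$ on compactly supported sections. Defining $G_D^\pm:=\widetilde{D}G_P^\pm$, I would check the axioms directly: $DG_D^\pm f=D\widetilde{D}G_P^\pm f=f$, then $G_D^\pm Df=G_P^\pm\widetilde{D}Df=G_P^\pm Pf=f$, and $\mathrm{supp}(G_D^\pm f)\subseteq\mathrm{supp}(G_P^\pm f)\subseteq J_M^\pm(\mathrm{supp}(f))$ since the differential operator $\widetilde{D}$ cannot enlarge supports, exhibiting $D$ as Green-hyperbolic in the sense of Definition \ref{Ghyp}. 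For self-adjointness I would first establish the pointwise skew relation $\langle\gamma^\mu\chi_1,\chi_2\rangle=-\langle\chi_1,\gamma^\mu\chi_2\rangle$, which follows from the listed property ${\gamma^a}^{\mathrm{T}}=-C\gamma^a C^{-1}$ together with $\langle\langle\chi_1,\chi_2\rangle\rangle=-i\chi_1^{\mathrm{T}}C\chi_2$ on Majorana spinors and the real frame decomposition $\gamma^\mu=e^\mu_a\gamma^a$. Then, applying the divergence theorem to the compactly supported current $J^\mu=\langle u,\gamma^\mu v\rangle$ and using compatibility of $\nabla$ with the fibre bilinear form together with $\nabla\gamma=0$, integration by parts gives $(\gamma^\mu\nabla_\mu u,v)=(u,\gamma^\mu\nabla_\mu v)$; as the mass term is trivially symmetric for the pairing, $(Du,v)=(u,Dv)$ results. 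Finally, with $D^\ast=D$ Green-hyperbolic, the hypotheses of Theorem \ref{thmGreenOp} hold, so Proposition \ref{propCausalProp}, Lemma \ref{iso} and Proposition \ref{sympl2} apply and deliver the concluding claim.
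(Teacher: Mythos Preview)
Your argument is correct and, for the Lichnerowicz identity and the Cauchy-problem reduction, actually supplies details the paper omits (it simply cites Ref.~\citen{Dappiaggi:2009xj}). The one genuine point of divergence is in the construction of the Green operators for $D$: the paper sets $G^\pm\doteq G_P^\pm\circ\widetilde{D}$, so that $G^\pm D=G_P^\pm P=\mathrm{id}$ is immediate and the remaining identity $DG^\pm=\mathrm{id}$ is established by a duality argument through the pairing $(\cdot,\cdot)$ and the formal self-adjointness of $P$ and $\widetilde{D}$; you instead set $G_D^\pm\doteq\widetilde{D}\circ G_P^\pm$, so that $DG_D^\pm=PG_P^\pm=\mathrm{id}$ is immediate, and handle the other identity by first proving the intertwining relation $\widetilde{D}G_P^\pm=G_P^\pm\widetilde{D}$ via uniqueness of solutions with past/future-compact support. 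Both routes are valid and in fact yield the same operators; yours is slightly more elementary in that it avoids the pairing machinery, while the paper's fits more naturally with the emphasis on formal adjoints in Subsection~\ref{subsectGreenOp}. One small point worth tightening: the ``uniqueness of $G_P^\pm$'' you invoke is really the statement that a solution of $Pw=0$ with past- (resp.\ future-) spacelike-compact support vanishes, which follows from Theorem~\ref{thmCauchy}; it would be good to make that explicit.
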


\begin{proof}
For the first part of the theorem we refer the interested reader to Lemma 2.4 in Ref. \citen{Dappiaggi:2009xj}, where it is proven for arbitrary Dirac spinors and not only for Majorana ones, 
while we focus the attention on the last statement.

First of all $(D u,v)=(u,Dv)$ follows via Stokes' theorem because the covariant derivative on spinors is per construction compatible with the Levi-Civita connection, 
namely $\nabla\langle u,v\rangle=\langle\nabla u,v\rangle+\langle u,\nabla v\rangle$, 
and because $\langle \gamma^\mu u, v \rangle = -\langle u, \gamma^\mu v \rangle$, 
which follows from ${\gamma^a}^{\mathrm{T}} = - C\gamma^a C^{-1}$. 
This argument shows that $\widetilde{D}$ is formally self-adjoint too.

We show how to construct advanced and retarded Green operators for $D$. 
By direct inspection we realize that $P$ is normally hyperbolic and formally self-adjoint. 
Hence, according to Theorem \ref{thmGreenOp}, 
it has unique advanced and retarded Green operators $G_P^+,G_P^-$ 
fulfilling $(G_P^\pm u,v)=(u,G_P^\mp v)$ for all $u,v\in\Gamma_0(M,E)$. 
We define $G^\pm=G_P^\pm\circ\widetilde{D}$ 
and we claim that these are advanced and retarded Green operators for $D$. 
In fact, for each $u\in\Gamma_0(DM)$, $G^\pm Du=G_P^\pm Pu=u$ 
and the support of $G^\pm u$ is included in the causal future/past of 
$\mathrm{supp}(\widetilde{D}u)\subseteq\mathrm{supp}(u)$. 
$DG^\pm u=u$ is still to be checked for each $u\in\Gamma_0(DM)$. 
To this scope, considering $u,v\in\Gamma_0(M,E)$, we deduce the following chain of identities:
\begin{equation}
\begin{array}{rcl}
(v,DG^\pm u) & = & (PG_P^\mp v,DG^\pm u) \\
& = & (DG_P^\mp v,PG^\pm u) \\
& = & ((DG_P^\mp v,\widetilde{D}u) \\
& = & ((PG_P^\mp v,u) \\
& = & (v,u).
\end{array}
\end{equation}
Non-degeneracy of $(\cdot,\cdot)$ and arbitrariness of $v$ entail the thesis.
\end{proof}

As the Dirac operator $D$ is formally self-adjoint and Green-hyperbolic, applying Lemma \ref{iso}, we can characterize the space $\mathcal{S}_{sc}(M)$ of spacelike-compact Majorana solutions of the Dirac equation as being isomorphic to $\Gamma_0(DM_\mathbb{R})/D(\Gamma_0(DM_\mathbb{R}))$. Due to Proposition \ref{sympl2}, we can endow $\mathcal{S}_{sc}(M)$ with an inner product, {\it i.e.}~$\sigma:\mathcal{S}_{sc}(M)\otimes\mathcal{S}_{sc}(M)\to\mathbb{R}$ defined by
\begin{equation}\label{syDirac}
\sigma(\Psi_f,\Psi_h)=(f,Gh)=\int_M\,\mathrm{vol}_M\langle f,Gh\rangle,
\end{equation}
where $\Psi_f,\Psi_h\in\mathcal{S}_{sc}(M)$ and $f,h\in \Gamma_0(DM_\mathbb{R})$ such that $Gf=\Psi_f,Gh=\Psi_h$.

\subsubsection{Proca fields}

As a last example, we consider massive spin $1$ fields. In the literature, this has been discussed by a few authors\cite{Furlani:1999kq, Fewster:2003ey, Dappiaggi:2011ms}, mostly because it is the simplest example of a vector boson. In this case the underlying vector bundle is the cotangent space $T^\ast M$. Since we consider only those $M$ which are globally hyperbolic spacetimes, this entails that $TM$ is trivial and thus $T^\ast M$ too, that is $T^*M\simeq M\times\mathbb{R}^4$. We endow this vector bundle with the Bosonic bilinear form induced by the (inverse) metric.

\begin{defi}\label{Proca}
Let $M$ be a globally hyperbolic spacetime. A {\it Proca field} is a section $A\in\Gamma(T^\ast M)=\Omega^1(M)$, that is a differential $1$-form, fulfilling the following equation of motion:
\begin{equation}\label{Proca1}
PA\doteq\delta dA+m^2 A=0,
\end{equation}
where $m^2>0$, $d:\Omega^k(M)\to\Omega^{k+1}(M)$ is the {\it exterior derivative}, whereas $\delta\doteq (-1)^k\ast^{-1}d\ast:\Omega^k(M)\to\Omega^{k-1}(M)$ is the {\it codifferential} defined out of the {\it Hodge dual} $\ast$. 
\end{defi} 

Contrary to \eqref{KG}, it is not manifest at first glance that the dynamics of a Proca field is ruled by a normally hyperbolic operator. Nonetheless, if we remember that $d^2=0$, hence $\delta^2=0$ too, applying the codifferential to the equation of motion, one obtains $\delta A=0$. In other words every solution of \eqref{Proca1}, also solves $(\Box+m^2)A=0$, where the operator $\Box\doteq\delta d+d\delta$ is the so-called {\it Laplace-de Rham wave operator} and it coincides with $\Box_\nabla$ defined in Proposition \ref{propBoxDecomposition}, $\nabla$ being the Levi-Civita connection acting on sections of $T^\ast M$. In other words a Proca field satisfies a normally hyperbolic equation. More precisely, we can translate Definition \ref{Proca} into a normally hyperbolic equation together with a constraint:
\begin{equation}\label{CauchyProca}
\widetilde{P}A\doteq\Box A + m^2 A=0,\quad\delta A=0.
\end{equation}
Although $\widetilde{P}=\Box+m^2$ is normally hyperbolic and thus it possesses unique advanced and retarded Green operators $\widetilde{G}^+,\widetilde{G}^-$, we need to cope with the additional constraint $\delta A=0$ in order to generate the full space of solutions with spacelike-compact support. This result is obtained in the following theorem, which exploits a trick also used in Refs. \citen{Furlani:1999kq, Fewster:2003ey} in order to define advanced and retarded Green operators for $P$. 

Before we proceed to the construction of advanced and retarded Green operators for $P$, 
let us specify the Bosonic non-degenerate bilinear form we consider for the Proca field. 
This can be defined on the bundle $\bigwedge^k(T^\ast M)$ of $k$-forms 
as $\langle\cdot,\cdot\rangle=\ast^{-1}(\cdot\wedge\ast\,\cdot)$. 
For each $p\in M$ and for each $\alpha,\beta\in\bigwedge^k(T_p^\ast M)$, 
this simply reads $\langle\alpha,\beta\rangle=g^{\mu_1\nu_1}(p)\cdots 
g^{\mu_k\nu_k}(p)\alpha_{\mu_1\dots\mu_k}\beta_{\nu_1\dots\nu_k}$, 
where we used local coordiantes at $p$ on the right hand side. 
As usual, $\langle\cdot,\cdot\rangle$ induces a non-degenerate pairing $(\cdot,\cdot)$ between sections. 
To construct the symplectic form of the Proca field, we need this operation only for $k=1$. 
Nonetheless, the following proof will involve forms of different order. 
This is the reason that induced us to present the definition for arbitrary $k$.

\begin{theo}
$P$ is Green-hyperbolic and formally self-adjoint, 
namely $(\omega,P\omega^\prime)=(P\omega,\omega^\prime)$ 
for each $\omega,\omega^\prime\in\Omega^1(M)$ 
such that $\mathrm{supp}(\omega)\cap\mathrm{supp}(\omega^\prime)$ is compact.
Hence all conclusions of Subsection \ref{subsectGreenOp} hold true for $P$. 
\end{theo}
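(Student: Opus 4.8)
The plan is to separate the two assertions: formal self-adjointness is quick, whereas the existence of advanced and retarded Green operators is the real work. For self-adjointness I would recall that, with respect to the pairing $(\cdot,\cdot)$ induced by $\langle\cdot,\cdot\rangle=\ast^{-1}(\cdot\wedge\ast\,\cdot)$, the codifferential $\delta$ is the formal adjoint of the exterior derivative $d$, that is $(d\omega,\eta)=(\omega,\delta\eta)$ whenever $\mathrm{supp}(\omega)\cap\mathrm{supp}(\eta)$ is compact; this is just Stokes' theorem applied to $d(\omega\wedge\ast\,\eta)$. Granting this, $(\omega,P\omega')=(\omega,\delta d\omega')+m^2(\omega,\omega')=(d\omega,d\omega')+m^2(\omega,\omega')$, and the right-hand side is manifestly symmetric under $\omega\leftrightarrow\omega'$, so $P^\ast=P$.

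For Green-hyperbolicity the obstacle is that $P=\delta d+m^2$ is \emph{not} normally hyperbolic: its principal part $\delta d$ degenerates on exact (longitudinal) forms, so Theorem \ref{thmCauchy} does not apply to $P$ itself. The trick, following Refs. \citen{Furlani:1999kq, Fewster:2003ey}, is to borrow the Green operators $\widetilde{G}^\pm$ of the normally hyperbolic operator $\widetilde{P}=\Box+m^2$, which exist and are unique by the discussion leading to Theorem \ref{thmGreenOp}, and to correct them by a term built from $d\delta$. Concretely I would set $G^\pm\doteq(\mathrm{id}+m^{-2}d\delta)\widetilde{G}^\pm$, a definition that genuinely uses $m^2>0$, and claim these are the advanced/retarded Green operators for $P$.

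The verification rests on two structural facts I would establish first: (a) $P=\widetilde{P}-d\delta$ and $Pd\delta=m^2 d\delta$, both immediate from $d^2=\delta^2=0$; and (b) since $\Box$ commutes with $d$ and with $\delta$, so does $\widetilde{P}$, whence by uniqueness of Green operators $d\,\widetilde{G}^\pm=\widetilde{G}^\pm d$ and $\delta\,\widetilde{G}^\pm=\widetilde{G}^\pm\delta$ on compactly supported forms. With these, $PG^\pm f=f$ falls out after a one-line cancellation. The more delicate identity is $G^\pm Pf=f$: expanding $\widetilde{G}^\pm Pf=f-d\delta\,\widetilde{G}^\pm f$ and applying $\mathrm{id}+m^{-2}d\delta$ produces the term $m^{-2}d\delta d\delta\,\widetilde{G}^\pm f$, which I would simplify by noting that $\delta$ annihilates $0$-forms, so on the $0$-form $\delta\,\widetilde{G}^\pm f$ the Laplace--de Rham operator reduces to $\delta d$; this rewrites $d\delta d\delta$ as $\Box\, d\delta$ and lets me trade $\Box\,\widetilde{G}^\pm$ for $\mathrm{id}-m^2\widetilde{G}^\pm$, after which every correction term cancels and $G^\pm Pf=f$ survives. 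I expect this bookkeeping to be the only genuinely subtle point.

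Finally, the support property $\mathrm{supp}(G^\pm f)\subseteq J_M^\pm(\mathrm{supp}(f))$ is inherited: $\widetilde{G}^\pm$ already enjoys it because $\widetilde{P}$ is normally hyperbolic, and $\mathrm{id}+m^{-2}d\delta$ is a differential operator, hence support-non-increasing. This establishes that $P$ is Green-hyperbolic. Since $P^\ast=P$ is then Green-hyperbolic as well, Theorem \ref{thmGreenOp} yields uniqueness of $G^\pm$ together with the adjunction $(G^\pm u,v)=(u,G^\mp v)$, so that all results of Subsection \ref{subsectGreenOp}, in particular Lemma \ref{iso} and Proposition \ref{sympl2}, apply to the Proca field.
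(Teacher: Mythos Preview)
Your argument is correct and uses the same construction as the paper: both define $G^\pm=(\mathrm{id}+m^{-2}d\delta)\widetilde G^\pm$ and verify the three Green-operator axioms, with the support property and $PG^\pm=\mathrm{id}$ checked identically.

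The only substantive difference lies in the verification of $G^\pm P=\mathrm{id}$. You do it by a direct computation, first establishing the intertwining $d\widetilde G^\pm=\widetilde G^\pm d$, $\delta\widetilde G^\pm=\widetilde G^\pm\delta$ (via uniqueness of Green operators for the normally hyperbolic $\widetilde P$ on forms of each degree) and then exploiting $\delta=0$ on $0$-forms to rewrite $d\delta d\delta$ as $d\Box\delta$. The paper instead uses a duality argument: writing $L=\mathrm{id}+m^{-2}d\delta$, it observes $L P=\widetilde P$ and $[\widetilde P,L]=0$, and then shows $(\omega',G^\pm P\omega)=(\omega',\omega)$ for all test forms $\omega'$ using only formal self-adjointness of $\widetilde P$ and the relation $(\widetilde G^\pm u,v)=(u,\widetilde G^\mp v)$. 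The paper's route is a little slicker in that it never needs the intertwining of $\widetilde G^\pm$ with $d$ and $\delta$ separately; your route is more hands-on and makes the cancellation mechanism completely explicit. Both are perfectly valid.
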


\begin{proof}
We have only to show that $P$ is formally self-adjoint and Green hyperbolic. 
Formal self-adjointness follows from Stokes' theorem: 
For arbitrary $\omega\in\Omega^{k-1}(M)$ and $\omega^\prime\in\Omega^{k}(M)$ we have
\begin{equation}
d(\omega\wedge\ast\omega^\prime)=d\omega\wedge\ast\omega^\prime+(-1)^{k-1}\omega\wedge d\ast\omega^\prime=d\omega\wedge\ast\omega^\prime-\omega\wedge\ast\delta\omega^\prime.
\end{equation}
If we assume that the intersection of the supports of $\omega$ and $\omega^\prime$ is compact, 
integrating the last equation, we read $0=(d\omega,\omega^\prime)-(\omega,\delta\omega^\prime)$. 
This shows that $d$ and $\delta$ are formal adjoints of each other, 
in particular $P=\delta d+m^2$ turns out to be formally self-adjoint.

As already mentioned above, $\widetilde{P}$ is normally hyperbolic, 
hence, according to Theorem \ref{thmGreenOp}, 
there is a unique advanced/retarded causal propagator $\widetilde{G}^\pm$ for $\widetilde{P}$. 
In the rest of the proof for convenience we denote $(\mathrm{id}_{\Omega^1_0(M)}+m^{-2}d\delta)$ with $L$. 
We claim that $G^\pm=L\circ\widetilde{G}^\pm$ is an advanced/retarded Green operator for $P$: 
The support property for Green operators is automatically implemented by this formula. 
On account of nilpotency of $d$, for $\omega\in\Omega_0^1(M)$ 
we have $PG^\pm\omega=\widetilde{P}\widetilde{G}^\pm\omega=\omega$. 
We have still to check $G^\pm P\omega=\omega$. 
To prove this, we first note that the argument showing formal self-adjointness of $P$ 
entails the same property for $\widetilde{P}$ too. 
Moreover, from $d^2=0$ and $\delta^2=0$ 
we deduce that $\widetilde{P}$ commutes with $L=\mathrm{id}_{\Omega^1_0(M)}+m^{-2}d\delta$ 
and that $L\circ P=\widetilde{P}$. Then, taking $\omega,\omega^\prime\in\Omega^1_0(M)$, we have
\begin{equation}
\begin{array}{rcl}
(\omega^\prime,G^\pm P\omega) & = & (\widetilde{P}\widetilde{G}^\mp\omega^\prime,G^\pm P\omega) \\
& = & (\widetilde{G}^\mp\omega^\prime,\widetilde{P}G^\pm P\omega) \\
& = & (\widetilde{G}^\mp\omega^\prime,L\widetilde{P}\widetilde{G}^\pm P\omega) \\
& = & (\widetilde{G}^\mp\omega^\prime,\widetilde{P}\omega) \\
& = & (\omega^\prime,\omega).
\end{array}
\end{equation}
Non degeneracy of $(\cdot,\cdot)$, together with the freedom in the choice of $\omega^\prime$, entails the thesis.
\end{proof}
The last theorem entails that 
all the results of Subsection \ref{subsectGreenOp} are recovered in the case of the Proca field, 
meaning that the space $\mathcal{S}_{sc}(M)$ of spacelike compact solutions of the Proca equation 
is isomorphic to the quotient space $\Omega^1_0(M)/P(\Omega_0^1(M))$. 
Moreover Proposition \ref{sympl2} provides the symplectic form for the Proca field, 
namely $\sigma:\mathcal{S}_{sc}\otimes\mathcal{S}_{sc}\to\mathbb{R}$ defined by
\begin{equation}
\sigma(A_\omega,A_{\omega^\prime})=(\omega,G\omega^\prime)
=\int_M\omega\wedge\ast G\omega^\prime,
\end{equation}
where $A_\omega,A_{\omega^\prime}\in\mathcal{S}_{sc}(M)$ 
and $\omega,\omega^\prime\in\Omega^1_0(M)$ such that 
$G\omega=A_\omega,G\omega^\prime=A_{\omega^\prime}$.\\

\noindent Before concluding the section, we stress that we have discussed only the simplest possible examples of free fields whose classical dynamics is ruled by a Green-hyperbolic operator. In the case of spin $1$ we considered, for example, only the massive case, since the massless one is related to Maxwell's equations whose dynamics is ruled by a Green-hyperbolic operator only if one exploits gauge invariance. At a quantum level this becomes a rather complicated topic which is still hotly debated. In this review we will not enter into the analysis of the role of gauge invariance and we suggest an interested reader to refer to Ref. \citen{Hack:2012dm} for a thorough analysis of its connections to Green-hyperbolic operators. For electromagnetism, one can consult Refs. \citen{Benini:2013tra, Ciolli:2013pta, Dappiaggi:2011cj, Dappiaggi:2011zs, Dappiaggi:2011ms, Dimock:1992ff, Fewster:2003ey, Pfenning:2009nx, Sanders:2012sf}. For the case of spin $3/2$ fields, there are several obstructions to their quantization on a generic globally hyperbolic spacetime and, for this reason, we avoid discussing the associated classical dynamics. We refer to Ref. \citen{Hack:2011yv} and also to Refs. \citen{Hack:2012dm, Schenkel:2011nv} for a critical survey and for an analysis of the connection to supergravity theories. On the contrary spin $2$ fields have been for long neglected and only recently they have been discussed in the literature, see Ref. \citen{Fewster:2012bj}.

\section{Quantization: Algebra and States}
Goal of this section is to discuss the quantization of the free field theories, whose classical dynamics has been developed in the previous section. We shall work within the framework of algebraic quantum field theory as originally envisaged by Haag and Kastler in Ref. \citen{Haag:1963dh}. First formulated under the assumption that the underlying background is Minkowski spacetime, such approach is based on a two-step procedure. The first consists of associating to a physical system a suitable topological algebra $\mathcal{A}$ together with a map $*:\mathcal{A}\to\mathcal{A}$ such that $*\circ *$ coincides with the identity. The resulting {\em $*$-algebra} is an abstract realization of the observables of the underlying physical system. The second step, instead, is based on the choice of a state, namely a continuous positive functional on $\mathcal{A}$ which allows to represent $\mathcal{A}$ itself in terms of suitable linear operators on a Hilbert space. In the forthcoming discussion we will make mathematically precise these statements, we will show how the algebra $\mathcal{A}$ is explicitly constructed for a given free field theory and we will introduce both the notion of an algebraic state and the constraints it has to satisfy in order to be physically sensible. 

\subsection{The field algebra} 
As we have mentioned, the seminal paper of Haag and Kastler did not elaborate on the possibility to quantize a field theory on a curved background. The first axiomatic, algebraic formulation of a quantum theory on a curved background is due to Dimock\cite{Dimock}. He focused his attention on extending the core of Ref. \citen{Haag:1963dh} by writing down a set of axioms which any sensible algebra of observables should satisfy regardless of the details of the background metric. We will not report them as in the original paper. The reason is of technical nature: It is required that one associates to a field theory a $C^*$-algebra, see for example Ref. \citen{Haag:1992hx} for the definition. Such hypothesis is advantageous particularly because, after choosing a state, the algebra can be represented in terms of bounded linear operators on a Hilbert space. In our case, we will be interested in constructing an algebra of observables which can be extended so to include also the curved background counterpart of the normal ordered field, used on Minkowski spacetime to study interactions in a perturbative regime. For this reason the operators, we will be considering, are not bounded and, hence, the underlying structure is that of a $*$-algebra. Therefore, for a given globally hyperbolic spacetime $(M,g)$, we call {\it algebra of local observables} any algebra $\mathcal{A}(M)$ such that the following axioms are fulfilled:
\begin{enumerate}
\item To every contractible open bounded set $\mathcal{O}\subseteq M$ one associates a $*$-algebra $\mathcal{A}(\mathcal{O})$. Such assignment is {\it isotonous}, that is, if $\mathcal{O}\subseteq\mathcal{O}'$ then $\mathcal{A}(\mathcal{O})\subseteq\mathcal{A}(\mathcal{O}')$. The $*$-algebra of local observables $\mathcal{A}(M)$ is defined as the union of all $\mathcal{A}(\mathcal{O})$ with $\mathcal{O}\subseteq M$ contractible open bounded; 
\item If $\mathcal{O}$ and $\mathcal{O}'$ are causally separated, that is $\mathcal{O}\cap J_M(\mathcal{O}')=\emptyset$, then $[a,b]=0$ for all $a\in\mathcal{A}(\mathcal{O})$ and for all $b\in\mathcal{A}(\mathcal{O}')$, where the commutator is evaluated in the full algebra $\mathcal{A}(\mathcal{M})$; 
\item For any isometry of $M$, that is a diffeomorphism $\iota:M\to M$ such that $\iota^*g=g$, there exists an isomorphism $\alpha_\iota:\mathcal{A}(M)\to\mathcal{A}(M)$ for which $\alpha_\iota(\mathcal{A}(\mathcal{O}))=\mathcal{A}(\iota(\mathcal{O}))$, for all open bounded set $\mathcal{O}\subseteq M$. Furthermore, if $\iota$ is the identity map, so is $\alpha_\iota$, whereas, if we consider two isometries $\iota$ and $\iota'$, $\alpha_\iota\circ\alpha_{\iota'}=\alpha_{\iota\circ\iota'}$.
\end{enumerate}
It is noteworthy that the second axiom implements the property of causality in $\mathcal{A}(M)$ since it ensures that observables which are localized in spacetime regions which are causally disconnected are commuting. The third axiom guarantees, instead, the compatibility between $\mathcal{A}(M)$ and the symmetries of the background. It translates on a curved background the standard requirement of covariance under the action of the Poincar\'e group, which is at the heart of any textbook about quantum field theories on Minkowski spacetime. 

One of the biggest advantages of the given set of axioms is the possibility to show the existence of a concrete construction of an algebra of local observables for free fields. We recall that a Bosonic (Fermionic) classical free field theory is (in the case of a Hermitian field) completely specified in terms of a real vector bundle $E$ endowed with a Bosonic (Fermionic) non-degenerate bilinear form and of a partial differential operator $P$ which is formally self-adjoint with respect to this bilinear form and furthermore Green-hyperbolic. We can straightforwardly construct an algebra which encodes the simplest observables of the associated quantized field theories as follows:

\begin{defi}\label{fieldal}
We call {\it field algebra} $\mathcal{F}(M)$ of a Bosonic or Fermionic field $\Phi$ specified by the vector bundle $E$ endowed with the Bosonic or Fermionic bilinear form $\langle \cdot, \cdot\rangle$ and the Green-hyperbolic formally self-adjoint operator $P$ the $*$-algebra freely generated by a unit $\mathbb{I}$ and by the symbols $\widehat{\Phi}(f)$, $f\in \Gamma_0(M,E)_\mathbb{C}\doteq\Gamma_0(M,E)\otimes_\mathbb{R}\mathbb{C}$, from which we single out the ideal generated by the following relations:
\begin{itemize}
\item $\widehat{\Phi}(k_1f_1+k_2f_2)=k_1\widehat{\Phi}(f_1)+k_2\widehat{\Phi}(f_2)$ for all $k_1,k_2\in\mathbb{C}$ and for all $f_1,f_2\in \Gamma_0(M,E)_\mathbb{C}$;
\item $\widehat{\Phi}(Pf)=0$ for all $f\in \Gamma_0(M,E)_\mathbb{C}$;
\item $\widehat{\Phi}(f)^*=\widehat{\Phi}(\overline{f})$, for all $f\in \Gamma_0(M,E)_\mathbb{C}$, where $\overline{\cdot}$ denotes complex conjugation;
\item $[\widehat{\Phi}(f_1),\widehat{\Phi}(f_2)]_\mp\doteq\widehat{\Phi}(f_1)\widehat{\Phi}(f_2)\mp\widehat{\Phi}(f_2)\widehat{\Phi}(f_1)=i\sigma(Gf_1,Gf_2)\mathbb{I}=i(f_1,Gf_2)\mathbb{I}$ for all $f_1,f_2\in \Gamma_0(M,E)_\mathbb{C}$, where $(\cdot,\cdot)$ is the bilinear form on sections of $E$ induced by $\langle\cdot,\cdot\rangle$. Here, the minus (plus) sign applies in the Bosonic (Fermionic) case.
\end{itemize}
\end{defi}

Note that by Proposition \ref{sympl2} the signs in the canonical (anti)commutation relations $[\widehat{\Phi}(f_1),\widehat{\Phi}(f_2)]_\mp=i\sigma(Gf_1,Gf_2)$ are consistent. The symbols $\widehat{\Phi}(f)$ can be interpreted as quantizations of the classical observables (functionals) on smooth solutions $\mathcal{S}(M)$ of $P\Phi=0$ defined by $\mathcal{O}_f:\mathcal{S}(M)\ni \Phi\mapsto (f,\Phi)$. Thus, interpreting the ``smeared field'' generators $\widehat{\Phi}(f)$ as $(f,\widehat\Phi)$, the above algebraic relations can be rephrased as follows for ``unsmeared fields'' $\widehat\Phi(x)$:

\begin{itemize}
\item $P\widehat{\Phi}(x)=0$;
\item $\widehat{\Phi}(x)^*=\widehat{\Phi}(x)$;
\item $[\widehat{\Phi}(x),\widehat{\Phi}(y)]_\mp=iG(x,y)\mathbb{I}$, where $G(x,y)$ is the bidistribution defined by $$\int_M\mathrm{vol}_M(x)\int_M\mathrm{vol}_M(y)\; f(x)^\dagger(G(x,y)h(y))\doteq(f,Gh)$$
for all $f,h\in\Gamma_0(M,E)$ and $f^\dagger\in\Gamma_0(M,E^\ast)$ is here defined by $f^\dagger(v)\doteq\langle f, v\rangle$ for all $v\in E$.
\end{itemize}

A potential obstruction arises in the case of Fermionic theories. In the next section we shall discuss algebraic states on $\mathcal{F}(M)$, which correspond to representations of the symbols $\widehat{\Phi}(f)$ as linear operators on a Hilbert space. A necessary condition for such states/representations to exist (``unitarity condition'') in the Fermionic case is that $i\sigma(Gf,Gf)\geq 0$ for all  $f\in\Gamma_0(M,E)$ because for $f_1=f_2=f\in\Gamma_0(M,E)$ the left hand side of the anticommutation relations is equal to $2 \widehat{\Phi}(f)^2$, which is a positive operator in any Hilbert space representation. Fortunately, in the case of a Majorana field it is not difficult to prove that this consistency condition is met. See {\it e.g.}~Refs. \citen{Dappiaggi:2009xj, Sanders} for this and further details on the quantization of Majorana and Dirac fields in the algebraic language.

It is easy to verify that $\mathcal{F}(M)$ gives rise to a local algebra of observables. For every contractible open bounded set $\mathcal{O}\subseteq M$, let $\mathcal{F}(\mathcal{O})$ be the subalgebra of $\mathcal{F}(M)$ generated by those $f\in \Gamma_0(M,E)_\mathbb{C}$ such that $\supp(f)\subseteq\mathcal{O}$. Isotony is then guaranteed per construction. The axiom of locality is a by-product, instead, of the commutation relations between the generators. As a matter of facts, assuming that $\mathcal{O},\mathcal{O}'$ are causally separated, we deduce $[\widehat{\Phi}(f_1),\widehat{\Phi}(f_2)]_\mp=i(f_1,Gf_2)=0,$
whenever $\supp(f_1)\subseteq\mathcal{O}$ and $\supp(f_2)\subseteq\mathcal{O}'$. In the Bosonic case this is already sufficient, whereas in the Fermionic case single fields only anticommute, as they are not observable. However, it is straightforwardly shown that {\it e.g.}~elements of the subalgebra of $\mathcal{F}(M)$ generated by even powers of single fields commute at spacelike separations.

Eventually, if we consider an isometry $\iota:M\to M$, we can construct a map $\alpha_\iota:\mathcal{F}(M)\to\mathcal{F}(M)$ by defining it on the generators, that is $\alpha_\iota(\mathbb{I})=\mathbb{I}$ and $\alpha_\iota(\widehat{\Phi}(f))=\widehat{\Phi}(f\circ\iota^{-1})$ for all $f\in \Gamma_0(M,E)_\mathbb{C}$. The details of the proof that $\alpha_\iota$ is indeed an isomorphism are left to the reader. 

Let us stress that, when $M$ is Minkowski spacetime, the procedure presented above coincides with the usual quantization of Bosonic (Fermionic) free field theories via canonical (anti)commutation relations. The advantage of this presentation relies on the fact that it can be directly applied without any change to general globally hyperbolic spacetimes.

\subsection{Algebraic states and Hilbert space representations}

In the last part of this review we introduce the notion of algebraic state and the criteria which allow us to select those which are physically sensible. As a starting point we consider any topological $*$-algebra $\mathcal{A}$ with a unit element $\mathbb{I}$. $\mathcal{A}$ is not necessarily the field algebra of Definition \ref{fieldal}, although of course this represents our primary example.

\begin{defi}\label{state}
An algebraic {\it state} is a continuous linear functional $\omega:\mathcal{A}(M)\to\mathbb{C}$ such that 
$$\omega(\mathbb{I})=1,\qquad\omega(a^*a)\geq 0,\;\forall a\in\mathcal{A}(M).$$
\end{defi}

A functional fulfilling the first condition is said to be {\it normalized}, whereas, when the second holds true, it is called {\it positive}. Notice that, in concrete examples, the condition, which is usually rather complicated to check, is the second one since it is highly non-linear. For those, who are used to the standard approach to quantum field theory on Minkowski background, it is at first glance hard to believe that the assignment of a pair $(\mathcal{A},\omega)$ is indeed tantamount to quantizing a local algebra of observables. Since this is a key step in the algebraic approach to quantum field theory, we shall discuss this point thoroughly. In particular we shall prove the Gelfand-Naimark-Segal (GNS) theorem. Our analysis will follow closely the script of Klaus Fredenhagen\cite{Fredenhagen}. As a starting point, we show that, whenever we represent a $*$-algebra on a Hilbert space via linear operators, we can automatically construct several states:

\begin{lem}
Let $\mathcal{A}$ be any topological $*$-algebra with an identity element and $\mathcal{H}$ a Hilbert space with scalar product $(\cdot,\cdot)$, such that there exists a faithful strongly continuous representation $\pi:\mathcal{A}\to\mathcal{L}(\mathcal{D})$, where $\mathcal{D}$ is a dense subspace of $\mathcal{H}$, $\mathcal{L}(\mathcal{D})$ is the space of continuous linear operators on $\mathcal{D}$ and where $\pi(a^*)=\pi(a)^*$ for all $a\in\mathcal{A}$. Then, for any $\psi\in\mathcal{D}$ of unit norm, the functional $\omega_\psi:\mathcal{A}\to\mathbb{C}$ defined by $\omega_\psi(a)\doteq(\psi,\pi(a)\psi)$ is a state on $\mathcal{A}$.
\end{lem}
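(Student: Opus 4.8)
The plan is to verify directly that $\omega_\psi$ meets each of the four requirements encoded in Definition \ref{state}: linearity, continuity, normalization, and positivity. Each of these follows from the structural properties of $\pi$ as a strongly continuous, unit-preserving $*$-representation, combined with the elementary behaviour of the inner product $(\cdot,\cdot)$ on $\mathcal{H}$; faithfulness of $\pi$, incidentally, plays no role here. First I would dispatch linearity and continuity. Linearity is immediate: since $\pi$ is in particular a linear map and $(\cdot,\cdot)$ is linear in its second slot, one has $\omega_\psi(\lambda a+\mu b)=(\psi,\pi(\lambda a+\mu b)\psi)=\lambda(\psi,\pi(a)\psi)+\mu(\psi,\pi(b)\psi)$ for all $a,b\in\mathcal{A}$ and $\lambda,\mu\in\mathbb{C}$. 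For continuity, I would note that $\omega_\psi$ factors as $a\mapsto\pi(a)\psi\mapsto(\psi,\pi(a)\psi)$; the first arrow is continuous precisely because $\pi$ is strongly continuous, while the second is continuous as it is the inner product against the fixed vector $\psi$. Hence $\omega_\psi$ is a continuous linear functional.

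Normalization is equally short. Because $\pi$ is a representation of a unital algebra, it sends the identity $\mathbb{I}$ to the identity operator $\mathrm{id}_\mathcal{D}$, so that $\omega_\psi(\mathbb{I})=(\psi,\psi)=\|\psi\|^2=1$, using that $\psi$ has unit norm by assumption.

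The heart of the argument, and the only place where the $*$-structure enters, is positivity. Here I would exploit both that $\pi$ is multiplicative and that it intertwines the algebraic involution with the Hilbert space adjoint, writing, for an arbitrary $a\in\mathcal{A}$,
\[
\omega_\psi(a^*a)=(\psi,\pi(a^*)\pi(a)\psi)=(\psi,\pi(a)^*\pi(a)\psi)=(\pi(a)\psi,\pi(a)\psi)=\|\pi(a)\psi\|^2\geq 0,
\]
where the crucial middle equality is the hypothesis $\pi(a^*)=\pi(a)^*$ and the passage to the penultimate term is the defining property of the adjoint. There is no genuine obstacle in this lemma; the single point deserving care is that this adjoint manipulation must be legitimate on the dense domain $\mathcal{D}$. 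This is guaranteed because $\pi(a)$ maps $\mathcal{D}$ into $\mathcal{D}$, so the vector $\eta\doteq\pi(a)\psi$ again lies in $\mathcal{D}$, and the adjoint identity $(\phi,\pi(a)^*\eta)=(\pi(a)\phi,\eta)$ may be invoked with $\phi=\psi$ and $\eta\in\mathcal{D}$. Having checked linearity, continuity, normalization and positivity, I would conclude that $\omega_\psi$ is a state on $\mathcal{A}$.
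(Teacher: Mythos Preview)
Your proof is correct and follows essentially the same route as the paper's: verify linearity and continuity from the corresponding properties of $\pi$, normalization from $\pi(\mathbb{I})=\mathrm{id}_\mathcal{D}$ and $\|\psi\|=1$, and positivity from the chain $(\psi,\pi(a^*a)\psi)=(\psi,\pi(a)^*\pi(a)\psi)=\|\pi(a)\psi\|^2\ge 0$. Your observation that faithfulness is irrelevant and your extra care about the adjoint on the dense domain $\mathcal{D}$ are welcome additions but do not change the underlying argument.
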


\begin{proof}
Let $\psi\in\mathcal{D}$ be any element such that $\|\psi\|_{\mathcal{H}}=1$. 
Let $\omega_\psi(a)\doteq (\psi,\pi(a)\psi)$. Per construction $\omega_\psi$ is linear and continuous 
since $\pi$ is linear and strongly continuous. 
$\omega_\psi(\mathbb{I})=1$ follows from $\|\psi\|_{\mathcal{H}}=1$ 
and $\pi(\mathbb{I})=\mathrm{id}_\mathcal{D}$, $\pi$ being a representation. 
To conclude we notice that 
\begin{equation*}
\omega_\psi(a^*a)\doteq (\psi,\pi(a^*a)\psi)=(\psi,\pi(a)^*\pi(a)\psi)=\|\pi(a)\psi\|_{\mathcal{H}}^2\geq0,
\end{equation*}
where we exploited $\pi(ab)=\pi(a)\pi(b)$ and $\pi(a^*)=\pi(a)^*$.
\end{proof}

Actually, we can even prove that every state on a topological $*$-algebra $\mathcal{A}$ with a unit element induces a Hilbert space representation of $\mathcal{A}$.

\begin{theo}
Let $\omega$ be a state on a topological $*$-algebra $\mathcal{A}$ with a unit element. 
There exists a dense subspace $\mathcal{D}$ of a Hilbert space $(\mathcal{H},(\cdot,\cdot))$, 
as well as a representation $\pi:\mathcal{A}\to\mathcal{L}(\mathcal{D})$ 
and a unit vector $\Omega\in\mathcal{D}$, 
such that $\omega=(\Omega,\pi(\cdot)\Omega)$ and $\mathcal{D}=\pi(\mathcal{A})\Omega$.
The {\em GNS triple} $(\mathcal{D},\pi,\Omega)$ is determined up to unitary equivalence.
\end{theo}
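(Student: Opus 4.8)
The plan is to manufacture the Hilbert space directly out of the algebra, using the state $\omega$ to supply an inner product. First I would introduce on $\mathcal{A}$ the sesquilinear form $\langle a,b\rangle\doteq\omega(a^*b)$, which is positive semi-definite precisely because $\omega$ is positive. In general this form is degenerate, so the next step is to quotient out its null space, the so-called Gelfand ideal $\mathcal{N}\doteq\{a\in\mathcal{A}:\omega(a^*a)=0\}$. I would first establish the Cauchy--Schwarz inequality $|\omega(a^*b)|^2\leq\omega(a^*a)\,\omega(b^*b)$ for the semi-definite form and use it to show that $\mathcal{N}$ is a linear subspace. Crucially, I would then verify that $\mathcal{N}$ is a \emph{left} ideal: for $b\in\mathcal{N}$ and arbitrary $a\in\mathcal{A}$, applying Cauchy--Schwarz to the pair $(b,a^*ab)$ gives $|\omega(b^*a^*ab)|^2\leq\omega(b^*b)\,\omega((a^*ab)^*a^*ab)=0$, so that $ab\in\mathcal{N}$.

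With $\mathcal{N}$ a left ideal in hand, the quotient $\mathcal{D}\doteq\mathcal{A}/\mathcal{N}$ inherits a genuine, positive definite inner product $([a],[b])\doteq\omega(a^*b)$, and I would define $\mathcal{H}$ as its completion, with $\mathcal{D}$ sitting inside as a dense subspace. The representation is given by left multiplication, $\pi(a)[b]\doteq[ab]$; this is well defined exactly because $\mathcal{N}$ is a left ideal, and it visibly maps $\mathcal{D}$ into $\mathcal{D}$. Setting $\Omega\doteq[\mathbb{I}]$, one checks $\|\Omega\|^2=\omega(\mathbb{I})=1$, the reproducing identity $(\Omega,\pi(a)\Omega)=\omega(\mathbb{I}^*a\,\mathbb{I})=\omega(a)$, and cyclicity $\pi(\mathcal{A})\Omega=\{[a]:a\in\mathcal{A}\}=\mathcal{D}$ directly from the definitions. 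The $*$-property $\pi(a^*)=\pi(a)^*$ follows from the computation $(\pi(a)[b],[c])=\omega((ab)^*c)=\omega(b^*(a^*c))=([b],\pi(a^*)[c])$.

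For uniqueness, suppose $(\mathcal{D}_1,\pi_1,\Omega_1)$ and $(\mathcal{D}_2,\pi_2,\Omega_2)$ are two GNS triples for the same $\omega$. I would define $U$ on the dense domain by $U\pi_1(a)\Omega_1\doteq\pi_2(a)\Omega_2$. That $U$ is well defined and isometric follows from the common reproducing property, since $(\pi_1(a)\Omega_1,\pi_1(b)\Omega_1)_1=\omega(a^*b)=(\pi_2(a)\Omega_2,\pi_2(b)\Omega_2)_2$; hence $U$ extends to a unitary between the completions, intertwines the representations by construction, and sends $\Omega_1$ to $\Omega_2$.

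The algebraic steps above are standard; the part that requires genuine care is the \emph{topological} bookkeeping, which is where the hypotheses on $\mathcal{A}$ and the continuity of $\omega$ enter. Concretely, I expect the main obstacle to be showing that $\pi$ lands in $\mathcal{L}(\mathcal{D})$ and is strongly continuous in the appropriate sense: continuity of the map $a\mapsto[a]$, and more generally of $a\mapsto\pi(a)\xi$ for fixed $\xi\in\mathcal{D}$, must be deduced from joint continuity of the product and the involution on the topological $*$-algebra together with the assumed continuity of $\omega$, rather than from a $C^*$-norm estimate. Since the generators $\widehat{\Phi}(f)$ of the field algebra are represented by unbounded operators, one cannot expect Hilbert-space boundedness of $\pi(a)$, so continuity must be understood with respect to the intrinsic topology of $\mathcal{D}$; verifying this compatibility is the delicate point of the construction.
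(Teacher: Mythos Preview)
Your argument is correct and follows essentially the same route as the paper: form the semi-inner product $\omega(a^*b)$ on $\mathcal{A}$, quotient by the Gelfand ideal (shown to be a left ideal via Cauchy--Schwarz), complete, represent by left multiplication, take $\Omega=[\mathbb{I}]$, and prove uniqueness by the obvious intertwiner $U\pi_1(a)\Omega_1=\pi_2(a)\Omega_2$. Your Cauchy--Schwarz argument for the left-ideal property is a slight variant of the paper's (which first rewrites $\mathcal{I}$ as $\{a:\omega(ba)=0\ \forall b\}$), but both are standard and equivalent; in fact you are more scrupulous than the paper about the topological side, since the paper simply asserts $\pi:\mathcal{A}\to\mathcal{L}(\mathcal{D})$ without discussing continuity at all.
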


\begin{proof}
The first step consists of endowing $\mathcal{A}$ with an inner product 
defined by $(a,b)_\bullet\doteq\omega(a^*b)$ for each $a,b\in\mathcal{A}$.
This is per construction sesquilinear and positive semidefinite, since $(a,a)_\bullet=\omega(a^*a)\geq 0$. 
We need to check Hermiticity, namely that $\overline{(a,b)_\bullet}=(b,a)_\bullet$ holds for all $a,b\in\mathcal{A}$. 
To this avail one needs to take into account the following two identities: 
\begin{equation*}
\begin{array}{rcl}
4a^*b&=&(a+b)^*(a+b)-(a-b)^*(a-b)-i(a+ib)^*(a+ib)+i(a-ib)^*(a-ib);\\
4b^*a&=&(a+b)^*(a+b)-(a-b)^*(a-b)+i(a+ib)^*(a+ib)-i(a-ib)^*(a-ib).
\end{array}
\end{equation*}
The positivity requirement on $\omega$ yields the Cauchy-Schwarz inequality for $(\cdot,\cdot)_\bullet$, 
namely $|(a,b)_\bullet|^2\leq(a,a)_\bullet(b,b)_\bullet$, 
but it does not ensure that non-degeneracy holds for this sesquilinear form. 
Hence, we have to single out the vanishing elements 
introducing the subset $\mathcal{I}=\{a\in\mathcal{A}\;|\;\omega(a^*a)=0\}$. 
This is a closed left ideal of $\mathcal{A}$ (but $\mathcal{I}^*\nsubseteq\mathcal{I}$ in general): 
\begin{itemize}
\item It is a closed subset of $\mathcal{A}$ being the preimage of $0$ 
under the continuous map $a\in\mathcal{A}\mapsto\omega(a^*a)\in\mathbb{R}$;
\item Using the Cauchy-Schwarz inequality, 
we note that $\omega(ba)=0=\omega(a^*b)$ for each $a\in\mathcal{I}$ and for each $b\in\mathcal{A}$. 
In particular, this means that $\mathcal{I}=\{a\in\mathcal{A}\;|\;\omega(ba)=0\;,\forall b\in\mathcal{A}\}$, 
showing that $\mathcal{I}$ is a vector subspace of $\mathcal{A}$;
\item $\omega((ba)^*ba)=\omega((a^*b^*b)a)=0$ for each $a\in\mathcal{I}$ and each $b\in\mathcal{A}$, 
hence $\mathcal{I}$ is a left ideal.
\end{itemize}
We can thus define the vector space $\mathcal{D}\doteq\mathcal{A}/\mathcal{I}$, 
where the latter is the set of equivalence classes $[a]$ 
induced by the following equivalence relation: 
$a\sim a'$ if and only if there exists $b\in\mathcal{I}$ such that $a'=a+b$. 
We can endow $\mathcal{D}$ with the positive definite Hermitian non-degenerate sesquilinear form $(\cdot,\cdot)$ 
defined by $([a],[b])\doteq (a,b)^\prime$ for all $[a],[b]\in\mathcal{D}$, 
where $a$ and $b$ are any representative of the equivalence classes $[a]$ and $[b]$ respectively. 
This is well defined as a consequence of the remarks made above 
and it endows $\mathcal{D}$ with a pre-Hilbert structure. 
Taking the completion of $(\mathcal{D}, (,))$ produces a Hilbert space $\mathcal{H}$. 
The representation $\pi$ can be induced via left multiplication exploiting the fact that $\mathcal{I}$ is a left ideal, 
namely we introduce $\pi:\mathcal{A}\to\mathcal{L}(\mathcal{D})$ 
defined via $\pi(b)[a]=[ba]$ for all $b\in\mathcal{A}$ and for all $[a]\in\mathcal{D}$. 
For each $a,b\in\mathcal{A}$ it is easy to check that $\pi(ab)=\pi(a)\pi(b)$, 
while $\pi(a^*)=\pi(a)^*$ follows from the identity $([a^*b],[c])=([b],[ac])$, 
so that $\pi$ turns out to be a representation. 
Furthermore, setting $\Omega\doteq [\mathbb{I}]$, 
one has $(\Omega,\pi(a)\Omega)=\omega(\mathbb{I}^*a\mathbb{I})=\omega(a)$ for each $a\in\mathcal{A}$ 
and $\pi(\mathcal{A})\Omega=\mathcal{D}$. 
This concludes the identification of the GNS triple. Let us now tackle uniqueness. 
Suppose that one can find another realization of $\omega$ as $(\mathcal{D}^\prime,\pi^\prime,\Omega^\prime)$ 
and let us introduce the operator $U:\mathcal{D}\to\mathcal{D}^\prime$ 
such that $U(\pi(a)\Omega)\doteq\pi^\prime(a)\Omega^\prime$ for each $a\in\mathcal{A}$. 
This is well-defined since $\pi(a)\Omega=0$ means $\omega(a^*a)=0$, 
which yields ${\Vert\pi^\prime(a)\Omega^\prime\Vert^\prime}^2=0$. 
Furthermore $U$ preserves the scalar products, 
namely $(U[a],U[b])^\prime=(\pi^\prime(a)\Omega^\prime,\pi^\prime(b)\Omega^\prime)^\prime
=(\Omega^\prime,\pi^\prime(a^*b)\Omega^\prime)^\prime=\omega(a^*b)=([a],[b])$, 
and has an inverse $U^{-1}:\mathcal{D}^\prime\to\mathcal{D}$, 
defined by $U^{-1}(\pi^\prime(a)\Omega^\prime)\doteq\pi(a)\Omega$ for each $a\in\mathcal{A}$, 
preserving the scalar products as well.
Thus it can be extended to a unitary operator from $\mathcal{H}$ to $\mathcal{H}^\prime$, 
the Hilbert space obtained via completion of $\mathcal{D}^\prime$. 
In other words this means that $\Omega^\prime=U\Omega$ 
and that the defining relation for $U$ can also be read as $\pi'([a])=U\pi(a)U^{-1}$. 
This is nothing but the statement that the two representations $\pi$ and $\pi'$ are unitarily equivalent.
\end{proof}

\subsection{Hadamard states}

The power of the algebraic approach lies in its ability to separate the algebraic relations of quantum fields from the Hilbert space representations of these relations and thus in some sense to treat {\it all} possible Hilbert space representations at once. The other side of this coin is that the definition of an algebraic state we reviewed in the previous subsection is too general and thus further conditions are necessary in order to select the physically meaningful states among all possible ones on {\it e.g.}~the quantum field algebra $\mathcal{F}(M)$.

To this avail it seems reasonable to look at the situation in Minkowski spacetime. Physically interesting states there include the Fock vacuum state and associated multiparticle states as well as coherent states and states describing thermal equilibrium situations. All these states share the same ultraviolet (UV) properties, {\it i.e.}~the same high-energy behaviour, namely they satisfy the so-called {\it Hadamard condition}, which we shall review in a few moments. A closer look at the formulation of quantum field theory in Minkowski spacetime reveals that the Hadamard condition is indeed essential for the mathematical consistency of QFT in Minkowski spacetime, as we would like to briefly explain now. In the following we will only discuss real scalar fields for simplicity. Analyses of Hadamard states for fields of higher spin can be found {\it e.g.}~in Refs. \citen{Dappiaggi:2011cj,Fewster:2003ey,Hollands:1999fc,Sahlmann:2000zr}.

Pointwise products of quantum fields $\widehat{\Phi}(x)$ among themselves such as $\widehat{\Phi}(x)^2$ are not automatically well-defined, so that normal ordering is necessary in order to obtain a well-defined object $\wick{\widehat{\Phi}(x)^2}$. The usual normal-ordering procedure of expanding the quantum field into creation and annihilation operators and rearranging those in the product can be equivalently expressed as
$$\wick{\widehat\Phi^2(x)}\;\doteq \lim_{x\to y}\left(\widehat\Phi(x)\widehat\Phi(y)-\omega_{0,2}(x,y){\mathbb{I}}\right)\,,$$
where $\omega_{0,2}(x,y)\doteq\langle \widehat\Phi(x)\widehat\Phi(y)\rangle_{\omega_0}$ is the two-point correlation function of the field in the Minkowski vacuum state $\omega_0$. In the perturbative treatment of interacting theories products such as $\wick{\widehat\Phi^2(x)}\;\wick{\widehat\Phi^2(y)}$ appear which can be computed by means of Wick's theorem, {\it viz.}
$$\wick{\widehat\Phi^2(x)}\;\wick{\widehat\Phi^2(y)}\;=\;\wick{\widehat\Phi^2(x)\widehat\Phi^2(y)}+4\wick{\widehat\Phi(x)\widehat\Phi(y)}\omega_{0,2}(x,y)+2\left(\omega_{0,2}(x,y)\right)^2\,.$$
In other words, for perturbation theory we need the normal-ordered fields to form an algebra with a product specified by Wick's theorem. However, $\omega_{0,2}(x,y)$ is a singular object as it diverges for lightlike related $x$ and $y$, but it is regular enough to be a (tempered) distribution, {\it i.e.}~one obtains finite numbers if integrating $\omega_{0,2}(x,y)$ with test functions $f(x)$, $h(y)$, and this degree of regularity is sufficient for the mathematical treatment of QFT. Thus for the contractions in the Wick theorem to be well-defined one has to check if pointwise products of $\omega_{0,2}(x,y)$ such as the square $\omega_{0,2}(x,y)^2$ are still regular enough to be distributions. Equivalently, in a momentum-space treatment one has to check whether the momentum space integrals appearing in Wick's theorem converge. The answer to these questions is positive because of the energy positivity property of the Minkowskian vacuum state, and this is the reason why one usually never worries about whether normal ordering is well-defined in quantum field theory on Minkowski spacetime. In more detail, the Fourier decomposition of {\it e.g.}~the massless two-point function 
\begin{equation}\label{eq_MinkowskiMassless}\omega_{0,2}(x,y)=\lim_{\epsilon\downarrow 0}\frac{1}{4\pi^2}\frac1{(x-y)^2+i\epsilon(x_0-y_0)+\epsilon^2}\,,\end{equation}
reads
\begin{equation}\label{eq_FourierMinkowskivaccuum}\omega_{0,2}(x,y)=\lim_{\epsilon\downarrow 0}\frac{1}{(2\pi)^3}\int dk \;\Theta(k_0)\delta(k^2)\;e^{ik(x-y)}e^{-\epsilon k_0}\,,\end{equation}
where $\Theta(k_0)$ denotes the Heaviside step function. We see that the Fourier transform of $\omega_{0,2}$ has only support on the forward lightcone (or the positive mass shell in the massive case). This insight allows to determine (or, rather, to define) the square of $\omega_{0,2}(x,y)$ by a convolution in Fourier space
\begin{align*}
(\omega_{0,2}(x,y))^2 & =\lim_{\epsilon\downarrow 0}\frac{1}{(2\pi)^6}\int dq \int dp \;\Theta(q_0)\;\delta(q^2)\;\Theta(p_0)\;\delta(p^2)\;e^{i(q+p)(x-y)}e^{-\epsilon (p_0+q_0)}\\
& =\lim_{\epsilon\downarrow 0}\frac{1}{(2\pi)^6}\int dk \int dq \;\Theta(q_0)\;\delta(q^2)\;\Theta(k_0-q_0)\;\delta((k-q)^2)\;e^{ik(x-y)}e^{-\epsilon k_0}\,.
\end{align*}
Without going too much into details here, let us observe that the above expression can only give a sensible result (a distribution) if the integral over $q$ converges, {\it i.e.}~if the integrand is rapidly decreasing in $q$. To see that this is the case, note that for an arbitrary but fixed $k$ and large $q$, where here ``large'' is meant in the Euclidean norm on ${\mathbb{R}}^4$, the integrand is vanishing on account of $\delta(q^2)$ and $\Theta(k_0-q_0)$ as $k_0-q_0<0$ for large $q_0$. Loosely speaking, we observe the following: By the form of a convolution, the Fourier transform of $\omega_{0,2}$ is multiplied by the same Fourier transform, but with negative momentum. Since $\omega_{0,2}$ has only Fourier support in one `energy direction', namely the positive one, the intersection of its Fourier support and the same support evaluated with negative momentum is compact, and the convolution therefore well-defined. Moreover, as this statement only relies on the large momentum behaviour of Fourier transforms, it holds equally in the case of massive fields, as the mass shell approaches the light cone for large momenta.

We see that the UV behaviour of the Minkowski vacuum state is vital for the consistency of perturbation theory. However, it is important that other physically reasonable states $\omega$ share these UV properties because, if one wants to compute the expectation value of $:\widehat\Phi^2(x):$ in such a state, one obtains
$$\langle:\widehat\Phi^2(x):\rangle_\omega= \lim_{x\to y}\left(\omega_{2}(x,y)-\omega_{0,2}(x,y)\right)$$
and thus $\omega_{2}(x,y)$ must have the {\it same} singularities $\omega_{2,0}(x,y)$ has in order for the result to make sense. As already mentioned, states such as multi-particle states in Fock space, coherent states and thermal equilibrium states do have this UV property, {\it i.e.}~they satisfy the Hadamard condition. As, {\it e.g.}, thermal states do not have support only for positive energies in momentum space, the Hadamard condition seems to be the natural generalisation of the energy-positivity condition of the Minkowski vacuum state which encodes the UV properties of physical states in QFT.

By now the reader should be convinced that the Hadamard condition is a good selection criterion for physical states in Minkowski spacetime and thus it is indeed also used in quantum field theory in curved spacetime in order both to select physical states among all possible ones and to discuss perturbatively interacting theories. In relation to these concepts, see in particular Refs. \citen{Brunetti:1995rf,Brunetti:2009qc,Hollands:2001nf,Hollands:2001fb}.

We shall now finally review the definition of the Hadamard condition. This property of states can be formulated in two ways, one being a generalisation of the position space form \eqref{eq_MinkowskiMassless} of the vacuum two point function and the other being a generalisation of its momentum space form \eqref{eq_FourierMinkowskivaccuum}. The position space version of the Hadamard condition has been already developed in the seventies in the context of the definition of a regularised stress-energy tensor\cite{Walda, Waldb, Wald2}, recently employed in Refs. \citen{Hollands:2004yh,Moretti:2001qh}, and it is well-suited for actual calculations in particular. For a comparison between different regularization schemes, see Ref. \citen{Hack:2012dm2}. On the other hand, the generalised momentum space version has been developed only in the mid nineties by Radzikowski in two seminal papers Refs. \citen{Radzikowski:1996pa,Radzikowski:1996ei} and it is formulated in terms of {\it microlocal analysis}. While being rather abstract, it is well-suited to tackle and solve conceptual problems and indeed only after Radzikowski's work a full understanding of perturbative interacting quantum field theories in curved spacetimes became possible. 

Following our discussion on the obstructions in the definition of normal ordering, we shall start our review of the Hadamard condition by considering the microlocal aspects of Hadamard states. A standard monograph on microlocal analysis are the books of H\"ormander \citen{Hormander1,Hormander2,Hormander3,Hormander4}, while introductory treatments can be found {\it e.g.}~in Refs. \citen{Brunetti:1999jn,Reed,Strohmeier}. The language of microlocal analysis is necessary because in generic curved spacetimes no sensible coordinate-independent notion of Fourier transform exists. However, the situation in Minkowski spacetime teaches us that not the full Fourier spectrum of a state is important, but only its high-energy limit. Thus, on curved spacetimes, we need to specify this ``high-energy limit of the Fourier spectrum'' in a coordinate-independent manner. Microlocal analysis does just that.

In order to formulate the microlocal version of the Hadamard condition we start by introducing the notion of a wave front set. To motivate it, let us recall that a smooth function  on ${\mathbb R}^m$ with compact support has a rapidly decreasing Fourier transform. If we take a distribution $u$ in $C^{\infty\prime}_0({\mathbb R}^m)$ and multiply it by a function $f\in C^\infty_0({\mathbb R}^m)$ with $f(x_0)\neq 0$, then $uf\in{C^{\infty\prime}}({\mathbb R}^m)$, {\it i.e.}~it is a distribution with compact support. If $fu$ were smooth, then its Fourier transform $\widehat{fu}$ would be smooth and rapidly decreasing. The failure of $fu$ to be smooth in a neighbourhood of $x_0$ can therefore be quantitatively described by the set of directions in Fourier space where $\widehat{fu}$ is not rapidly decreasing. With this in mind, one first defines the wave front set of distributions on ${\mathbb R}^m$ and then extends it to arbitrary manifolds in a second step.

\begin{defi}
\label{def_WaveFrontSet}
A neighbourhood $\Gamma$ of $k_0\in{\mathbb R}^m$ is called {\it conic} if $k\in\Gamma$ implies $\lambda k\in\Gamma$ for all $\lambda\in(0,\infty)$. Let $u\in C^{\infty\prime}_0({\mathbb R}^m)$. A point $(x_0,k_0)\in {\mathbb R}^m\times ({\mathbb R}^m\setminus\{0\})$ is called a regular directed point of $u$ if there exists a function $f\in C^\infty_0({\mathbb R}^m)$ with $f(x_0)\neq 0$ and a conic neighborhood $V$ of $k_0$ such that, for every $n\in{\mathbb N}$, there exists a constant $C_n\in{\mathbb R}$ fulfilling
$$|\widehat{fu}(k)|\leq C_n (1+|k|)^{-n}$$
for all $k\in V$. The {\it wave front set} $WF(u)$ is the complement in ${\mathbb R}^m\times ({\mathbb R}^m\setminus\{0\})$ of the set of all regular directed points of $u$.
\end{defi}

\noindent Let us immediately state a few important properties of wave front sets, the proofs of which can be found in Ref. \citen{Hormander1} (see also Ref. \citen{Strohmeier}).

\begin{theo}
 \label{thm_PropertiesWavefront} Let $u\in C^{\infty\prime}_0({\mathbb R}^m)$.
\begin{itemize}
 \item[a)] If $u$ is smooth, then $WF(u)$ is empty.
 \item[b)] Let $P$ be an arbitrary partial differential operator. It holds $$WF(Pu)\subseteq WF(u)\,.$$
\item[c)] Let $U$, $V$ be open subsets of ${\mathbb R}^m$, let $u\in C^{\infty\prime}_0(V)$, and let $\chi:U\to V$ be a diffeomorphism. The pull-back $\chi^*u$ of $u$, defined by $\chi^*u(f)=u(\chi_*f)$ for all $f\in C^\infty_0(U)$, fulfils
$$WF(\chi^*u)=\chi^*WF(u)\doteq \left\{(\chi^{-1}(x),\chi^*k)\;|\;x\in\chi(U),\;(x,k)\in WF(u)\right\}\,,$$
where $\chi^*k$ denotes the pull-back of $\chi$ in the sense of cotangent vectors. Hence, the wave front set transforms covariantly under diffeomorphisms as a subset of $T^*{\mathbb R}^m$. This allows us to extend its definition to distributions on arbitrary manifolds $M$ by glueing together wave front sets in different coordinate patches of $M$. As a result, for $u\in C^{\infty\prime}_0(M)$, $WF(u)\subseteq T^*M\setminus\{{\mathbf 0}\}$, where ${\mathbf 0}$ denotes the zero section of $T^*M$.
\item[d)] Let $u_1$, $u_2\in C^{\infty\prime}_0(M)$ and let
$$WF(u_1)\oplus WF(u_2)\doteq \left\{(x, k_1+k_2)\;|\;(x,k_1)\in WF(u_1),\; (x,k_2)\in WF(u_2)\right\}\,.$$
If $WF(u_1)\oplus WF(u_2)$ does not intersect the zero section, then one can define the product $u_1u_2$ in such a way that it yields a well-defined distribution in $C^{\infty\prime}_0(M)$ and that it reduces to the standard pointwise product of smooth functions if $u_1$ and $u_2$ are smooth. Moreover, the wave front set of such product is bounded in the following way
$$WF(u_1u_2)\subseteq WF(u_1)\cup WF(u_2)\cup \left(WF(u_1)\oplus WF(u_2)\right)\,.$$
\end{itemize}
\end{theo}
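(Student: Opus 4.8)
Property (a) and (b) are quick once Definition \ref{def_WaveFrontSet} is upgraded to a more flexible equivalent form, so the plan is to first establish the following localisation lemma: $(x_0,k_0)\notin WF(u)$ if and only if there are a neighbourhood $U$ of $x_0$ and a conic neighbourhood $V$ of $k_0$ such that $\widehat{\phi u}(k)$ decays faster than any inverse power of $|k|$ on $V$ for \emph{every} $\phi\in C^\infty_0(U)$. The nontrivial implication follows by taking a single cutoff $\psi_0$ with $\psi_0(x_0)\neq 0$ furnished by the definition, shrinking $U$ so that $\psi_0$ is nonvanishing there, and writing $\phi u=(\phi/\psi_0)\,\psi_0 u$ for arbitrary $\phi\in C^\infty_0(U)$; then $\widehat{\phi u}$ is the convolution of the Schwartz function $\widehat{(\phi/\psi_0)}$ with $\widehat{\psi_0 u}$, and splitting this convolution at the cone $V$ where $\widehat{\psi_0 u}$ already decays rapidly (its complement forcing the Schwartz factor to be evaluated at large argument, beating the polynomial growth of $\widehat{\psi_0 u}$) yields rapid decrease of $\widehat{\phi u}$ on a slightly smaller cone. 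Granting the lemma, property (a) is immediate: for smooth $u$ every $\phi u$ lies in $C^\infty_0(\mathbb{R}^m)$, whose Fourier transform is Schwartz and hence rapidly decreasing in all directions, so no point is singular.

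For property (b) I would reduce a general $P=\sum_\alpha a_\alpha(x)\partial^\alpha$ to its generators. A smooth coefficient $a$ is absorbed into the test function via $\phi\mapsto a\phi$ in the localisation lemma, while $\partial_j$ acts in Fourier space by $\widehat{\partial_j v}(k)=ik_j\,\widehat{v}(k)$, and multiplication by the polynomial $k_j$ preserves rapid decrease on any cone. Expanding $\phi\,Pu$ by the Leibniz rule produces finitely many terms of the form (polynomial in $k$)$\,\times\,\widehat{\phi_\beta u}$ with $\phi_\beta\in C^\infty_0(U)$, each rapidly decreasing on $V$ by the lemma; hence $(x_0,k_0)\notin WF(u)$ forces $(x_0,k_0)\notin WF(Pu)$.

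Property (c) is the geometrically substantial one, and here lies the main technical obstacle. The strategy is to express $\widehat{\phi\,\chi^*u}(k)$ as $u$ applied to the oscillatory test function $\chi_*(\phi\,e^{-i\langle\cdot,k\rangle})$, whose phase viewed on $V$ is $\langle\chi^{-1}(x),k\rangle$ with $x$-gradient ${}^t(d\chi^{-1})\,k$. The crux is a stationary-phase/integration-by-parts estimate showing that, modulo errors rapidly decreasing in $|k|$, this quantity is governed by $\widehat{g\,u}\bigl({}^t(d\chi^{-1})\,k\bigr)$ for a suitable $g\in C^\infty_0(V)$; thus if ${}^t(d\chi^{-1})\,k$ avoids the singular cone of $u$ at $x_0$, equivalently if $k$ avoids $\chi^*WF(u)$ over $\chi^{-1}(x_0)$, one obtains rapid decay and hence $WF(\chi^*u)\subseteq\chi^*WF(u)$. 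Applying the same argument to $\chi^{-1}$ with $u=\chi_*(\chi^*u)$ gives the reverse inclusion, hence equality, and covariance under coordinate changes is exactly the statement that $WF$ patches to a well-defined conic subset of $T^*M\setminus\{\mathbf 0\}$. I expect controlling the Jacobian and the lower bound on the phase gradient uniformly over the cutoff support to be the genuine difficulty of the whole theorem.

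Property (d) mirrors the Minkowski convolution computation for $(\omega_{0,2})^2$ carried out around Eq. \eqref{eq_FourierMinkowskivaccuum}. Localising with $\phi,\psi\in C^\infty_0$ near a common point $x$, the product is defined through $\widehat{(\phi u_1)(\psi u_2)}(k)=(2\pi)^{-m}\int \widehat{\phi u_1}(k-\eta)\,\widehat{\psi u_2}(\eta)\,d\eta$. Writing $\Sigma_i(x)$ for the singular fibre $\{\xi:(x,\xi)\in WF(u_i)\}$, the hypothesis that $WF(u_1)\oplus WF(u_2)$ avoids the zero section translates into $\Sigma_1(x)\cap(-\Sigma_2(x))=\emptyset$; since for fixed $k$ and large $\eta$ one has $k-\eta\approx-\eta$, both factors can be simultaneously non-decaying only on a set that is bounded in $\eta$, which secures convergence and defines the product. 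Decomposing the $\eta$-integration according to whether $\eta\in\Sigma_2(x)$ and whether $k-\eta\in\Sigma_1(x)$ then shows that the convolution can fail to decay only for $k\in\Sigma_1(x)$, $k\in\Sigma_2(x)$, or $k=k_1+k_2$ with $k_i\in\Sigma_i(x)$, yielding $WF(u_1u_2)\subseteq WF(u_1)\cup WF(u_2)\cup\bigl(WF(u_1)\oplus WF(u_2)\bigr)$; reduction to the ordinary pointwise product for smooth factors is automatic from (a). In summary, (a), (b) and the convergence half of (d) are bookkeeping once the localisation lemma is in place, and the real analytic work is the non-stationary phase estimate underlying (c).
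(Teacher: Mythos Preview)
The paper does not actually prove Theorem \ref{thm_PropertiesWavefront}; immediately after the statement it simply refers the reader to H\"ormander's monograph (Ref.~\citen{Hormander1}) and the lecture notes of Strohmaier (Ref.~\citen{Strohmeier}) for the proofs. Your sketch is therefore not being compared against an argument in the paper but against the standard literature those citations point to, and at that level it is essentially correct and follows the classical route: the localisation lemma you state is precisely H\"ormander's Lemma 8.1.1, your treatment of (a) and (b) is the standard one, the non-stationary phase argument you outline for (c) is exactly the mechanism behind H\"ormander's Theorem 8.2.4, and the convolution splitting in (d) is the content of Theorem 8.2.10. Your identification of (c) as the place where the genuine analytic work sits is accurate; the only point where your wording is a little loose is the claim that $\widehat{\phi\,\chi^*u}(k)$ is ``governed by $\widehat{g\,u}\bigl({}^t(d\chi^{-1})k\bigr)$ modulo rapidly decreasing errors'': the actual argument does not reduce to a single Fourier transform with a linearly transformed frequency but rather controls the full oscillatory integral by repeated integration by parts against the operator $|k|^{-2}\sum_j(\partial_j\varphi)\partial_j$ built from the phase $\varphi(x)=\langle\chi^{-1}(x),k/|k|\rangle$, using that $|\nabla\varphi|$ is bounded below on the support of the cutoff when $k$ stays in a cone disjoint from $\chi^*WF(u)$. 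Once that is made precise, your sketch becomes the textbook proof.
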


\noindent Note that the wave front set transforms as a subset of the cotangent bundle on account of the covector nature of $k$ in $\exp(ikx)$. The last of the above statements is exactly the criterion for the pointwise multiplication of distributions we have been looking for. Namely, from  \eqref{eq_FourierMinkowskivaccuum} and \eqref{eq_MinkowskiMassless} one can infer that the wave front set of the Minkowskian two-point function (for $m\geq0$) is \cite{Reed}
\begin{equation}\label{eq_MinkwoskiWF}
\begin{array}{rcl}
WF(\omega_2) & = & 
\left\{(x,y,k,-k)\in T^*{\mathbb{R}^8}\;|\; x\neq y,\;(x-y)^2=0,\;k||(x-y),\; k_0>0\right\}\\
& \cup & \left\{(x,x,k,-k)\in T^*{\mathbb{R}^8}\;|\; k^2=0,\; k_0>0\right\}\,,
\end{array}
\end{equation}
where $k||(x-y)$ entails that $k$ is parallel to the vector connecting the points $x$ and $y$. It is the condition $k_0>0$ in particular, which encodes the energy positivity of the Minkowskian vacuum state. We can now rephrase our observation that the pointwise square of $\omega_{0,2}(x,y)$ is a well-defined distribution by noting that $WF(\omega_{0,2})\oplus WF(\omega_{0,2})$ does not contain the zero section. In contrast, we know that the $\delta$-distribution $\delta(x)$ is singular at $x=0$ and that its Fourier transform is a constant. Hence, its wave front set reads
$$WF(\delta)=\{(0,k)\in T^*\mathbb{R}\;|\;k\in{\mathbb R}\setminus\{0\}\}\,,$$
and we see that the $\delta$-distribution does not have a `one-sided' wave front set and, hence, can not be squared. The same holds if we view $\delta$ as a distribution $\delta(x,y)$ on $C^\infty_0({\mathbb R}^2)$. Then
$$WF(\delta(x,y))=\{(x,x,k,-k)\in T^*\mathbb{R}^2\;|\;k\in{\mathbb R}\setminus\{0\}\}\,.$$

The previous discussion suggests that a generalisation of \eqref{eq_MinkwoskiWF} to curved spacetimes is the sensible requirement to select physical states (see also Ref. \citen{Verch:1998zn} for an earlier investigation). We shall now define such a generalisation.

\begin{defi}
 \label{def_HadamardScalar}
Let $\omega$ be a state on the quantum field algebra of the scalar field ${\mathcal F}(M)$. We say that $\omega$ fulfils the {\it Hadamard condition} and is therefore a {\it Hadamard state} if its two-point correlation function $\omega_2$ fulfils
$$WF(\omega_2)=\left\{(x,y,k_x,-k_y)\in T^*{\!M}^2\setminus \{\mathbf 0\}\;|\;
(x,k_x)\sim(y,k_y),\;k_x\triangleright 0\right\}\,.$$
Here, $(x,k_x)\sim(y,k_y)$ implies that there exists a null geodesic $c$ connecting $x$ to $y$ such
that $k_x$ is coparallel and cotangent to $c$ at $x$ and $k_y$ is the parallel transport of $k_x$ from $x$ to $y$ along $c$. Finally, $k_x\triangleright 0$ means that the covector $k_x$ is future-directed.
\end{defi}

In early works this condition has only been required for Gaussian (quasifree) states, {\it i.e.}~states $\omega$ which are completely specified in terms of their two-point function $\omega_2$. However, in Ref. \citen{Sanders:2009sw} it has been shown that this condition is sufficient also for non-Gaussian states, because the singularities of all higher correlation functions are already determined by the singularities of $\omega_2$ and the canonical commutation relations. However, note that certain technical results on the structure of Hadamard states have been proven only for the Gaussian case up to now\cite{Verch:1992eg}.

Having discussed the rather abstract aspect of Hadamard states, let us now turn to their more concrete position space form. To this avail, let us consider a geodesically convex open subset ${\mathcal O}$ of $M$, {\it i.e.}~for any two points $x,y\in{\mathcal O}$ there exists a unique geodesic connecting $x$ and $y$ which lies completely in ${\mathcal O}$. By definition, there are open subsets ${\mathcal O}^\prime_x\subseteq T_xM$ such that the exponential map $\exp_x: {\mathcal O}^\prime_x\to{\mathcal O}$ is well-defined for all $x\in {\mathcal O}$, {\it i.e.}~we can introduce Riemannian normal coordinates on ${\mathcal O}$. For any two points $x$, $y\in {\mathcal O}$, we can therefore define the {\it half squared geodesic distance} $\sigma(x,y)$ as
$$\sigma(x,y)\doteq\frac12 g\left(\exp_x^{-1}(y), \exp_x^{-1}(y)\right)\,.$$
This entity is also called {\em Synge's world function} and is both smooth and symmetric on ${\mathcal O}\times{\mathcal O}$. We now provide the explicit form of Hadamard states.

\begin{defi}
 \label{def_HadamardFormScalar}
Let $\omega_2$ be the two-point function of a state on ${\mathcal F}(M)$, let $t$ be a time function on $(M,g)$, let
$$\sigma_\epsilon(x,y)\doteq\sigma(x,y)+2i\epsilon(t(x)-t(y))+\epsilon^2\,,$$ and let $\lambda$ be an arbitrary length scale. We say that $\omega_2$ is of {\it local Hadamard form} if, for every $x_0\in M$ there exists a geodesically convex neighbourhood ${\mathcal O}$ of $x_0$ such that $\omega_2(x,y)$ on ${\mathcal O}\times {\mathcal O}$ is of the form
\begin{align*}
\omega_2(x,y) & =\lim_{\epsilon\downarrow 0}\frac{1}{8\pi^2}\left(\frac{u(x,y)}{\sigma_\epsilon(x,y)}+v(x,y)\log\left(\frac{\sigma_\epsilon(x,y)}{\lambda^2}\right)+w(x,y)\right)\\
& \doteq\lim_{\epsilon\downarrow 0}\frac{1}{8\pi^2}\left(h_\epsilon(x,y)+w(x,y)\right)\,.
\end{align*}
Here, the {\it Hadamard coefficients} $u$, $v$, and $w$ are smooth, real-valued functions, where $v$ is given by a series expansion in $\sigma$ as
$$v=\sum\limits^\infty_{n=0}v_n \sigma^n$$
with smooth coefficients $v_n$. The bidistribution $h_\epsilon$ is called {\it Hadamard parametrix}, indicating that it solves the Klein-Gordon equation up to smooth terms.
\end{defi}

Note that the above series expansion of $v$ does not necessarily converge on general smooth spacetimes, however, it is known to converge on analytic spacetimes \cite{Garabedian}. One therefore often truncates the series at a finite order $n$ and asks for the $w$ coefficient to be only of regularity $C^n$, see Ref. \citen{Kay:1988mu}. Moreover, the local Hadamard form is a special case of the {\it global Hadamard form} defined for the first time in Ref. \citen{Kay:1988mu}. Such a form assures that there exist no (spacelike) singularities in addition to the lightlike ones visible in the local form. Moreover, the whole concept is independent of the chosen time function $t$. However, as proven by Radzikowski in Ref. \citen{Radzikowski:1996ei} employing the microlocal version of the Hadamard condition, the local Hadamard form already implies the global Hadamard form on account of the fact that $\omega_2$ is positive, its antisymmetric part coincides with the causal propagator $G$ and it fulfils the Klein-Gordon equation in both arguments. It is exactly this last fact which serves to determine the Hadamard coefficients $u$, $v$, and $w$ by a recursive procedure, see {\it e.g.}~Section III.1.2 in Ref. \citen{Hack:2010iw} for a review. It turns out that $u$ and $v$ are determined completely in terms of geometric quantities, while $w$ is the only piece which depends on the quantum state $\omega$. Thus $h_\epsilon(x,y)$ is the universal singular piece which is common to all Hadamard states.

Having discussed the Hadamard form, let us state the already anticipated equivalence result obtained by Radzikowski in Ref. \citen{Radzikowski:1996pa}. See also Ref. \citen{Sahlmann:2000zr} for a slightly different proof, which closes a gap in the proof of Ref. \citen{Radzikowski:1996pa}.

\begin{theo}
 \label{thm_HadamardScalar} Let $\omega_2$ be the two-point function of a state $\omega$ on $\mathcal{F}(M)$. $\omega_2$ fulfils the Hadamard condition of Definition \ref{def_HadamardScalar} if and only if it is of global Hadamard form.
\end{theo}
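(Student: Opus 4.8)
The plan is to prove the two implications separately, using as the central bridge the Hadamard parametrix $h_\epsilon$ of Definition \ref{def_HadamardFormScalar} together with the microlocal calculus summarised in Theorem \ref{thm_PropertiesWavefront}. Throughout I abbreviate by $\mathcal{N}^+$ the set appearing on the right-hand side of Definition \ref{def_HadamardScalar}, namely the future-directed half of the null-covector relation $(x,k_x)\sim(y,k_y)$, and I recall that the causal propagator $G$ enters through the commutation relation $[\widehat\Phi(x),\widehat\Phi(y)]=iG(x,y)\mathbb{I}$, so that the antisymmetric part of $\omega_2$ equals $\tfrac{i}{2}G$.

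First I would settle the direction \emph{global Hadamard form $\Rightarrow$ microlocal condition}. Since the global form restricts near every point to the local form of Definition \ref{def_HadamardFormScalar}, and since the coefficient $w$ is smooth and hence carries empty wave front set by part (a) of Theorem \ref{thm_PropertiesWavefront}, it suffices to compute $WF(h_\epsilon)$. The only singular contributions are $u/\sigma_\epsilon$ and $v\log(\sigma_\epsilon/\lambda^2)$, both governed by the $i\epsilon(t(x)-t(y))$ prescription encoded in $\sigma_\epsilon$. Reducing to the Minkowski model distribution $\lim_{\epsilon\downarrow 0}(\sigma+i\epsilon\,\Delta t)^{-1}$ and transporting the elementary computation \eqref{eq_MinkwoskiWF} along the exponential map, one finds $WF(h_\epsilon)=\mathcal{N}^+$ exactly. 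Thus $WF(\omega_2)=WF(h_\epsilon)=\mathcal{N}^+$ locally, and globality of the Hadamard form propagates this identity to all of $T^*M^2\setminus\{\mathbf{0}\}$.

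The substantial direction is \emph{microlocal condition $\Rightarrow$ global Hadamard form}. Here I would first construct, in a geodesically convex neighbourhood, the parametrix $h_\epsilon$ by solving the Hadamard recursion (the transport equations along geodesics) for $u$ and the $v_n$; by construction these are smooth geometric quantities and $Ph_\epsilon$ is smooth. Setting $d\doteq\omega_2-\tfrac{1}{8\pi^2}h_\epsilon$, three facts hold: $Pd\in C^\infty$, since both $\omega_2$ and $h_\epsilon$ solve the Klein-Gordon equation up to smooth terms; $WF(d)\subseteq WF(\omega_2)\cup WF(h_\epsilon)=\mathcal{N}^+$ by sub-additivity of the wave front set under sums; and $d$ is symmetric modulo $C^\infty$, because the antisymmetric parts of $\omega_2$ and of $h_\epsilon$ both coincide with $\tfrac{i}{2}G$. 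Writing $d^t(x,y)=d(y,x)$, symmetry gives $WF(d)=WF(d^t)$, while the transposition rule of Theorem \ref{thm_PropertiesWavefront}(c) maps $\mathcal{N}^+$ to its past-directed counterpart $\mathcal{N}^-$; since future- and past-directed covectors are disjoint, $WF(d)\subseteq\mathcal{N}^+\cap\mathcal{N}^-=\emptyset$. Hence $d$ is smooth and $\omega_2$ is of local Hadamard form.

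It remains to promote the local statement to the global one, and this is where I expect the main obstacle to lie. The Klein-Gordon operator $P$ is of real principal type, its characteristic set being the null cone and its bicharacteristics the lifted null geodesics; the H\"ormander-Duistermaat propagation-of-singularities theorem then states that $WF(\omega_2)$ is invariant under this bicharacteristic flow. Combined with $Pd\in C^\infty$ and the local form just established, this confines the singularities of $\omega_2$ to $\mathcal{N}^+$ everywhere, with no additional spacelike singularities and with independence of the chosen time function $t$ --- precisely the content of the global Hadamard form. The delicate points are the \emph{exact} (rather than merely up-to-inclusion) computation of $WF(h_\epsilon)$ in the first direction and, in the second, the careful application of the propagation theorem on a globally hyperbolic background to globalise the local parametrix estimate; these form the technical heart of Radzikowski's argument, for which the classification of distinguished parametrices of Duistermaat-H\"ormander provides the cleanest framework.
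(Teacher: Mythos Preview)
The paper does not actually supply a proof of this theorem: it is stated as a result of Radzikowski (Refs.\ \citen{Radzikowski:1996pa,Radzikowski:1996ei}), with the remark that Sahlmann--Verch (Ref.\ \citen{Sahlmann:2000zr}) closed a gap. So there is no ``paper's own proof'' to compare against; what one can do is assess whether your sketch captures Radzikowski's argument.

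In broad strokes it does. Your two-step strategy --- first compute $WF(h_\epsilon)=\mathcal{N}^+$ directly from the $i\epsilon$-prescription, then in the converse direction show $d=\omega_2-\tfrac{1}{8\pi^2}h_\epsilon$ is smooth by the symmetry trick $WF(d)\subseteq\mathcal{N}^+\cap\mathcal{N}^-=\emptyset$ --- is exactly the skeleton of the argument. The symmetry trick in particular is clean and correct.

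There is, however, one step you state without justification that is in fact the technical crux: the assertion that the antisymmetric part of $h_\epsilon$ coincides with $\tfrac{i}{2}G$ modulo $C^\infty$. For $\omega_2$ this is immediate from the commutation relations, but for $h_\epsilon$ it is a genuine theorem: one needs that the Hadamard parametrix with the future-directed $i\epsilon$-prescription is the \emph{Feynman-type distinguished parametrix} in the Duistermaat--H\"ormander classification, and that the difference of the two $i\epsilon$-prescriptions reproduces (modulo smooth) the unique antisymmetric bisolution with the wave front set of $G$. You mention Duistermaat--H\"ormander only at the very end, as though it were needed solely for the local-to-global step; in fact it is already indispensable here, and without it the symmetry argument for $d$ does not get off the ground. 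The local-to-global passage, by contrast, is handled in the paper's narrative by citing Radzikowski's second paper (Ref.\ \citen{Radzikowski:1996ei}): positivity of the state, the commutation relations, and the fact that $\omega_2$ is a bisolution suffice to exclude additional spacelike singularities once the local form is established. Your invocation of propagation of singularities is morally right but somewhat redundant once the distinguished-parametrix machinery is in place.
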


\noindent By the result of Ref. \citen{Radzikowski:1996ei}, that a state which is locally of Hadamard form is already of global Hadamard form, we can safely replace ``global'' by ``local'' in the above theorem. Moreover, from the above discussion it should be clear that the two-point functions of two Hadamard states differ by a smooth and symmetric function.

Before closing this section and the paper by providing examples and non-examples of Hadamard states, we stress that, in a recent paper \citen{Gerard:2012wb}, it has been outlined a new framework aimed at the construction of Hadamard states via techniques proper of pseudo-differential calculus. The connection between these techniques and those used in many of the examples we propose is an open and interesting problem.

\subsubsection{Examples of Hadamard states}

\begin{itemize}
\item All vacuum states and thermal equilibrium states on ultrastatic spacetimes ({\it i.e.}~spacetimes with a metric $ds^2 = -dt^2 + h_{ij}dx^i dx^j$, with $h_{ij}$ not depending on time) are Hadamard states \cite{Fulling, Sahlmann:2000fh}.
\item Based on the previous statement, it has been proven in in Ref. \citen{Fulling} that Hadamard states exist on {\it any} globally hyperbolic spacetime by means of a spacetime deformation argument.
\item The {\it Bunch-Davies state} on de Sitter spacetime is a Hadamard state\cite{Allen}. It has been shown in Refs. \citen{Dappiaggi:2007mx,Dappiaggi:2008dk} that this result can be generalised to asymptotically de Sitter spacetimes, where distinguished Hadamard states can be constructed by means of a holographic argument; these states are generalisations of the Bunch-Davies state in the sense that the aforementioned holographic construction yields the Bunch-Davies state in de Sitter spacetime.
\item Similar holographic arguments have been used in Refs. \citen{Dappiaggi:2011cj,Dappiaggi:2005ci,Moretti:2005ty,Moretti:2006ks,Moretti:2006mu} to construct distinguished Hadamard states on asymptotically flat spacetimes, to rigorously construct the Unruh state in Schwarzschild spacetimes and to prove that it is Hadamard in Ref. \citen{Dappiaggi:2009fx}, to construct asymptotic vacuum and thermal equilibrium states in certain classes of Friedmann-Robertson-Walker spacetimes in Ref. \citen{Dappiaggi:2010gt} and to construct Hadamard states in bounded regions of any globally hyperbolic spacetime in Ref. \citen{Dappiaggi:2010iq}.
\item A interesting class of Hadamard states in general Friedmann-Robertson-Walker are the {\it states of low energy} constructed in Ref. \citen{Olbermann:2007gn}. These states minimise the energy density integrated in time with a compactly supported weight function and thus loosely speaking minimise the energy in the time interval specified by the support of the weight function. This construction has been generalised to encompass almost equilibrium states in Ref. \citen{Kusku:2008zz} and to expanding spacetimes with less symmetry in Ref. \citen{Them:2013uka}.
\item Hadamard states which possess an approximate local thermal interpretiation have been constructed in Ref. \citen{Schlemmer}. See Ref. \citen{VerchRegensburg} for a review.
\item Given a Hadamard state $\omega$ on the field algebra ${\mathcal F}(M)$ and a smooth solution $\Psi$ of the field equation $P\Psi=0$, one can construct a coherent state by redefining the quantum field $\widehat \Phi(x)$ as $\widehat \Phi(x)\mapsto \widehat \Phi(x) + \Psi(x)\mathbb{I}$. The thus induced coherent state has the two-point function $\omega_{\Psi,2}(x,y)=\omega_{2}(x,y)+\Psi(x)\Psi(y)$, which is Hadamard since $\Psi(x)$ is smooth. Moreover, in Ref. \citen{Sanders:2009sw} it has been proven that given a Hadamard state $\omega$ and an arbitrary element $A$ of the field algebra $\mathcal{F}(M)$, the state obtained by applying $A$ to $\omega$, {\it i.e.}~$\omega_A(\cdot)\doteq \omega(A^*\cdot A)/\omega(A^*A)$ is again Hadamard.
\end{itemize}

\subsubsection{Non-examples of Hadamard states}

\begin{itemize}
\item The so-called {\it $\alpha$-vacua} in de Sitter spacetime\cite{Allen} violate the Hadamard condition as shown in Ref. \citen{Brunetti:2005pr}.
\item Recently a construction for distinguished states (termed {\it Sorkin-Johnston states}) on arbitrary globally hyperbolic spacetimes has been proposed in Ref. \citen{Afshordi:2012jf}. Essentially this construction relies on the spectral decomposition of the causal propagator restricted to a certain time interval of the spacetime. It has been shown in Ref. \citen{Fewster:2012ew} that on ultrastatic spacetimes this construction fails to yield a Hadamard state for almost all time intervals. However, in Ref. \citen{Fewster:2012ew,Afshordi:2012ez} it has been conjectured that a modification of the Sorkin-Johnston construction in the spirit of the states of low energy, {\it i.e.}~with ``smooth interval boundaries'', might lead to Hadamard states.
\item A class of states related to Hadamard states, but in general not Hadamard, is constituted by {\it adiabatic states}. These have been introduced in Ref. \citen{Parker} and put on rigorous grounds by Ref. \citen{Luders}. Effectively, they are states which approximate ground states if the curvature of the background spacetime is only slowly varying. In Ref. \citen{Junker}, the concept of adiabatic states has been generalised to arbitrary curved spacetimes. There, it has also been displayed in a quantitative way how adiabatic states are related to Hadamard states. Namely, an adiabatic state of a specific order $n$ has a certain {\it Sobolev wave front set} (in contrast to the $C^\infty$ wave front set introduced above) and hence, loosely speaking, it differs from a Hadamard state by a biscalar of finite regularity $C^n$. In this sense, Hadamard states are adiabatic states of ``infinite order''.
\end{itemize}

\section*{Acknowledgments}
The work of C.D.\ has been supported partly by the University of Pavia and partly by the Indam-GNFM project {``Influenza della materia quantistica sulle fluttuazioni gravitazionali''}. The work of M.B.\ has been supported partly by a DAAD scholarship. M.B.\ is grateful to the II.\ Institute for Theoretical Physics of the University of Hamburg for the kind hospitality. The work of T.-P. H. is supported by a research fellowship of the Deutsche Forschungsgemeinschaft (DFG).
\vskip 1cm


\end{document}